 \newtheorem{thm}{Theorem}[section]
 \newtheorem{cor}[thm]{Corollary}
 \newtheorem{lemma}[thm]{Lemma}
 \newtheorem{prop}[thm]{Proposition}
 \theoremstyle{definition}
 \theoremstyle{remark}
 \newtheorem{rem}[thm]{Remark}
 \newtheorem{assumption}[thm]{Assumption}
 \numberwithin{equation}{section}
 \def\idty{{\mathchoice {\mathrm{1\mskip-4mu l}} {\mathrm{1\mskip-4mu l}} %
{\mathrm{1\mskip-4.5mu l}} {\mathrm{1\mskip-5mu l}}}}
\newcommand{\bR}{{\mathbb R}}
\newcommand{\cA}{{\mathcal A}}
\newcommand{\cB}{{\mathcal B}}
\newcommand{\cH}{{\mathcal H}}
\newcommand{\supp}{\operatorname{supp}}
\newcommand{\cU}{{\mathcal U}}
\newcommand{\caA}{{\mathcal A}}
\newcommand{\caB}{{\mathcal B}}
\newcommand{\caH}{{\mathcal H}}
\newcommand{\caP}{{\mathcal P}}
\newcommand{\bbC}{{\mathbb C}}
\newcommand{\bbR}{{\mathbb R}}
\newcommand{\ie}{{\it i.e.\/} }
\newcommand{\iu}{\mathrm{i}}
\newcommand{\str}{^{*}}
\newcommand{\ep}[1]{\mathrm{e}^{#1}}
\newcommand{\hilb}{\mathcal{H}}
\newcommand{\Tr}{\mathrm{Tr}}
\newcommand{\teop}{\hfill$\square$}
\newcommand{\ran}{{\rm ran}}
\newcommand{\ketbra}[1]{\left\vert #1\right\rangle\!\!\left\langle #1\right\vert}
\newcommand{\kettbra}[2]{\left\vert #1\right\rangle\!\!\left\langle #2\right\vert}
\newcommand{\be}{\begin{equation}}
\newcommand{\ee}{\end{equation}}
\newcommand{\bea}{\begin{eqnarray}}
\newcommand{\eea}{\end{eqnarray}}
\newcommand{\beann}{\begin{eqnarray*}}
\newcommand{\eeann}{\end{eqnarray*}}
\newcommand{\Rl}{\bR}
\newcommand{\eq}[1]{(\ref{#1})}
\newcommand{\id}{{\rm id}}
\begin{document}

\renewcommand{\thefootnote}{\fnsymbol{footnote}}
\title[Automorphic Equivalence]{Automorphic Equivalence within Gapped Phases\\
of Quantum Lattice Systems}

\author[S. Bachmann]{Sven Bachmann}
\address{Department of Mathematics\\
University of California, Davis\\
Davis, CA 95616, USA}
\email{svenbac@math.ucdavis.edu}

\author[S. Michalakis]{Spyridon Michalakis}
\address{Institute for Quantum Information\\
California Institute of Technology\\
Pasadena, CA 91125, USA}
\email{spiros@caltech.edu}

\author[B. Nachtergaele]{Bruno Nachtergaele}
\address{Department of Mathematics\\
University of California, Davis\\
Davis, CA 95616, USA}
\email{bxn@math.ucdavis.edu}

\author[R. Sims]{Robert Sims}
\address{Department of Mathematics\\
University of Arizona\\
Tucson, AZ 85721, USA}
\email{rsims@math.arizona.edu}

\date{\today }

\begin{abstract}

%Recently, there has been renewed interest in gapped quantum phases
%and quantum models which exhibit phase transitions. 
Gapped ground
states of quantum spin systems have been referred to in the
physics literature as being `in the same phase' if there exists a
family of Hamiltonians $H(s)$, with finite range interactions
depending continuously on $s\in [0,1]$, such that for each $s$,
$H(s)$ has a non-vanishing gap above its ground
state and with the two initial states being the ground states of $H(0)$ and $H(1)$,
respectively. In this work, we give precise conditions
under which any two gapped ground states of a given quantum spin
system that 'belong to the same phase' are automorphically equivalent and
show that this equivalence can be implemented
as a flow generated by an $s$-dependent interaction
which decays faster than any power
law (in fact, almost exponentially). 
The flow is constructed using Hastings' `quasi-adiabatic evolution'  technique,
of which we give a proof extended to infinite-dimensional Hilbert spaces.
In addition, we derive a general result about the locality
properties of the effect of perturbations of the dynamics for quantum
systems with a quasi-local structure and prove that the flow, which we
call the {\em spectral flow}, connecting the gapped ground states in the same phase,
satisfies a Lieb-Robinson bound. As a result, we obtain that, in the thermodynamic
limit, the spectral flow converges to a co-cycle of automorphisms of the algebra of
quasi-local observables of the infinite spin system. This proves that the ground
state phase structure is preserved along the curve of models $H(s)$, $0\leq s\leq 1$.
\end{abstract}

\maketitle

\footnotetext[1]{Copyright \copyright\ 2011 by the authors. This
paper may be reproduced, in its entirety, for non-commercial
purposes.}

\section{Introduction}\label{sec:intro}

Since the discovery of the fractional quantum Hall effect \cite{tsui:1982} and its description
in terms of model wave functions with special `topological' properties \cite{laughlin:1983},
there has been great interest in quantum phase transition \cite{sachdev:2000}. Experimental 
and theoretical discoveries of exotic states in strongly correlated systems \cite{dagotto:1996} 
and, more recently, the possibility of
using topologically ordered quantum phases for quantum information computation 
\cite{kitaev:2003}, have further increased our need to understand the nature of quantum phase 
transitions, and especially of gapped ground states. It is natural to ask whether
gapped quantum phases and the transitions between them can be classified. The first
and simplest question is to define precisely what it means for two gapped ground states
to belong to the same phase. A pragmatic definition that has recently been considered
in the literature declares two gapped ground states of a quantum spin system to belong
to the same phase if there exists a family of Hamiltonians $H(s)$, with finite range
interactions depending continuously on $s\in [0,1]$, such that for each $s$, 
$H(s)$ has a non-vanishing gap above its ground state, and the two given 
states are the ground states of $H(0)$ and $H(1)$. In other words there is a family of
Hamiltonians with gapped ground states that interpolate between the given two
\cite{chen:2010a,chen:2010b}. In this paper we prove a result that supports this definition.
We show that any two gapped ground states in the same phase according to this definition
are unitarily equivalent, with a unitary that can be obtained as the flow of an 
$s$-dependent quasi-local interaction which decays almost exponentially fast.
When applied to models on a finite-dimentsional lattice, this quasi-local structure is 
sufficient to prove that the unitary equivalence of finite volume leads to automorphic 
equivalence at the level of the $C^*$-algebra of quasi-local observables in the 
thermodynamic limit.

In statistical mechanics, lattice models with short-range interactions
play a central role. Many examples of Hamiltonians that can be considered 
as a perturbation of a model with a known ground state that is sufficiently 
simple (typically given by finite number of classical spin configurations), 
have been studied by series expansion methods \cite{kennedy:1985, albanese:1990a,
albanese:1990b, matsui:1990, kennedy:1992a, kennedy:1992b, borgs:1996, datta:1996a, 
datta:1996b,yarotsky:2006}. Perturbation expansions
provide a detailed understanding of the ground state under quite 
general conditions. Often, one can prove the existence of a finite correlation
length and  a non-vanishing spectral gap above the ground state, and short-range
correlations can in principle be calculated to arbitrary precision.
The perturbation series one employs in such situations has the structure of a 
cluster expansion in which the cluster geometry is  based
on the underlying lattice structure and the short-range nature of the interactions.
The effect of the perturbations can then be understood as approximately local
modifications of the ground state of the unperturbed model.

In this work, we start from a different perspective. Suppose we have a family of models
defined in terms of an interaction $\Phi(s)$ which depends on a parameter $s\in [0,1]$:
\begin{equation}\label{Hs}
H_\Lambda(s)=\sum_{X\subset\Lambda} \Phi(X,s)
\end{equation}
Here, $\Lambda$ is a finite subset of the lattice $\Gamma$ (e.g., $\Gamma=\mathbb{Z}^\nu$),
and $\Phi(s)$ is a short-range interaction depending smoothly on a parameter $s$ (see
Section~\ref{sec:quantum_phases} for the precise conditions on the decay of the interactions
that we assume).

Suppose that for all $s$ the ground state(s) of this family of models are isolated
from the rest of the spectrum by a gap. 
We prove that the ground state(s) of $H_\Lambda(s)$ can be obtained from the
ground state(s) of $H_\Lambda(0)$ by a unitary transformation $U_\Lambda(s)$ which
has a quasi-local structure in the sense that $U_\Lambda(s)$ can be regarded
as the flow generated by a quasi-local parameter-dependent interaction
$\Psi(s)$ which we construct. In the works cited above the goal was to develop 
a suitable perturbation theory 
which would allow one to prove the existence of a non-vanishing spectral gap, among 
other things. To do that one has to start from a sufficiently simple model at $s=0$ 
and also assume that the perturbation potential is sufficiently small.
Note that no such smallness condition is required on $\Phi(s)$ here.
We now make some comments on the methods used in this paper.

In his 2004 paper  \cite{hastings2004a}
Hastings introduced a new technique, which he called `quasi-adiabatic continuation'
(see also \cite{hastings05}).
He used it in combination with the propagation bounds for quantum lattice 
dynamics originally due to Lieb and Robinson \cite{lieb1972} to construct and 
analyze the variational states needed for the proof of a multi-dimensional version
of the Lieb-Schultz-Mattis theorem \cite{lieb1961}.
The quasi-adiabatic continuation technique was subsequently elaborated upon and 
used in new applications by Hastings and collaborators 
\cite{bravyi:2006,hastings2007,hastings09,bravyi:2010b,bravyi:2010a} as well as other
authors \cite{Nach07,Os2007}. In this paper we give a general account of this technique and 
show how it allows one to exploit locality properties of the dynamics of extended 
quantum systems with short-range interactions without resorting to cluster expansions.

The starting point of the analysis in all the works mentioned above is a version
of our Proposition \ref{prop:main}.
This result shows that the spectral projection associated with an isolated part of 
the spectrum of a family of self-adjoint operators $H(s)$ depending smoothly on 
a parameter $s$, can be obtained as a unitary evolution.
Concretely, let $I(s)\subset \Rl$ denote an interval such that for all $s$ the spectrum
of $H(s)$ contained in $I(s)$ is separated by a uniform gap $\gamma>0$ from the
rest of the spectrum of $H(s)$, then there exists a curve of unitary operators
$U(s)$ such that 
$$
P(s) = U(s) P(0) U(s)^*
$$
When we apply this result to families of Hamiltonians
$H_\Lambda(s)$ of the form (\ref{Hs}), i.e., with a quasi-local structure, we find
that the unitaries $U_\Lambda(s)$ then have the structure of a quasi-local dynamics itself.
Explicitly,
$$
\frac{d}{ds} U_\Lambda(s) = iD_\Lambda(s) U_\Lambda(s)
$$
where $D_\Lambda(s)$ is a self-adjoint operator with the structure of a time-dependent
Hamiltonian,\ie, there is an interaction $\Psi(s)$ such that
$$
D_{\Lambda}(s)=\sum_{X\subset \Lambda}\Psi(X,s)\, .
$$
Because of this quasi-local structure, the flow on the algebra of observables
defined by conjugation with the unitaries $U_\Lambda(s)$, \ie:
\begin{equation}\label{spectralflow}
\alpha^\Lambda_s(A) =  U_\Lambda^*(s) AU_\Lambda(s)
\end{equation}
satisfies a propagation bound of Lieb-Robinson type (see Section \ref{Sec:LRforD}). 
These propagation bounds---as a second application of Lieb-Robinson bounds in
this paper--- can be used to prove the existence of the thermodynamic limit
(Section \ref{sec:quantum_phases}). 
The main result of this paper is Theorem \ref{thm:main}. Stated in words, it says that
if for a differentiable curve of Hamiltonians of the form (\ref{Hs}) the gap above the
ground states does not close along the curve, then, for each $s$ there is an automorphism
$\alpha_s$ of the algebra of quasi-local observables which maps the ground states
at $s=0$ to the ground states at $s$. In particular the simplex
of infinite-volume ground states for all values of $s$ is isomorphic to the one for $s=0$.

We find the designation `quasi-adiabatic' of the flow $\alpha^\Lambda_s$
somewhat misleading since there is nothing adiabatic about it. The flow does,
however, follow the spectral subspace belonging to the isolated interval
$I(s)$. We will therefore call it the {\em spectral flow}.

This paper is organized as follows. In Section \ref{sec:Evolution}, we give a
rigorous and self-contained presentation of the construction of the spectral
flow in a form that allows for applications with an infinite-dimensional Hilbert space. 
A number of applications where the infinite-dimensional context has proven useful have 
already been considered in the literature, see e.g. 
\cite{cramer:2008, nachtergaele2009, Nach10, Amour09,PS09,PS10}. We expect that more applications will be found. 

In Section \ref{Sec:LocalPert} we use Lieb-Robinson bounds to obtain
a locality property of the spectral flow and prove that
{\em local perturbations perturb locally}
in the sense that the dependence of gapped ground states (or any other isolated 
eigenstates) on any given local term in the Hamiltonian is significant only in a 
neighborhood of the support of that term. In Lemma \ref{lma:Approx}
we generalize the notion of normalized partial trace to infinite-dimensional Hilbert 
spaces.

In the final two sections we consider quantum lattice models, or more generally,
models defined on a metric graph (satisfying suitable conditions) with sufficiently 
fast decaying interactions. Section \ref{Sec:LRforD}
is devoted to showing that the spectral flow
can be generated by time-dependent Hamiltonians defined in terms of 
local interactions. As a consequence, this flow then also satisfies a Lieb-Robinson
bound. In Section \ref{sec:quantum_phases}, we restrict our attention to 
quantum spin systems, and use the results of Section \ref{Sec:LRforD}
to obtain the existence of the thermodynamic limit of the spectral flow
as automorphisms on the algebra of quasi-local observables. We conclude
the paper with a brief discussion of the notion of `gapped ground state phase',
which has been a topic of particular interest in the recent literature.

%%%%%%%%%%%%%%%%%%%%%%%%%%%%%%%%%%%%%%%%%%%%%%%%%%%%%%%%%%%%%%%%%%%%%%%%%%%%%%%%%%%%%%%%%%%%%%%%%%%%%%%%

\section{The curve of spectral projections for an isolated part of the spectrum of a Hamiltonian
with a parameter}
\label{sec:Evolution}

We consider a smooth family of self-adjoint Hamiltonians $H(s)=H(s)\str$ parametrized by $s\in[0,1]$, acting on a Hilbert space $\hilb$. We do not assume that $H(s)$ itself is bounded but
the $s-$dependent portion should be. We are interested in the spectral projection
$P(s)$ associated with an isolated part of the spectrum of $H(s)$. Explicitly, our main
assumption on $H(s)$ is the following.

\begin{assumption}\label{Assump:Hbd}
$H(s)$ is a densely defined self-adjoint operator with bounded derivative $H^\prime(s)$,
such that $\Vert H^\prime(s)\Vert$ is uniformly bounded for $s\in [0,1]$. Furthermore,
we assume that the spectrum,  $\Sigma(s)$ of $H(s)$ can be decomposed in two parts:
$\Sigma(s) = \Sigma_1(s)\cup \Sigma_2(s)$, such that 
$\inf\{ \vert\lambda_1-\lambda_2\vert\ \mid \lambda_1\in\Sigma_1,  \lambda_2\in\Sigma_2\}= \gamma$ 
for a constant $\gamma>0$, uniformly in $s$.
We also assume there are compact intervals $I(s)$, with end points depending smoothly on
$s$ and such that $\Sigma_1(s)\subset I(s)\subset (\Rl\setminus\Sigma_2(s))$, in 
such a way that the distance between $I(s)$ and $\Sigma_2(s)$ has a strictly positive lower 
bound uniformly in $s$.
\end{assumption}

Typically, we have in mind a family of Hamiltonians of the form $H(s) = H(0) + \Phi(s)$,
with $H^\prime(s)=\Phi^\prime(s)$ bounded. Specifically, if $H(s)$ is unbounded, 
this is due to $H(0)$, which is obviously independent of $s$.
Let $E_{\lambda}(s)$ be the spectral family 
associated with $H(s)$ and let $P(s) := \int_{I(s)} dE_{\lambda}(s)$ be the spectral projection 
on the isolated part of the spectrum $\Sigma_1(s)$.

The formulation of the main result of this section uses a function $w_\gamma\in L^1(\bbR)$,
depending on a parameter $\gamma>0$, with the following properties.

\begin{assumption}\label{Assump:sgamma}
$w_\gamma\in L^1(\bbR)$ satisfies
\begin{enumerate}
\item $w_\gamma$ is real-valued and $\int dt\, w_\gamma(t) = 1$,
\item The Fourier transform $\widehat{w}_\gamma$ is supported in the interval
$[-\gamma,\gamma]$, i.e., $\widehat{w}_\gamma(\omega) = 0$, if $|\omega|\geq \gamma$.
\end{enumerate}
\end{assumption}

Such functions exist and were already considered in~\cite{hastings2010}.
In the following lemma, we present a family of such functions 
derived from ~\cite{I1934,dziubansky:1998}  and give explicit bounds
on their decay that we will need in this work and which may also 
prove useful in future applications.

\begin{lemma} \label{lemma:wdecay} 
Let $\gamma >0$ and define a positive sequence $(a_n)_{n\geq 1}$
by setting  $a_n= a_1  (n \ln^2n)^{-1}$ for $n\geq2$, and choosing $a_1$ so that 
$\sum_{n=1}^\infty a_n = \gamma/2$. 
Then, the infinite product 
\begin{equation}
\label{wDef}
w_\gamma(t) = c_\gamma\prod_{n=1}^\infty \Big(\frac{\sin a_n t}{a_nt}\Big)^2\,,
\end{equation}
defines an even, non-negative function  $w_\gamma\in L^1(\bbR)$, and we can choose 
$c_\gamma$ such that $\int w_\gamma(t)dt = 1$. With this choice, the following estimate holds.
For all $t\geq \mathrm{e}^{1/\sqrt{2}}/\gamma$,
\begin{equation}\label{eq:wdecay}
0 \leq w_\gamma(t) \leq 2 (\mathrm{e}\gamma)^2 t \cdot \exp\left(-\frac{2}{7}\frac{\gamma t}{\ln^2(\gamma t)}\right)\,.
\end{equation}

%\item For all $n\in\bbN$ and $t>\ep4/\gamma$,
%\begin{equation}
%\label{Moments}
%\int_t^\infty \xi^n w_\gamma(\xi) d\xi \leq \caP^{(2n+3)}((T(t)) \cdot\exp\left(-T(t)\right)\,,
%\end{equation}
%where $\caP^{(2n+3)}$ is a polynomial of degree $2n+3$, and 
%\begin{equation}
%T(t) = \frac{2}{7}\frac{\gamma t}{\ln^2(\gamma t)}
%\label{Toft}
%\end{equation}
%\end{enumerate}
\end{lemma}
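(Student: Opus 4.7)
My plan is to establish the structural properties first and then obtain the decay bound by truncating the infinite product at an index $N$ and applying Stirling's formula.

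Each factor $(\sin(a_n t)/(a_n t))^2$ is even, non-negative, and bounded by $1$, so the partial products decrease monotonically in $N$ and converge pointwise to a non-negative, even limit. The limit is dominated by $(\sin(a_1 t)/(a_1 t))^2$, which is integrable with integral $\pi/a_1$, so $w_\gamma\in L^1(\bbR)$; continuity at $0$ gives $\int\prod_n(\sin(a_n t)/(a_n t))^2\,dt > 0$, so $c_\gamma$ is well-defined.

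For the decay bound at fixed $t > 0$, I would apply $(\sin x/x)^2 \le 1/x^2$ to the first $N$ factors and bound the rest by $1$:
\begin{equation*}
w_\gamma(t) \;\le\; c_\gamma\prod_{n=1}^N (a_n t)^{-2} \;=\; \frac{c_\gamma\,(N!)^2\bigl(\prod_{n=2}^N \ln n\bigr)^{4}}{(a_1 t)^{2N}},
\end{equation*}
using $a_n = a_1/(n\ln^2 n)$ for $n\ge 2$. Stirling's formula together with the integral comparison $\sum_{n=2}^N\ln\ln n \le \int_2^{N+1}\ln\ln x\,dx$ rewrites the right-hand side in the form $c_\gamma\,P(N)\,\bigl(N\ln^2 N/(e a_1 t)\bigr)^{2N}$ with $P$ polynomial. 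A direct calculus optimization shows the asymptotic minimizer is at $N\ln^2 N = a_1 t$, at which the base equals $1/e$, leaving $c_\gamma\,P(N)\,e^{-2N}$ with $N\sim a_1 t/\ln^2(a_1 t)$.

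To extract the specific constants, the Bertrand-series bound $\sum_{n\ge 2}(n\ln^2 n)^{-1}\le 1/(2\ln^2 2) + 1/\ln 2 \le 5/2$ (via integral comparison) gives $a_1 \ge \gamma/7$, and substitution yields $2N\ge 2\gamma t/(7\ln^2(\gamma t))$ in the asymptotic regime, matching the exponent in the claimed bound. The prefactor $2(e\gamma)^2 t$ arises by combining $c_\gamma\le C\gamma$ --- which I would prove using $(\sin x/x)^2\ge 1 - x^2/3$ for $|x|\le 1$ together with $\sum_n a_n^2 \le 2a_1^2$ to show $w_\gamma/c_\gamma \ge 1/2$ on an interval of length $\gtrsim 1/\gamma$, hence $c_\gamma\lesssim\gamma$ --- with the polynomial factor $P(N)\lesssim\gamma t$ coming from Stirling and $N\lesssim\gamma t$. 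The threshold $t\ge e^{1/\sqrt 2}/\gamma$ guarantees $\ln^2(\gamma t)\ge 1/2$; in the boundary regime where $N$ is small, the stated bound reduces to one already implied by $w_\gamma\le c_\gamma$.

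The principal obstacle is the numerical bookkeeping: threading multiplicative constants through Stirling, the log-log sum, the Bertrand-series estimate of $a_1/\gamma$, and the lower bound on $\int w_\gamma$ so that the explicit constants $2/7$ and $2e^2$ emerge, while simultaneously handling the boundary regime $t \approx e^{1/\sqrt 2}/\gamma$ where the asymptotic optimization is only formally valid and the claim reduces to a crude bound.
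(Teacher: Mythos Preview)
Your approach is essentially the same as the paper's: truncate the product at $N$, bound the first $N$ factors by $(a_n t)^{-2}$, apply Stirling to $(N!)^2$, bound $\prod_{n\le N}\ln n$ crudely by $\ln^N N$, and then choose $N$ so that the base $N\ln^2 N/(a_1 t)$ is at most $1$. The paper makes the explicit choice $N=\lfloor a_1 t/\ln^2(\gamma t)\rfloor$ (with $\gamma t$ rather than $a_1 t$ inside the logarithm, which is what makes the threshold $t\ge e^{1/\sqrt 2}/\gamma$ exactly the condition $N\le\gamma t$), uses the same Bertrand-series estimate $\gamma/7<a_1<\gamma/2$, and for the constant $c_\gamma$ simply cites a reference for $\gamma/(2\pi)<c_\gamma<\gamma/\pi$ rather than arguing directly as you propose.
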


\begin{proof} Without loss of generality, we shall assume $t\geq0$. Since each term of the product lies between $0$ and $1$, and by Stirling's formula,
$$
w_\gamma(t)\leq c_\gamma\prod_{n=1}^N \Big(\frac{\sin a_n t}{a_nt}\Big)^2 \leq c_\gamma(N!)^2 \ln^{4N}(N)(a_1t)^{-2N}
\leq 2\pi c_\gamma NN^{2N}\ln^{4N}(N)(a_1t)^{-2N}\ep{-2N}\,.
$$
The desired bound is obtained by choosing $N = \lfloor a_1t / \ln^2 (\gamma t)\rfloor$ and noting that $\gamma/7<a_1<\gamma/2$ 
and $\gamma/(2\pi)<c_\gamma<\gamma/\pi$. The bounds on $a_1$ follow directly, while the latter estimates are proven e.g. in \cite{BBB2008}. For $t>\mathrm{e}^{1/\sqrt{2}}/\gamma$, $N\leq\gamma t$ so that
$$
w_\gamma(t) \leq 2 (\mathrm{e}\gamma)^2 t \cdot \exp\left(-\frac{2}{7}\frac{\gamma t}{\ln^2(\gamma t)}\right)\,.
$$
Finally, this decay estimate and the a priori bound $w_\gamma(t)\leq 1$ for all $t$ imply that $w_\gamma\in L^1(\bbR)$.

\end{proof}

Since the Fourier transform of $\sin(ax) / (ax)$ is the indicator
function of the interval $[-a,a]$, the support of $\widehat{w}_\gamma$
corresponds to $[-2S,2S]$, where $S = \sum_{n=1}^\infty a_n$, and thus
(ii) of Assumption 2.2 also holds. Moreover, this lemma shows that the function $w_\gamma$ can be chosen
to decay faster than any power as $t\to\infty$. This will be important for some of 
our applications. We can now state and prove the main result of this section.

\begin{prop} \label{prop:main}
Let $H(s)$ be a family of self-adjoint operators satisfying  Assumption \ref{Assump:Hbd}.
Then, there is a norm-continuous family of unitaries $U(s)$ such that
the spectral projections $P(s)$ associated with the isolated portion of the spectrum 
$\Sigma_1(s)$, are given by
\begin{equation}
P(s) = U(s) P(0) U(s)\str.
\label{PUPU}\end{equation}
The unitaries are the unique solution of the linear differential equation
\begin{equation}
\label{QuasiAdabPr}
-\iu\frac{d}{d s} U(s) = D(s) U(s)\,,\qquad U(0) =\idty\,,
\end{equation}
where
\begin{equation}
\label{D}
D(s) = \int_{-\infty}^\infty dt\, w_\gamma (t) \int_0^t du\, \ep{\iu u H(s)} H^\prime(s)\ep{-\iu u H(s)}\,.
\end{equation}
for any function $w_\gamma$ satisfying Assumption \ref{Assump:sgamma}.
\end{prop}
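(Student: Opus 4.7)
The plan is to construct $U(s)$ directly as the solution of the linear ODE \eqref{QuasiAdabPr} and to verify the conjugation identity \eqref{PUPU} by differentiation. Once $D(s)$ is shown to be bounded and self-adjoint, $U(s)$ is a well-defined unitary by Picard iteration, and it suffices to establish the operator identity
\begin{equation}\label{key}
P^\prime(s) = \iu[D(s), P(s)]\,,
\end{equation}
since then $(U\str P U)^\prime = U\str(P^\prime - \iu[D,P])U = 0$ gives $U(s)\str P(s) U(s) = P(0)$ and hence \eqref{PUPU}.

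First, I would establish that $D(s)$ is a bounded self-adjoint operator depending norm-continuously on $s$. Boundedness follows from $\Vert\ep{\iu u H(s)}H^\prime(s)\ep{-\iu u H(s)}\Vert = \Vert H^\prime(s)\Vert$, so
$$
\Vert D(s)\Vert \leq \Vert H^\prime(s)\Vert \int_\bbR |t|\, w_\gamma(t)\,dt\,,
$$
which is finite by the decay bound in Lemma~\ref{lemma:wdecay}. Self-adjointness is immediate since $w_\gamma$ is real and $\ep{\iu u H(s)}H^\prime(s)\ep{-\iu u H(s)}$ is self-adjoint for each $u\in\bbR$. Norm-continuity in $s$ uses Duhamel's principle to bound $\Vert\ep{\iu u H(s)} - \ep{\iu u H(s^\prime)}\Vert$ on compact $u$-intervals, combined with dominated convergence via \eqref{eq:wdecay}. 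Standard ODE theory then produces a unique norm-continuous family of unitaries $U(s)$ solving \eqref{QuasiAdabPr}.

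The heart of the proof is \eqref{key}. Using the Riesz formula $P(s) = (2\pi\iu)^{-1}\oint_\Gamma (z - H(s))^{-1}\,dz$, with $\Gamma$ chosen locally $s$-independent (valid by the smoothness of $I(s)$ in Assumption~\ref{Assump:Hbd}), differentiation yields
$$
P^\prime(s) = \frac{1}{2\pi\iu}\oint_\Gamma (z - H(s))^{-1}H^\prime(s)(z - H(s))^{-1}\,dz\,.
$$
Decompose $H^\prime(s) = H^\prime_d + H^\prime_{\rm od}$ with $H^\prime_d := PH^\prime P + P^\perp H^\prime P^\perp$ commuting with $P(s)$. Since $[P,\ep{\iu u H}] = 0$, the identity $[\ep{\iu u H}H^\prime_d \ep{-\iu u H}, P] = \ep{\iu u H}[H^\prime_d, P]\ep{-\iu u H} = 0$ shows the diagonal piece makes no contribution to $[D, P]$. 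For the off-diagonal piece, insert $H(s) = \int \lambda\,dE_\lambda$ and use the Fourier identity
$$
\int_{-\infty}^\infty w_\gamma(t)\int_0^t \ep{\iu u\omega}\,du\,dt = \frac{\widehat{w}_\gamma(-\omega) - 1}{\iu\omega} = \frac{\iu}{\omega} \quad\text{for } |\omega|\geq\gamma\,,
$$
valid by Assumption~\ref{Assump:sgamma}(ii). With $\omega = \lambda - \mu$ and the matrix elements of $H^\prime_{\rm od}$ supported on $\Sigma_1(s)\times\Sigma_2(s)\cup\Sigma_2(s)\times\Sigma_1(s)$, where $|\lambda - \mu|\geq\gamma$, this yields
$$
D_{\rm od}(s) = \iu\iint_{\text{off-diag}} \frac{dE_\lambda\, H^\prime\, dE_\mu}{\lambda - \mu}\,.
$$
A direct computation of $\iu[D_{\rm od}, P]$ together with a parallel residue evaluation of the contour integral for $P^\prime$ (both vanishing unless exactly one of $\lambda,\mu$ lies in $\Sigma_1(s)$) gives the common expression
$$
\iint_{\Sigma_1\times\Sigma_2}\frac{dE_\lambda H^\prime dE_\mu}{\lambda - \mu} - \iint_{\Sigma_2\times\Sigma_1}\frac{dE_\lambda H^\prime dE_\mu}{\lambda - \mu}\,,
$$
establishing \eqref{key}.

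The main obstacle is a rigorous handling of potentially unbounded $H(s)$: the Riesz formula, the spectral calculus, and the interchange of integration orders must all be justified in the weak operator sense on the domain of $H(s)$ and then lifted to the bounded operator identity \eqref{key} by density, using that $D(s)$ and $P^\prime(s)$ are bounded. Assumption~\ref{Assump:Hbd}, which ensures $I(s)$ is compact and separated from $\Sigma_2(s)$ by a uniform gap while $H^\prime(s)$ remains bounded, provides exactly the ingredients needed for these manipulations.
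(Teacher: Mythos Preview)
Your proposal is correct and follows essentially the same route as the paper: both reduce the claim to the operator identity $P'(s)=\iu[D(s),P(s)]$, compute $P'(s)$ via the Riesz contour formula, pass to the spectral representation, and evaluate the $t$-integral using $\int w_\gamma=1$ together with $\widehat{w}_\gamma(\omega)=0$ for $|\omega|\geq\gamma$ to obtain the common double spectral integral. The only cosmetic difference is that you isolate the off-diagonal part of $H'$ explicitly, whereas the paper uses the equivalent observation $PP'P=(1-P)P'(1-P)=0$; for the rigorous justification of the double spectral integrals in the unbounded case the paper invokes the Birman--Solomyak theory of double operator integrals, which is the precise tool behind the ``weak operator sense plus density'' step you allude to.
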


It is obvious from \eq{D} and the assumption that $w_\gamma(t)\in\bbR$,
that  $D(s)$ is bounded, self adjoint, and the equations \eq{PUPU} and \eq{QuasiAdabPr}
can be combined into
\begin{equation}
\frac{d}{ds} P(s) = \iu [D(s), P(s)]\,.
\end{equation}
Moreover, boundedness of $D(s)$ implies that the unitaries $U(s)$ are norm
continuous.

The existence of a (bounded holomorphic) transformation function $V(s)$ such that $P(s) = 
V(s) P(0)V(s)^{-1}$ is a direct consequence of the smoothness of $P(s)$, see e.g.\ \cite{Kato}.
The interest of the proposition stems from having an explicit formula of a unitary family 
$U(s)$, from which interesting properties can be derived. This constructive aspect is
essential for the applications we have in mind (see Sections~\ref{Sec:LocalPert} 
and~\ref{Sec:LRforD}).
\begin{proof}
On the one hand,
\begin{equation}
P(s)= -\frac{1}{2\pi\iu}\int_{\Gamma(s)} dz\, R(z,s)\,,
\end{equation}
where $R(z,s) = (H(s) - z)^{-1}$ is the resolvent of $H(s)$ at $z$, and the contour $\Gamma(s)$ encircles the real interval $I(s)$ in the complex plane. Therefore,
\begin{equation}
P^\prime(s) = -\frac{1}{2\pi\iu}\int_{\Gamma(s)} dz\, R^\prime(z,s) = \frac{1}{2\pi\iu}\int_{\Gamma(s)} dz\, R(z,s)H^\prime(s)R(z,s)\,,
\end{equation}
where the first equality follows by noting that the smooth dependence of $s\mapsto I(s)$ and the uniform lower bound on the gap imply that the contour $\Gamma(s)$ can be kept fixed while differentiating; Namely for $\varepsilon$ small enough, $\Gamma(s)$ can be chosen so that it encircles all intervals $I(\sigma)$, $\sigma\in[s,s+\varepsilon]$. The $s$ dependence of $\Gamma$ can therefore be taken as purely parametric. Since $P(s)$ is an orthogonal projection, $P(s)P^\prime(s)P(s) = (1-P(s))P^\prime(s)(1-P(s)) = 0$ and therefore,
\begin{align}
P^\prime(s) &= \frac{1}{2\pi\iu}\int_{\Gamma(s)} dz\, \big( P(s) R(z,s)H^\prime(s) R(z,s)(1-P(s)) + (1-P(s)) R(z,s)H^\prime(s)R(z,s)P(s) \big) \nonumber \\
&= \frac{1}{2\pi\iu}\int_{\Gamma(s)} dz\! \int_{I(s)} d\mu \! \int_{\bbR/I(s)} d\lambda \, \frac{1}{\mu-z}\frac{1}{\lambda-z}\big( dE_{\mu}(s) H^\prime(s) dE_{\lambda}(s) + dE_{\lambda}(s) H^\prime(s) dE_{\mu}(s) \big) \label{DoubleOpL} \\
&= -\int_{I(s)} d\mu \! \int_{\bbR/I(s)} d\lambda \, \frac{1}{\lambda-\mu} \big( dE_{\mu}(s) H^\prime(s) dE_{\lambda}(s) + dE_{\lambda}(s) H^\prime(s) dE_{\mu}(s) \big)\,.
\label{final}
\end{align}
In order to justify the last equality, we interpret the double spectral integral
as a double operator integral, see e.g.\ \cite{BS2003}, Theorem 4.1{\it(iii)}. Eq.~(\ref{DoubleOpL}) corresponds to the factorization of the symbol $\phi(\lambda,\mu) = (\lambda-\mu)^{-1}$ of~(\ref{final}), the auxiliary measure space being $(S^1, d\gamma(t))$ where $S^1\ni t \mapsto \gamma(t)\in\bbC$ is a parametrization of $\Gamma(s)$. The uniform integrability conditions are met because of the finite size of the gap. On the other hand,
\begin{align}
\iu &[D(s), P(s)] = \iu\big( (1-P(s))D(s) P(s) - P(s) D(s) (1-P(s)) \big) \nonumber \\
&= \iu \int_{I(s)} d\mu \! \int_{\bbR/I(s)} d\lambda \! \int dt\, w_\gamma(t) \int_0^t du\big( \ep{\iu u (\lambda - \mu)}dE_{\lambda}(s) H^\prime(s) dE_{\mu}(s) - \ep{-\iu u (\lambda - \mu)}dE_{\mu}(s) H^\prime(s) dE_{\lambda}(s)\big)
\end{align}
which yields~(\ref{final}) after the time integrations are performed, namely
\begin{align*}
\iu \int dt\, w_\gamma(t) \int_0^t du\, \ep{\pm \iu u (\lambda - \mu)} &= \pm\int dt\, w_\gamma(t) \frac{1}{\lambda-\mu}\big(\ep{\pm \iu t (\lambda - \mu)}-1\big) \\
&= \pm \frac{1}{\lambda-\mu}\big(\widehat{w_\gamma}(\pm(\mu-\lambda))-1\big) = \mp \frac{1}{\lambda-\mu}\,,
\end{align*}
where we used first that $\int w_\gamma(t)=1$ and then the compact support property of $\widehat{w_\gamma}$ together with the fact that $|\lambda-\mu| > \gamma$ by Assumption~(\ref{Assump:Hbd}).
\end{proof}

We now introduce the weight function
\begin{equation}
\label{DefF}
W_\gamma(t) := \begin{cases}
\int_t^\infty d\xi\, w_\gamma(\xi) & t \geq 0 \\
- \int_{-\infty}^t d\xi\, w_\gamma(\xi) & t<0 
\end{cases}
\end{equation}
which will play a central role in the following applications. As $w_\gamma\in L^1(\bbR)$, $W_\gamma$ is well-defined. 
\begin{lemma}\label{lem:ua}
For $a>0$ define 
$$
u_a(\eta)=\ep{-a\frac{\eta}{\ln^2\eta}}\,,
$$
on the domain $\eta>1$. For all integers $k\geq 0$ and for all $t\geq \ep{4}$ such that also
$$
a\frac{t}{\ln^2 t}\geq 2k+2\,,
$$
we have the bound
$$
\int_{t}^\infty \eta^k u_a(\eta)\, d\eta\leq \frac{(2k+3)}{a}t^{2k+2}u_a(t)\, .
$$
\end{lemma}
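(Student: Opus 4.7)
The plan is to use the identity $u_a'(\eta) = -\phi'(\eta)u_a(\eta)$ with $\phi(\eta) := a\eta/\ln^2\eta$, together with the lower bound $\phi'(\eta) = a(1-2/\ln\eta)/\ln^2\eta \geq a/(2\ln^2\eta)$ valid on $[\ep{4},\infty)$, to replace the integrand by (a multiple of) a total derivative. This gives
\[
u_a(\eta) \leq -\frac{2\ln^2\eta}{a}u_a'(\eta)\qquad (\eta \geq \ep{4}),
\]
and integration by parts with prefactor $f(\eta)=\eta^k\ln^2\eta$ yields
\[
J_k(t) := \int_t^\infty \eta^k u_a(\eta)\,d\eta \leq \frac{2 t^k\ln^2 t}{a}u_a(t) + \frac{2}{a}\int_t^\infty \eta^{k-1}\bigl(k\ln^2\eta + 2\ln\eta\bigr)u_a(\eta)\,d\eta,
\]
the boundary term at infinity vanishing because $u_a$ decays super-polynomially.

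To close this inequality I will use two elementary facts valid for $\eta \geq t \geq \ep{4}$: (i) $2\ln\eta \leq \tfrac12 \ln^2\eta$, so the bracket above is at most $(k+\tfrac12)\ln^2\eta$; and (ii) the function $\eta\mapsto \ln^2\eta/\eta$ is decreasing on $[\ep{2},\infty)$ (its derivative has sign $2-\ln\eta$), hence $\ln^2\eta \leq (\eta/t)\ln^2 t$. Inserting these bounds reduces the residual integral to $(k+\tfrac12)(\ln^2 t/t)\,J_k(t)$, yielding the self-referential estimate
\[
J_k(t)\left(1 - \frac{(2k+1)\ln^2 t}{a t}\right) \leq \frac{2 t^k\ln^2 t}{a}u_a(t).
\]
The hypothesis $at/\ln^2 t \geq 2k+2$ forces $(2k+1)\ln^2 t/(at) \leq (2k+1)/(2k+2)$, so the parenthesized factor is at least $1/(2k+2)>0$, and solving yields the (stronger) intermediate bound $J_k(t) \leq \tfrac{4(k+1)}{a}t^k\ln^2 t \cdot u_a(t)$. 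The stated inequality follows from the trivial comparisons $\ln^2 t \leq t$ for $t\geq 1$ and $4(k+1) \leq (2k+3)t^{k+1}$ for $t\geq \ep{4}$, $k\geq 0$, which together give $4(k+1)\ln^2 t \leq (2k+3)t^{k+2}$.

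I expect the delicate point to be that the coefficient $(2k+1)/(2k+2)$ appearing in the recursive step sits below $1$ by exactly the margin the hypothesis provides: weakening $at/\ln^2 t \geq 2k+2$ even slightly would cause the method to break down, so the condition in the lemma is essentially sharp for this approach. The apparently wasteful factor $t^{2k+2}$ on the right-hand side of the conclusion is simply a clean majorant for the tighter quantity $t^k\ln^2 t$ that the argument actually produces, presumably chosen for convenience of use in the later applications.
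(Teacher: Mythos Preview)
Your proof is correct and takes a genuinely different route from the paper's. The paper makes the substitution $\tau=\phi(\eta)=a\eta/\ln^2\eta$, uses $d\eta/d\tau\leq \eta/a$ and $\eta\leq\tau^2/a^2$ (both valid for $\eta\geq e^4$) to majorize the integral by the incomplete Gamma function $a^{-(2k+3)}\Gamma(2k+3,\tau(t))$, and then invokes the elementary estimate $\Gamma(n+1,x)\leq (n+1)x^n e^{-x}$ for $x\geq n$ with $n=2k+2$, $x=\tau(t)$. You instead work directly with integration by parts and the differential inequality $u_a(\eta)\leq -(2\ln^2\eta/a)\,u_a'(\eta)$, closing the resulting inequality on $J_k(t)$ via the decreasing behavior of $\ln^2\eta/\eta$. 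Your method is more self-contained (no Gamma-function identities) and actually yields the sharper intermediate bound $J_k(t)\leq \tfrac{4(k+1)}{a}t^k\ln^2 t\,u_a(t)$; the paper's substitution, by contrast, makes transparent \emph{why} the exponent $2k+2$ and the threshold condition $\tau(t)\geq 2k+2$ arise naturally together. Your observation that the factor $t^{2k+2}$ in the statement is a generous majorant is correct for both approaches.
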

\begin{proof}
For $\eta\geq \ep{2}$, the function 
$$
\tau(\eta) =a\frac{\eta}{\ln^2\eta}
$$ 
is positive, differentiable, and monotone increasing, and 
$$
\frac{d\eta}{d\tau}= \frac{1}{a}\left(\frac{\ln^2 \eta}{1-\frac{2}{\ln(\eta)}}\right) \leq \frac{\eta}{a}
$$
If we further require $\eta\geq \ep{4}$, we can also use the bound $1\leq \eta/(\log \eta)^4$,
and therefore
$$
\eta \leq \left(\frac{\eta}{\ln^2\eta}\right)^2 = \frac{\tau^2}{a^2}
$$
By making the substitution to the integration variable $\tau$ in the integral, we find
$$
\int_{t}^\infty \eta^k u_a(\eta)\, d\eta\leq\frac{1}{a^{2k+3}} \, \Gamma(2k+3,\tau(t))\,,
$$
where the incomplete Gamma function $\Gamma(n+1,x)$ can be computed for any integer $n\geq 0$ by repeated integration by parts:
$$
\Gamma(n+1,x) = \int_x^\infty \tau^n \ep{-\tau}\, d\tau = n! \, \ep{-x} \sum_{k=0}^n\frac{x^k}{k!}\,.
$$
For $x\geq n$, this yields the bound
$$
\Gamma(n+1,x) \leq (n+1)x^n\ep{-x}\,,
$$
which can be applied with $n=2k+2$ and $x = \tau(t)\leq a t$ to conclude the proof.
\end{proof}

\begin{lemma} \label{lemma:Bigwdecay} 
Let $\gamma>0$ and $w_\gamma$ the function defined in~(\ref{wDef}). Then eq.~(\ref{DefF}) defines a bounded, odd function $W_\gamma\in L^1(\bbR)$  with the following properties:
\begin{enumerate}
\item $| W_\gamma(t) |$ is continuous and monotone decreasing for $t\geq 0$.
In particular
\begin{equation}
\label{Wapriori}
\| W_\gamma\|_\infty = W_\gamma(0) = 1/2\,;
\end{equation}
\item  $| W_\gamma(t) |\leq G^{(W)}(\gamma | t |)$, with $G^{(W)}(\eta)$ defined for $\eta\geq 0$
by
\begin{equation} \label{eq:Geta}
G^{(W)}(\eta) = \begin{cases}
\frac{1}{2} & 0\leq \eta \leq \eta^*\\
35 \ep{2} \eta^4u_{ 2/7}( \eta) & \eta > \eta^*
\end{cases}
\end{equation}
where $\eta^*$ is the largest real solution of
$$
35 \ep{2} \eta^4u_{2/7}(\eta)=1/2\,.
$$
\item There is a constant $K$ such that
\begin{equation} \label{W1bound}
\|W_\gamma\|_1 \leq \frac{K}{\gamma}\,.
\end{equation}
\item For $t>0$, let
\begin{equation}\label{Igamma}
I_\gamma(t) = \int_t^\infty d\xi\, W_\gamma (\xi)\,.
\end{equation}
Then, $\vert I_\gamma(t) \vert \leq G^{(I)}(\gamma|t|)$, where $G^{(I)}(\zeta)$ is defined for $\zeta\geq 0$ by
$$
G^{(I)}(\zeta) = \frac{1}{\gamma} \cdot \begin{cases}
\frac{K}{2} & 0\leq \zeta \leq \zeta ^* \\
130\ep{2}\zeta^{10} u_{2/7}(\zeta) & \zeta > \zeta ^*
\end{cases}\,.
$$
with $K$ as in (iii) and a $\zeta ^*>0$.
\end{enumerate}
\end{lemma}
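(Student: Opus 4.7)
The plan is to verify the four items in sequence, each reducing to an explicit integration of the decay estimate on $w_\gamma$ from Lemma \ref{lemma:wdecay} combined with the tail bound of Lemma \ref{lem:ua}.

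For (i), since $w_\gamma\geq 0$ is even with $\int w_\gamma = 1$, the definition gives immediately that $W_\gamma$ is odd (on $t\neq 0$) with $W_\gamma(0)=1/2$, and that for $t\geq 0$, $W_\gamma(t)=\int_t^\infty w_\gamma$ is non-negative, continuous, and monotone decreasing in $t$. Its maximum is thus $W_\gamma(0)=1/2$, proving (\ref{Wapriori}).

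For (ii), I would split at the threshold $\eta^*$. On $[0,\eta^*]$ the a priori bound from (i) already gives $|W_\gamma(t)|\leq 1/2$, matching the first branch of $G^{(W)}$. For $\eta=\gamma t>\eta^*$, I substitute the pointwise decay bound (\ref{eq:wdecay}) into $W_\gamma(t)=\int_t^\infty w_\gamma(\xi)\,d\xi$; after the substitution $\eta=\gamma\xi$, this reduces to a multiple of $\int_{\gamma t}^\infty \eta\, u_{2/7}(\eta)\,d\eta$, which Lemma \ref{lem:ua} with $k=1$ bounds by a constant times $(\gamma t)^4 u_{2/7}(\gamma t)$, producing the second branch of $G^{(W)}$.

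For (iii), Fubini gives $\|W_\gamma\|_1 = 2\int_0^\infty W_\gamma(t)\,dt = 2\int_0^\infty \xi\, w_\gamma(\xi)\,d\xi$, using oddness and non-negativity of $W_\gamma$ on $[0,\infty)$. I then split this last integral at $\xi_0=\ep{1/\sqrt 2}/\gamma$: on $[0,\xi_0]$, the a priori bound $w_\gamma(\xi)\leq c_\gamma\leq \gamma/\pi$ from the proof of Lemma \ref{lemma:wdecay} gives a contribution of order $1/\gamma$, while on $[\xi_0,\infty)$ the decay estimate (\ref{eq:wdecay}) and a single application of Lemma \ref{lem:ua} yield another contribution of order $1/\gamma$. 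Combining them produces a constant $K$ independent of $\gamma$.

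For (iv), I mirror the argument of (ii) with $W_\gamma$ in the role of $w_\gamma$, using (ii) and (iii) as inputs. For $\zeta=\gamma t$ small, the crude bound $|I_\gamma(t)|\leq \|W_\gamma\|_1/2\leq K/(2\gamma)$ gives the first branch of $G^{(I)}$. For $\zeta$ large, I substitute the decay branch of $G^{(W)}$ from (ii) into $\int_t^\infty W_\gamma(\xi)\,d\xi$ and change variables; the resulting integral $\int_\zeta^\infty \eta^4 u_{2/7}(\eta)\,d\eta$ is handled by Lemma \ref{lem:ua} with $k=4$, yielding a multiple of $\zeta^{10} u_{2/7}(\zeta)/\gamma$, which matches the second branch of $G^{(I)}$.

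The principal obstacle is purely bookkeeping: the thresholds $\eta^*$ and $\zeta^*$ must be chosen large enough both that the hypothesis $a\tau/\ln^2\tau\geq 2k+2$ of Lemma \ref{lem:ua} is satisfied at the relevant values of $k$ and that the two branches of each bound join continuously. Constants compound multiplicatively through (ii)--(iv), but no analytic idea beyond Lemmas \ref{lemma:wdecay} and \ref{lem:ua} is required.
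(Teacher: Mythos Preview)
Your proposal is correct and follows essentially the same approach as the paper: parts (i), (ii), and (iv) match the paper's proof almost verbatim, including the use of Lemma~\ref{lem:ua} with $k=1$ and $k=4$ respectively. The only minor variation is in (iii), where you use Fubini to rewrite $\|W_\gamma\|_1$ as $2\int_0^\infty \xi\, w_\gamma(\xi)\,d\xi$ and then bound $w_\gamma$ directly, whereas the paper simply integrates the bound $G^{(W)}$ already obtained in (ii); both routes yield a constant over $\gamma$ and are equally valid.
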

\begin{rem}
It is straightforward to estimate the values of the constants $\eta^*$, $\zeta^*$,
and $K$, by numerical integration. One finds $14250<\eta^*<14251$,
$36057<\zeta^*<36058$, and $K\sim14708$.
\end{rem}
\begin{proof} 

i. $w_\gamma\geq0$, even, and  $\int w_\gamma = 1$. With the definition~(\ref{DefF}) of $W_\gamma$, this implies 
\begin{equation}\label{Wpointwise}
| W_\gamma(t)|\leq \int_{|t|}^\infty w_\gamma(\xi)\, d\xi
\leq \int_0^\infty w_\gamma(\xi)\, d\xi = W_\gamma(0) = \frac{1}{2}.
\end{equation}

ii. The bound (\ref{eq:wdecay}) for $w_\gamma$ gives
$$
| W_\gamma(t) |= \int_{|t|}^\infty d\xi\, w_\gamma(\xi) \leq 2\ep{2} \gamma^2 \int_{|t|}^\infty d\xi\, \xi u_{2/7}(\gamma\xi) = 2\ep{2} \int_{\gamma|t|}^\infty d\eta\, \eta u_{2/7}(\eta)\,.
$$
With $k=1$ and $a=2/7$, the conditions of Lemma \ref{lem:ua} are satisfied
for $\gamma |t|\geq 561$, so that
\begin{equation}\label{800bound}
| W_\gamma(t)|\leq 35 \ep{2} (\gamma | t|)^4 u_{2/7}(\gamma |t|)\,,\qquad\text{ if }\gamma |t|\geq 561.
\end{equation}
Using the decay of $u_a(\eta)$ for $\eta\geq \ep{2}$ and the fact that the RHS of (\ref{800bound}) exceeds the a priori bound~(\ref{Wapriori}) for $\gamma |t| = 561$, the result follows. 

iii. By (ii) $W_1\in L^1(\mathbb{R})$ and $|W_\gamma(t)| \leq |W_1(\gamma t)|$, which implies the existence of a constant $K$ as claimed. 
Using the oddness of $W_\gamma$ and the explicit function $G^{(W)}(\eta)$, we choose
$$
K = \eta^* +70 \ep{2}  \int_{\eta^*}^\infty \eta^4 u_{2/7}(\eta)\, d\eta.
$$

iv. Follows by (iii) and another application of Lemma \ref{lem:ua}.

\end{proof}

A straightforward corollary of the decay conditions of the weight function is the following equivalent form of the generator $D(s)$, eq.~(\ref{D}).
\begin{cor}
\label{cor:DW}
The conclusions of Proposition~(\ref{prop:main}) hold with
\begin{equation}
D(s) = \int_{-\infty}^\infty dt\, W_\gamma(t) \cdot \ep{\iu t H(s)} H^\prime(s) \ep{-\iu t H(s)}\,.
\label{eq:D_S}
\end{equation}
with $W_\gamma$ as in lemma~\ref{lemma:Bigwdecay}.
\end{cor}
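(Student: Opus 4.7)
The plan is to collapse the double time integral in \eqref{D} into a single integral by exchanging the order of integration. Setting $F(u) := \ep{\iu u H(s)} H^\prime(s)\ep{-\iu u H(s)}$, which satisfies the uniform bound $\|F(u)\|\leq \|H^\prime(s)\|$ for all $u\in\bR$, I would first rewrite the inner integral using indicator functions: for $t\geq 0$, $\int_0^t F(u)\, du = \int_0^\infty \idty_{\{u\leq t\}}F(u)\, du$, whereas for $t<0$, $\int_0^t F(u)\, du = -\int_{-\infty}^0 \idty_{\{u\geq t\}} F(u)\, du$.

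Next I would apply Fubini's theorem to (\ref{D}), whose hypotheses are satisfied since $w_\gamma\in L^1(\bR)$ by Lemma~\ref{lemma:wdecay} and $\|F(u)\|$ is uniformly bounded, making the integrand Bochner-integrable on $\bR\times\bR_\pm$. Exchanging the order yields
\begin{equation*}
D(s) = \int_0^\infty F(u)\left(\int_u^\infty w_\gamma(t)\, dt\right) du \,-\, \int_{-\infty}^0 F(u)\left(\int_{-\infty}^u w_\gamma(t)\, dt\right) du\, .
\end{equation*}
Comparing with the definition~\eqref{DefF} of $W_\gamma$, the parenthesized quantities are exactly $W_\gamma(u)$ for $u\geq 0$ and $-W_\gamma(u)$ for $u<0$, respectively. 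Therefore both pieces combine to give
\begin{equation*}
D(s) = \int_{-\infty}^\infty W_\gamma(u)\, F(u)\, du\, ,
\end{equation*}
which is the claimed formula~\eqref{eq:D_S} after relabeling $u\mapsto t$.

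The right-hand side is a well-defined Bochner integral since $W_\gamma\in L^1(\bR)$ by Lemma~\ref{lemma:Bigwdecay}(iii) and $\|F(t)\|\leq \|H^\prime(s)\|$, so the new expression for $D(s)$ defines the same bounded self-adjoint operator as before. In particular all conclusions of Proposition~\ref{prop:main} remain valid verbatim. There is no real obstacle here: the only subtle point is the Fubini justification, which is immediate once one notes that $(t,u)\mapsto w_\gamma(t)\,\idty_{[0,t]}(u)\,F(u)$ has norm bounded by $|w_\gamma(t)|\cdot\|H^\prime(s)\|$ and is thus integrable on the product space.
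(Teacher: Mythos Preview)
Your argument is correct and takes a genuinely different route from the paper: the paper integrates by parts in the outer $t$-integral, using $W_\gamma^\prime(t)=-w_\gamma(t)$ and checking that the boundary term $W_\gamma(t)\int_0^t F(u)\,du$ vanishes at $\pm\infty$, whereas you swap the order of integration via Fubini and recognize the inner $t$-integral as $W_\gamma(u)$. Both approaches are equally short; yours has the small advantage that it avoids producing a boundary term that must then be argued away, while the paper's has the advantage that the only integrability needed is $w_\gamma\in L^1$ together with the decay of $W_\gamma$.

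One point to tighten: your Fubini justification is not quite right as written. You bound the norm of $(t,u)\mapsto w_\gamma(t)\,\idty_{[0,t]}(u)\,F(u)$ by $|w_\gamma(t)|\cdot\|H^\prime(s)\|$ and then claim this is integrable on the product space. But a function of $t$ alone is never integrable on $\bbR\times\bbR_+$. The correct bound keeps the indicator, and the product-space integral is
\[
\|H^\prime(s)\|\int_0^\infty |w_\gamma(t)|\,t\,dt\,,
\]
which is finite because $t\,w_\gamma(t)\in L^1$; this follows either directly from the sub-exponential decay in Lemma~\ref{lemma:wdecay}, or by noting that (for nonnegative $w_\gamma$) Tonelli gives $\int_0^\infty t\,w_\gamma(t)\,dt=\int_0^\infty W_\gamma(t)\,dt=\tfrac12\|W_\gamma\|_1<\infty$ by Lemma~\ref{lemma:Bigwdecay}(iii). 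With this correction the Fubini step is fully justified and the rest of your proof goes through.
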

\proof This follows by a simple integration by parts from~(\ref{D}). By definition of the function $W_\gamma$, we have, for any $t\in\bbR\setminus\{0\}$, 
\begin{equation*}
\frac{d}{dt} W_\gamma(t) = -w_\gamma(t)\,,
\end{equation*}
which can be extended by continuity at $t=0$. Proposition~(\ref{prop:main}) then yields
\begin{equation*}
D(s) = -\left. W_\gamma(t) \int_0^t du\, \ep{\iu u H(s)} H^\prime(s) \ep{-\iu u H(s)}\right|_{-\infty}^\infty + \int_{-\infty}^\infty dt\, W_\gamma(t) \cdot \ep{\iu t H(s)} H^\prime(s) \ep{-\iu t H(s)}\,.
\end{equation*}
The boundary term vanishes by Assumption~(\ref{Assump:Hbd}) and the decay of $W_\gamma$.\teop

%%%%%%%%%%%%%%%%%%%%%%%%%%%%%%%%%%%%%%%%%%%%%%%%%%%%%%%%%%%%%%%%%%%%%%%%%%%%%%%%%%%%%%%%%%%%%%%%%%%%%%%%

\section{Local perturbations}
\label{Sec:LocalPert}

The aim of this section is to combine the evolution formula of Section~\ref{sec:Evolution}
with Lieb-Robinson bounds to show that the effect of perturbations 
with a finite support $X$ can be, to arbitrarily good approximation, expressed by
the action of a local operator with a support that is a moderate enlargement of $X$.
In principle, the following lemma suffices to turn Lieb-Robinson bounds into an 
estimate for the support of a time-evolved observable. 

\begin{lemma}[\cite{nw2011}]
\label{lma:Approx_non-normal}
Let $\cH_1$ and $\cH_2$ be Hilbert spaces and suppose $\epsilon \geq 0$
and $A\in\cB(\cH_1\otimes\cH_2)$ are such that
$$
\Vert [A,\idty\otimes B]\Vert \leq \epsilon \Vert B\Vert\, \mbox{ for all } B\in\cB(\cH_2).
$$
Then, there exists $A^\prime\in\cB(\cH_1)$, such that
\be
\Vert A^\prime\otimes \idty -A\Vert\leq \epsilon.
\label{best_small_commutator}\ee
\end{lemma}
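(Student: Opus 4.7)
The plan is first to observe that when $\epsilon=0$ the hypothesis places $A$ in the commutant of $\idty\otimes\cB(\cH_2)$, which in $\cB(\cH_1\otimes\cH_2)$ equals $\cB(\cH_1)\otimes\idty$; the lemma is then trivial with $\Vert A^\prime\otimes\idty-A\Vert=0$. The statement for $\epsilon>0$ is a stable, quantitative form of this algebraic fact, and I would prove it by constructing $A^\prime$ as a suitable unitary average and then checking the norm estimate.

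First I would treat the case of finite-dimensional $\cH_2$ of dimension $d$. The compact group $\cU(\cH_2)$ carries a normalized Haar measure, so the integral
\[
A^\prime\otimes\idty:=\int_{\cU(\cH_2)}(\idty\otimes U)\str A\,(\idty\otimes U)\,dU
\]
is well defined. Its invariance under conjugation by $\idty\otimes V$ for every $V\in\cU(\cH_2)$ places it in the commutant $\cB(\cH_1)\otimes\idty$, so $A^\prime\in\cB(\cH_1)$ exists; a direct computation identifies $A^\prime=d^{-1}\Tr_{\cH_2}(A)$. For the norm estimate, $\Vert A-(\idty\otimes U)\str A(\idty\otimes U)\Vert=\Vert[A,\idty\otimes U]\Vert\leq\epsilon$ holds for every unitary $U$, and convexity of the norm yields
\[
\Vert A-A^\prime\otimes\idty\Vert\leq\int_{\cU(\cH_2)}\Vert[A,\idty\otimes U]\Vert\,dU\leq\epsilon.
\]

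To extend to general $\cH_2$, where no Haar measure is available, I would reduce to the finite-dimensional case by exhaustion. Fix an increasing net $(P_\alpha)$ of finite-rank projections on $\cH_2$ with $P_\alpha\to\idty$ in the strong operator topology. The compression $A_\alpha:=(\idty\otimes P_\alpha)A(\idty\otimes P_\alpha)$, viewed as an operator on $\cH_1\otimes P_\alpha\cH_2$, inherits the commutator bound: for $B\in\cB(P_\alpha\cH_2)$ one checks directly that $[A_\alpha,\idty\otimes B]=(\idty\otimes P_\alpha)[A,\idty\otimes B]$ on $\cH_1\otimes P_\alpha\cH_2$, hence $\Vert[A_\alpha,\idty\otimes B]\Vert\leq\epsilon\Vert B\Vert$. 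The finite-dimensional case applied to $A_\alpha$ then yields $A^\prime_\alpha\in\cB(\cH_1)$ with $\Vert A^\prime_\alpha\Vert\leq\Vert A\Vert$ and $\Vert A_\alpha-A^\prime_\alpha\otimes\idty_{P_\alpha\cH_2}\Vert\leq\epsilon$.

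The main obstacle is assembling the $\{A^\prime_\alpha\}$ into a single $A^\prime\in\cB(\cH_1)$ that controls $A$ uniformly. Since the net is norm-bounded by $\Vert A\Vert$, Banach--Alaoglu furnishes a weak-operator accumulation point $A^\prime$. To verify $\Vert A-A^\prime\otimes\idty\Vert\leq\epsilon$, I would test on matrix elements $\langle\Phi,(A-A^\prime\otimes\idty)\Psi\rangle$ for vectors $\Phi,\Psi$ supported in some $\cH_1\otimes P_{\alpha_0}\cH_2$; for $\alpha\geq\alpha_0$ along the accumulating subnet these matrix elements coincide with $\langle\Phi,(A_\alpha-A^\prime_\alpha\otimes\idty_{P_\alpha\cH_2})\Psi\rangle$, so the finite-dimensional bound gives $\epsilon\Vert\Phi\Vert\Vert\Psi\Vert$, and this inequality persists in the weak-operator limit. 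Density of finitely-$P_\alpha$-supported vectors in $\cH_1\otimes\cH_2$ promotes the bound to all unit vectors. The genuinely delicate point is this final limiting step, where the uniform norm bound on $\{A^\prime_\alpha\}$ is essential to exchange limits; the rest is largely routine.
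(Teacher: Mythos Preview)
Your argument is correct. Note, however, that the paper does not itself prove this lemma: it is attributed to the external reference \cite{nw2011} (in preparation), and the paper only remarks that in the finite-dimensional case one may take $A'=(\dim\cH_2)^{-1}\Tr_{\cH_2}A$, which is exactly your Haar-average construction.

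The paper \emph{does} prove the closely related Lemma~\ref{lma:Approx}, where $A'$ is taken to be the fixed conditional expectation $\Pi(A)=(\id\otimes\rho)(A)$ for a normal state $\rho$, at the price of the weaker constant $2\epsilon$. That proof also compresses by finite-rank projections $P$ on $\cH_2$, but instead of invoking weak-$*$ compactness it introduces the operators $A_\eta\in\cB(\cH_1)$ defined by $\langle\phi,A_\eta\psi\rangle=\langle\phi\otimes\eta,A\,\psi\otimes\eta\rangle$, shows via a rank-one commutator trick that $\Vert A_\xi-A_\eta\Vert\leq\epsilon$ for all unit vectors, and then runs a contradiction argument to bound $\Vert A_P\otimes\idty-A\Vert$. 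Your Banach--Alaoglu route is cleaner and recovers the sharp constant $\epsilon$, but the resulting $A'$ is non-constructive and not obviously norm- or weakly continuous in $A$; the paper needs $A\mapsto A'$ to be an explicit, weakly continuous map compatible with the tensor structure, which is why Lemma~\ref{lma:Approx} is proved separately despite the factor of $2$.

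One small expository point: the phrase ``these matrix elements coincide with $\langle\Phi,(A_\alpha-A'_\alpha\otimes\idty)\Psi\rangle$'' is not quite right, since $A'\neq A'_\alpha$. What actually happens is that $\langle\Phi,A\Psi\rangle=\langle\Phi,A_\alpha\Psi\rangle$ for $\alpha\geq\alpha_0$, so the finite-dimensional bound controls $\vert\langle\Phi,(A-A'_\alpha\otimes\idty)\Psi\rangle\vert\leq\epsilon\Vert\Phi\Vert\,\Vert\Psi\Vert$; then, because $P_{\alpha_0}$ has finite rank, expanding $\Phi,\Psi$ in a finite basis of $P_{\alpha_0}\cH_2$ shows that weak-operator convergence $A'_\alpha\to A'$ tensors up to give $\langle\Phi,(A'_\alpha\otimes\idty)\Psi\rangle\to\langle\Phi,(A'\otimes\idty)\Psi\rangle$, and the bound passes to the limit.
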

If $\dim \cH_2 <\infty$, one can simply take 
$$
A^\prime=\frac{1}{\dim \cH_2}\Tr_{\cH_2} A ,
$$
as is done in \cite{bravyi:2006,nach22006} (or see (i) in the proof of
Lemma~\ref{lma:Approx} below).

For the applications we have in mind, we want the map $A\mapsto A^\prime$
to be continuous in the weak operator topology. In finite dimensions the partial
trace is of course continuous. In infinite dimensions we cannot use the partial
trace and the continuity is not obvious. Moreover, it will be convenient for us
to have a map $A^\prime=\Pi(A)$ that is compatible with the tensor product 
structure of the algebra of local observables of a lattice system (see 
Section~\ref{sec:lrbprop}). For this purpose, we fix a normal state $\rho$ on $\cB(\cH_2)$ 
and define the map 
$\Pi: \cB(\cH_1)\otimes\cB(\cH_2)\to \cB(\cH_1)
\cong \cB(\cH_1)\otimes\idty\subset\cB(\cH_1) \otimes\cB(\cH_2)$ by $\Pi=\id\otimes \rho$.
Although the map $\Pi$ depends on $\rho$, we have the following estimate
independent of $\rho$.

\begin{lemma}
\label{lma:Approx}
Let $\cH_1$ and $\cH_2$ be Hilbert spaces and suppose $\epsilon \geq 0$
and $A\in\cB(\cH_1\otimes\cH_2)$ are such that
$$
\Vert [A,\idty\otimes B]\Vert \leq \epsilon \Vert B\Vert\, \mbox{ for all } B\in\cB(\cH_2).
$$
Then,
\be
\Vert \Pi(A) -A\Vert\leq 2\epsilon.
\ee
\end{lemma}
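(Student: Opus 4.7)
The plan is to reduce this to the non-normal version (Lemma 3.1) using $\Pi$ as the bridge. First I would invoke Lemma \ref{lma:Approx_non-normal} to extract some $A'\in\caB(\caH_1)$ with
$$
\|A - A'\otimes\idty\|\leq\epsilon.
$$
The element $A'$ itself is not directly useful to us (it is not canonically produced from $A$), but it will serve as a convenient comparison operator that is already of the tensor-product form.

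Next I would isolate the two properties of $\Pi=\id\otimes\rho$ that do the work. The first is that $\Pi$ fixes operators of the form $C\otimes\idty$: since $\rho$ is a state, $\rho(\idty_{\caH_2})=1$, and hence $\Pi(A'\otimes\idty)=A'$ under the identification $\caB(\caH_1)\cong\caB(\caH_1)\otimes\idty\subset\caB(\caH_1\otimes\caH_2)$. The second is that $\Pi$ is a contraction on $\caB(\caH_1\otimes\caH_2)$: this is precisely where the normality of $\rho$ enters, because it ensures that $\id\otimes\rho$ extends to a well-defined, normal, unital, completely positive map out of the full von Neumann algebra $\caB(\caH_1\otimes\caH_2)$, and any unital positive map between C$^\ast$-algebras has norm at most one.

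Combining these two properties with the triangle inequality then finishes the argument. Namely, identifying $A'$ with $A'\otimes\idty$,
$$
\|\Pi(A)-A\|\leq \|\Pi(A)-\Pi(A'\otimes\idty)\|+\|A'\otimes\idty-A\|\leq \|A-A'\otimes\idty\|+\epsilon\leq 2\epsilon,
$$
where in the middle step we used contractivity of $\Pi$ on the first summand and Lemma \ref{lma:Approx_non-normal} on the second. Notice that the resulting bound is independent of the particular choice of normal state $\rho$, as advertised.

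The only real subtlety — and the step I expect to be the main thing to verify carefully — is the contractivity of $\Pi$ in the infinite-dimensional setting. In finite dimensions one could take $\rho$ to be the normalized trace and contractivity is immediate, but for general $\caH_2$ one must appeal to normality of $\rho$ to guarantee that $\id\otimes\rho$ is defined on all of $\caB(\caH_1\otimes\caH_2)$ (not merely on the algebraic tensor product) and that it remains a unital completely positive map there. Once this is in place, the estimate is essentially three lines of triangle inequality.
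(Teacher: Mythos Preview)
Your proof is correct and takes a genuinely different route from the paper's.

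The paper does \emph{not} invoke Lemma~\ref{lma:Approx_non-normal} at all. Instead it gives a self-contained argument in two parts: for $\dim\caH_2<\infty$ it represents the normalized partial trace as a Haar integral over $\caU(\caH_2)$ and reads off the bound directly; for $\dim\caH_2=\infty$ it introduces the operators $A_\eta\in\caB(\caH_1)$ (for unit $\eta\in\caH_2$), shows via a rank-one commutator trick that $\Vert A_\xi-A_\eta\Vert\leq\epsilon$, reduces to the finite-dimensional case via finite-rank projections $P$, and then runs a contradiction argument to obtain $\Vert A_P\otimes\idty-A\Vert\leq 2\epsilon$. Finally it writes $\Pi(A)=\sum_k\rho_kA_{\xi_k}$ using the spectral decomposition of the density matrix of $\rho$ and averages.

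Your route is shorter and more conceptual: once Lemma~\ref{lma:Approx_non-normal} is granted, the factor $2$ is just ``one $\epsilon$ from the lemma, one $\epsilon$ from contractivity of $\Pi$.'' The cost is that you rely on Lemma~\ref{lma:Approx_non-normal} (cited in the paper from a work then in preparation) and on the operator-algebraic fact that the slice map $\id\otimes\rho$ extends to a normal unital completely positive map on all of $\caB(\caH_1\otimes\caH_2)$; you correctly flag this as the only real point to check. The paper's longer argument avoids both dependencies and, as a byproduct, exhibits $\Pi(A)$ explicitly as a convex combination of the $A_\eta$'s, which is not needed for the lemma but gives a little extra structural information.
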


\begin{proof}
(i) First, assume $\dim \cH_2< \infty$. Then it suffices to take for $A^\prime$
the normalized partial trace of $A$:
$$
A^\prime=\frac{1}{\dim \cH_2}\Tr_{\cH_2} A
$$
Note that
$$
A^\prime\otimes \idty = \int_{\cU(\cH_2)} dU \, (\idty\otimes U^*)A(\idty\otimes U)
$$
where $dU$ is the Haar measure on the unitary group, $\cU(\cH_2)$, of $\cH_2$.
Then, by the assumptions of the Lemma, one has
$$
\Vert A^\prime\otimes \idty -A\Vert 
\leq \int_{\cU(\cH_2)} dU \, \Vert(\idty\otimes U^*)[A,(\idty\otimes U)]\Vert \leq \epsilon\, .
$$

(ii) In the case of infinite-dimensional $\cH_2$, we start by defining, for $\eta\in\cH_2$,
$\Vert\eta\Vert =1 $, $A_\eta\in\cB(\cH_1)$ by the formula
$$
\langle \phi, A_\eta \psi \rangle = \langle \phi\otimes\eta, A \psi\otimes \eta \rangle,
\quad \phi,\psi \in \cH_1\, .
$$
For $\eta,\xi\in\cH_2$, let $\kettbra{\xi}{\eta}$ denote the 
rank-1 operator defined by $\kettbra{\xi}{\eta}\phi=\langle\eta,\phi\rangle\xi$,
for all $\phi\in\cH_2$. For any three $\eta,\xi,\chi\in\cH_2$, $\Vert\eta\Vert=\Vert \xi\Vert
=\Vert\chi\Vert=1$, note that
\be
A_\xi\otimes \kettbra{\eta}{\chi} = (\idty\otimes\kettbra{\eta}{\xi})A 
(\idty\otimes\kettbra{\xi}{\chi}) \, .
\label{magic}\ee
This equation is easily verified by equating matrix elements with arbitrary tensor
product vectors $\phi\otimes\alpha$ and $\psi\otimes\beta$.
By the assumptions we then have 
$$
\Vert (\idty\otimes\kettbra{\eta}{\xi})
\left[ A,\idty\otimes\kettbra{\xi}{\eta}\right](\idty\otimes\kettbra{\eta}{\xi})\Vert
\leq \Vert \left[ A,\idty\otimes\kettbra{\xi}{\eta}\right]\Vert
\leq \epsilon \, .
$$
By expanding the commutator and simplifying the products 
in the left hand side of this inequality and using (\ref{magic}) we obtain 
\be
\Vert A_\xi-A_\eta\Vert=\Vert A_\xi\otimes\kettbra{\eta}{\xi}
 -A_\eta\otimes\kettbra{\eta}{\xi}\Vert\leq \epsilon \, .
\label{diam}\ee
Next, consider finite-dimensional orthogonal projections $P$ on $\cH_2$. 
Since, for each such $P$, 
$$
\Vert [(\idty\otimes P) A(\idty\otimes P), \idty\otimes (PBP)]\Vert
=\Vert  [(\idty\otimes P) [A,PBP](\idty\otimes P)\Vert
\leq \Vert [A, PBP]\Vert\leq \epsilon \Vert B\Vert,
$$
by (i), there exists $A_P\in \cB(\cH_1)$ such that
\be
\Vert A_P\otimes P - (\idty\otimes P)A(\idty\otimes P)\Vert \leq \epsilon\,.
\label{PQ}\ee
Explicitly, if $\chi_1,\ldots,\chi_n$ is an o.n. basis of $\ran\ P$, the construction in part (i) provides
$$
A_P=\frac{1}{n}\sum_{k=1}^n A_{\chi_k}, \quad \mbox{and }\Vert A_P\Vert \leq \Vert A\Vert\,.
$$
The diameter of the convex hull of $\{ A_\chi \mid \chi\in\cH_2, \Vert \chi\Vert =1\}$ is
bounded by $\epsilon$ due to (\ref{diam}). It follows that for any two finite-dimensonial
projections $P,Q$ on $\cH_2$
$$
\Vert A_P - A_Q\Vert \leq \epsilon\,.
$$
Now, we prove the bound:
$$
\Vert A_P\otimes \idty -A\Vert \leq 2\epsilon
$$
by contradiction. Suppose that for some $P$, 
$\Vert A_P\otimes \idty -A\Vert > 2\epsilon$. Then, there exists $\delta >0$
such that $\Vert A_P\otimes \idty -A\Vert > 2\epsilon+\delta$. Therefore, there exist
$\phi,\psi\in \cH_\Lambda$, $\Vert \phi\Vert = \Vert \psi\Vert =1$, such that
$$
\vert\langle \phi ,(A_P\otimes \idty -A)\psi\rangle\vert > 2\epsilon+\frac{\delta}{2}\, .
$$
Let $Q$ be a finite-dimensional projection on $\cH_2$ such that
$$
\Vert (\idty-\idty\otimes Q)\phi\Vert \leq \frac{\delta}{8\Vert A\Vert}, \quad\mbox{and }
\Vert (\idty-\idty\otimes Q)\psi\Vert \leq \frac{\delta}{8\Vert A\Vert}\, .
$$
Then,
$$
\vert\langle \phi,(\idty\otimes Q)(A_P\otimes \idty)(\idty\otimes Q)\psi\rangle
- \langle\phi, (\idty\otimes Q)A(\idty\otimes Q)\psi\rangle\vert
> 2\epsilon +\frac{\delta}{2}-4\frac{\delta}{8\Vert A\Vert}\Vert A\Vert\, .
$$
Since $\Vert A_P-A_Q\Vert\leq \epsilon$, this implies
$$
\vert\langle\phi,(A_Q\otimes Q - (\idty\otimes Q)A(\idty\otimes Q))\psi\rangle\vert > \epsilon\,.
$$
which contradicts (\ref{PQ}).

To conclude the proof, note that for a density matrix in diagonal form, 
$\rho=\sum_k \rho_k \ketbra{\xi_k}{\xi_k}$, we have that $\id\otimes\rho(A)=\sum_k \rho_k
A_{\xi_k}$. Therefore we have
$$
\Vert \Pi(A)-A\Vert = \Vert \sum_k \rho_k A_{\xi_k}\otimes\idty-A\Vert
\leq \sum_k \rho_k \Vert  A_{\xi_k}\otimes\idty-A\Vert\leq \sum_k \rho_k 2\epsilon=2\epsilon.
$$
\end{proof}

We now explain a {\em local perturbations perturb locally} principle that applies in general
to any states corresponding to an isolated part of the spectrum of a system of which the 
dynamics has a quasi-locality property expressed by an estimate of Lieb-Robinson type. 
The basic argument can be applied for finite systems or for infinite systems in a suitable
representation. For the sake of presentation, we consider a systems defined on a 
metric graph $(\Gamma, d)$. To each site $x \in \Gamma$, we associate a Hilbert space 
$\mathcal{H}_x$. For finite $\Lambda \subset \Gamma$, we define
\begin{equation} \label{eq:hilale}
\mathcal{H}_{\Lambda} = \bigotimes_{x \in \Lambda} \mathcal{H}_x \quad \mbox{and} \quad
\mathcal{A}_{\Lambda} = \bigotimes \mathcal{B}( \mathcal{H}_x) 
\end{equation}
where $\mathcal{B}( \mathcal{H}_x)$ denotes the bounded linear operators over $\mathcal{H}_x$. 
There is a natural way to identify $\mathcal{A}_{\Lambda_0} \subset \mathcal{A}_{\Lambda}$; 
namely identify each $A \in \mathcal{A}_{\Lambda_0}$ with 
$A \otimes \idty_{\Lambda \setminus \Lambda_0} \in \mathcal{A}_{\Lambda}$.
We can then inductively define
\begin{equation} \label{eq:localg}
\mathcal{A}_{\rm loc} = \bigcup_{\Lambda \subset \Gamma} \mathcal{A}_{\Lambda}
\end{equation}
 where the union is taken over all finite subsets of $\Gamma$. The completion
 of $\mathcal{A}_{\rm loc}$ with respect to the operator norm is a $C^*$-algebra,
 which we will assume to be represented on a Hilbert space and assume that 
a family of Hamiltonians of the form $H(s) = H(0) + \Phi(s)$ on this space satisfies
Assumption~\ref{Assump:Hbd}. Additionally, we assume that the Heisenberg dynamics
$\tau_t^{H(s)}$, generated by $H(s)$, satisfies a Lieb-Robinson bound uniform in $s$.
\begin{assumption}
\label{Assump:LRBound}
There are constants $C(A,B)$, $a>0$ and a
Lieb-Robinson velocity $v\geq 0$ such that for all $s\in[0,1]$
$$
\Vert [\tau^{H(s)}_t(A),B]\Vert\leq C(A,B) \ep{- a \big(d(\supp A,\supp B)-v\vert t\vert\big)}
$$
Here, $C(A,B)$ is of a suitable form such as $C\Vert A\Vert\,\Vert B\Vert \min(\vert\supp A\vert,
\vert\supp B\vert)$.
\end{assumption}
Furthermore, we assume that there is a fixed finite subset $X\subset\Gamma$ such that $\Phi'(s)\in
\cA_{X}$ and
\begin{equation}
\| \Phi' \| = \sup_{0 \leq s \leq 1} \| \Phi'(s) \| < \infty.
\end{equation}

The generator $D(s)$ defined in \eq{D} and \eq{eq:D_S} for the local perturbation $\Phi(s)$ 
is not strictly local. However, the fast decay of the weight 
function $W_\gamma(t)$ in combination with Assumption~\ref{Assump:LRBound}
imply that the effect of $D(s)$ is small far away from $X$. To make this precise, let $R>0$, 
and denote by $X_R$ the following `fattening' of $X$:
\begin{equation}
\label{FatX}
X_R=\{x\::\:\exists y\in X \mbox{ s.t. } d(x,y)\leq R\}\,.
\end{equation}
The following result shows that in the situation described above the unitary  $U(s)$
of \eq{PUPU} in Proposition~\ref{prop:main} can be well-approximated by
a unitary $V_R(s)\in\cA_{X_R}$, i.e., with support in $X_R$.
\begin{thm}[Local Perturbations Perturb Locally]
\label{thm:lppl}
For any $R>0$, there exist unitary operators $V_R(s)$ with $\supp(V_R(s)) \subset X_R$ and a constant 
$C$, independent of $R$, such that
$$
\Vert U(s)-V_R(s)\Vert \leq C G^{(I)}(\frac{\gamma R}{2v})
$$
with $G^{(I)}$ the subexponential function defined in Lemma~\ref{lemma:Bigwdecay}. Consequently, we also have
\begin{equation}
\label{MainApprox}
\Vert P(1) - V_R(1) P(0) V_R(1)\str\Vert \leq 2C G^{(I)}(\frac{\gamma R}{2v}).
\end{equation}
\end{thm}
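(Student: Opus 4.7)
The plan is to build a local approximation $V_R(s)$ by first approximating the generator $D(s)$ from Corollary 2.7 by a bounded self-adjoint operator $D_R(s)$ supported in $X_R$, and then letting $V_R(s)$ be the unitary flow it generates. Concretely, using a normal state $\rho$ on $\mathcal{B}(\mathcal{H}_{X_R^c})$, I would set $\Pi_R = \id \otimes \rho$ and define
\begin{equation*}
D_R(s) = \int_{-\infty}^\infty dt\, W_\gamma(t)\, \Pi_R\bigl(\tau_t^{H(s)}(\Phi'(s))\bigr), \qquad
-\iu \frac{d}{ds}V_R(s) = D_R(s) V_R(s), \quad V_R(0) = \idty.
\end{equation*}
Since $W_\gamma$ is real and $\tau_t^{H(s)}(\Phi'(s))$ is self-adjoint, and $\Pi_R$ preserves self-adjointness (as $\rho$ is a state), $D_R(s)$ is self-adjoint and $V_R(s)$ is unitary with $\supp V_R(s)\subset X_R$.

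The core estimate is on $\|D(s)-D_R(s)\|$. Since $\Phi'(s)\in\mathcal{A}_X$ and $B\in\mathcal{A}_{X_R^c}$ satisfies $d(\supp B, X)\geq R$, Assumption 4.1 gives a Lieb--Robinson bound of the form $\|[\tau_t^{H(s)}(\Phi'(s)),B]\|\leq C_0 \|\Phi'\|\,|X|\,\|B\|\,\ep{-a(R-v|t|)}$. Lemma 3.2 then yields
\begin{equation*}
\bigl\|\Pi_R(\tau_t^{H(s)}(\Phi'(s))) - \tau_t^{H(s)}(\Phi'(s))\bigr\| \leq 2 C_0 \|\Phi'\|\,|X|\,\ep{-a(R-v|t|)},
\end{equation*}
together with the a priori bound $2\|\Phi'\|$ which is always available. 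The main idea is then to split the $t$-integral at $|t|=R/(2v)$: on the inner region the Lieb--Robinson estimate contributes $\ep{-aR/2}$ multiplied by $\|W_\gamma\|_1\leq K/\gamma$ from Lemma 2.5(iii); on the outer region the trivial bound times $2\int_{R/(2v)}^\infty |W_\gamma(t)|\,dt$ is, by definition of $I_\gamma$ and Lemma 2.5(iv), controlled by $G^{(I)}(\gamma R/(2v))$. Since $G^{(I)}$ decays subexponentially and is already slower than any exponential, the inner contribution is absorbed into a constant multiple of the outer one, giving $\|D(s)-D_R(s)\|\leq C_1\, G^{(I)}(\gamma R/(2v))$ uniformly in $s$.

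From here the passage to $U(s)$ is standard Duhamel: differentiating $V_R(s)^* U(s)$ shows
\begin{equation*}
U(s) - V_R(s) = \iu \int_0^s V_R(s) V_R(u)^* (D(u)-D_R(u)) U(u)\, du,
\end{equation*}
so $\|U(s)-V_R(s)\|\leq \sup_u \|D(u)-D_R(u)\| \leq C\, G^{(I)}(\gamma R/(2v))$. Finally, $P(1)=U(1)P(0)U(1)^*$ by Proposition 2.3, and a telescoping $\|AXA^* - BXB^*\|\leq (\|A\|+\|B\|)\|X\|\,\|A-B\|$ argument with unitary $A=U(1)$, $B=V_R(1)$ delivers \eqref{MainApprox} with the factor $2$.

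The main obstacle is the balancing step: the Lieb--Robinson bound deteriorates linearly in $|t|$ inside the exponent, so one must cut the time integral at a radius compatible with the subexponential decay of $W_\gamma$. The specific cutoff $|t|=R/(2v)$ is tuned so that the inner error $\ep{-aR/2}$ is dominated by $G^{(I)}(\gamma R/(2v))$; any slower decay of $W_\gamma$ than what Lemma 2.4 provides would force a smaller cutoff and the two contributions would no longer combine cleanly. The construction of $D_R(s)$ using $\Pi_R$ rather than a normalized partial trace is what makes the argument go through in the infinite-dimensional setting, which is precisely the role of Lemma 3.2.
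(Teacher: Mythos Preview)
Your proof is correct and follows essentially the same route as the paper: the same definition of $D_R(s)$ via $\Pi_R$, the same time-splitting at $T=R/(2v)$ combining Lemma~\ref{lma:Approx} with the Lieb--Robinson bound on the inner region and the decay of $W_\gamma$ (via $I_\gamma$) on the outer region, and the same passage from the generator estimate to $\|U(s)-V_R(s)\|$ by integrating the flow. The only cosmetic difference is that on the inner region the paper pulls out $\|W_\gamma\|_\infty$ and integrates the Lieb--Robinson factor, whereas you pull out the Lieb--Robinson factor and use $\|W_\gamma\|_1$; both yield an $e^{-aR/2}$ term that is absorbed into $G^{(I)}(\gamma R/(2v))$.
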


\begin{proof}
We begin by defining a local approximation of the self-adjoint generator
$D(s)$ starting from \eq{eq:D_S}. Consider the decomposition 
$\cA_{\rm loc}= \cA_{X_R}\otimes  \cA_{\Gamma\setminus X_R}$ and let $\Pi_R=\id\otimes
\rho$ for some state $\rho$ on $\cA_{\Gamma\setminus X_R}$, and define
$$
D_R(s) = \int_{-\infty}^\infty dt\, W_\gamma(t) \Pi_R( \ep{\iu t H(s)} \Phi^\prime(s) \ep{-\iu t H(s)})\,.
$$
Then, for any $T>0$ we have the following estimate:
$$
\Vert D(s)-D_R(s)\Vert
\leq \| \Phi ' \| \int_{\vert t\vert > T} \!dt\,\vert W_\gamma(t)\vert
  + \Vert W_\gamma\Vert_\infty
\int_{-T}^T \!dt\,  \Vert (\id -\Pi_R)( \ep{\iu t H(s)} \Phi^\prime(s) \ep{-\iu t H(s)})\Vert\,.
$$
For the first term, we apply the bound of Lemma~\ref{lemma:Bigwdecay} part (iv) and for the 
second term we use \eq{Wapriori} and Lemma~\ref{lma:Approx} and 
Assumption~\ref{Assump:LRBound} to get
$$
\Vert D(s)-D_R(s)\Vert\leq \Vert \Phi^\prime\Vert 2C G^{(I)}(\gamma T)
+ \frac{1}{2} C\Vert \Phi^\prime\Vert \vert X\vert e^{- a(R-vT)}.
$$
For the simple choice $T=R/(2v)$, for not too small $R$, the second term is 
negligible compared to the first , and we obtain
\begin{equation}
\Vert D(s)-D_R(s)\Vert\leq C^\prime \Vert \Phi^\prime\Vert G^{(I)}(\frac{\gamma R}{2v}).
\label{local_D}\end{equation}
Now, let $V_R(s)$ be solution of
$$
-i\frac{\partial}{\partial s} V_R(s)= D_R(s) V_R(s)\,,\qquad V_R(0)=\idty.
$$
The claim follows by integrating the estimate \eq{local_D}.
\end{proof}

To illustrate this result, we consider the case where the isolated part of the spectrum, 
$\Sigma_1(s)$ in Assumption 2.1, consists
of a non-degenerate ground state energy. Let $\psi_0(s)$ denote the
corresponding normalized eigenvector and let $A\in\cA_{\Lambda\setminus X_R}$ 
be an observable supported away from the perturbation, whence $[A, V_R] = 0$. 
By applying Theorem~\ref{thm:lppl} we immediately obtain
\begin{eqnarray*}
\vert \langle \psi(s), A\psi(s)\rangle - \langle \psi(0), A\psi(0)\rangle\vert
&=&  \vert \langle \psi(0), U(s)^*[A, U(s)]\psi(0)\rangle\vert\\
&=&  \vert \langle \psi(0), U(s)^*[A, U(s)-V_R(s)]\psi(0)\rangle\vert\\
&\leq& 2\Vert A\Vert \Vert U(s)-V_R(s)\Vert \leq 2C \Vert A\Vert
G^{(I)}(\frac{\gamma R}{2v})
\end{eqnarray*}
This estimate clearly expresses the locality of the effect of the perturbation on the state $\psi(s)$.

%%%%%%%%%%%%%%%%%%%%%%%%%%%%%%%%%%%%%%%%%%%%%%%%%%%%%%%%%%%%%%%%%%%%%%%%%%%%%%%%%%%%%%%%%%%%%%%%%%%%%%%%

\section{The spectral flow and quasi-locality}
\label{Sec:LRforD}

The main goal of this section is to prove that the spectral flow defined in terms of the
unitary operators $U(s)$, as in Proposition~\ref{prop:main}, satisfies a Lieb-Robinson bound. 
This is the content of Theorem~\ref{thm:lrbpe} below. In Section~\ref{sec:lrbprop}, we introduce
the basic models to which our result applies and state Theorem~\ref{thm:lrbpe}. Our
proof of Theorem~\ref{thm:lrbpe} demonstrates that the claimed estimate 
follows from a Lieb-Robinson bound for time-dependent interactions. We state and
prove a general result of this type, see Theorem~\ref{thm:lrbtdints}, in Section~\ref{sec:lrbtdints}. 
The remainder of Section~\ref{Sec:LRforD}
is used  prove that Theorem~\ref{thm:lrbtdints} is applicable in the context of the spectral flow.
Section~\ref{sec:lemma} contains a technical lemma, and Section~\ref{sec:proof} finishes
the proof.

\subsection{The set-up and a statement of the main result} \label{sec:lrbprop}

The arguments we provide in Section~\ref{Sec:LRforD} apply to a
large class of models. In this subsection, we describe in detail the
assumptions necessary to prove a Lieb-Robinson bound
for the spectral flow.

We will consider models defined on a countable set $\Gamma$ equipped with a metric $d$.
Typically, $\Gamma$ will be infinite, e.g., $\Gamma = \mathbb{Z}^{\nu}$.
In the case that $\Gamma$ is infinite, we require some assumptions
on the structure of $\Gamma$ as a set. First, we will assume a uniform bound on the
rate at which balls grow, i.e., we assume there exist numbers $\kappa >0$ and
$\nu >0$ for which 
\begin{equation} \label{eq:ballgrow}
\sup_{x \in \Gamma} |B_r(x) | \leq \kappa r^{\nu} \, ,
\end{equation}
where $|B_r(x)|$ is the cardinality of the ball centered at $x$ of radius $r$.
In addition, we will assume that $\Gamma$ has some 'integrable' underlying 
structure. We express this property in terms of a non-increasing, real-valued 
function $F: [0, \infty) \to (0, \infty)$ that satisfies
\newline i) {\it uniform integrablility}: i.e. 
\begin{equation} \label{eq:uniint}
\| F \| = \sup_{x \in \Gamma} \sum_{y \in \Gamma} F(d(x,y)) \, < \, \infty \, 
\end{equation}
and
\newline ii) {\it a convolution condition}: i.e., there exists a number $C_F>0$ such that
given any pair $x,y \in \Gamma$, 
\begin{equation} \label{eq:convc}
\sum_{z \in \Gamma} F(d(x,z)) F(d(z,y) \leq C_F F(d(x,y)) \, .
\end{equation} 
For the case of $\Gamma = \mathbb{Z}^{\nu}$, one possible choice of $F$ is given by 
$F(r) = (1+r)^{-(\nu + 1)}$. The corresponding convolution constant 
may be taken as $C_F = 2^{\nu+1} \sum_{x \in \Gamma}F(|x|)$. 

Lastly, we need an assumption on the rate at which $F$ goes to zero. It is
convenient to express this in terms of the sub-exponential function $u_a$ introduced
in Lemma~\ref{lem:ua}. We suppose that there exists a number $0< \delta < 2/7$ such that
\begin{equation} \label{eq:Fgrowth}
\sup_{r \geq 1} \frac{u_{\delta}(r)}{F(r)} < \infty \, .
\end{equation}
Clearly, if $\Gamma = \mathbb{Z}^{\nu}$ and $F(r) =(1+r)^{-(\nu+1)}$, then (\ref{eq:Fgrowth}) holds
for every $0 < \delta <2/7$.

The following observations will be useful. 
Let $F:[0, \infty) \to (0, \infty)$ be a non-increasing function satisfying \eq{eq:uniint}  and \eq{eq:convc}.
For each $a \geq 0$, the function $F_a(r) =e^{-ar} F(r)$
also satisfies the properties \eq{eq:uniint}  and \eq{eq:convc} with $\Vert F_a\Vert \leq \Vert F \Vert$ and $C_{F_a}\leq C_F$.
In fact, more generally, if  $g$ is positive, non-increasing, and logarithmically super-additive, 
i.e., $g(x+y)\geq g(x)g(y)$, then $F_g(r) = g(r) F(r)$ satisfies (\ref{eq:uniint}) and (\ref{eq:convc}) with $\|F_g\| \leq g(0) \|F\|$ 
and $C_{F_g} \leq C_F$. For brevity we will write $F_a$ to denote the case $g(r)=e^{-ar}$.
Other functions $g$ will be used later.

Recall the general quantum systems corresponding to $\Gamma$ on which our models
will be defined. As in Section~\ref{Sec:LocalPert}, we associate a Hilbert space $\mathcal{H}_{\Lambda}$
and an algebra of observables $\mathcal{A}_{\Lambda}$ to each finite set $\Lambda \subset \Gamma$,
see (\ref{eq:hilale}), and similarly define $\mathcal{A}_{\rm loc}$ as in (\ref{eq:localg}). 
In this case, the models we consider are comprised of two types of terms.
First, we fix a collection of Hamiltonians, which we label by
$\left( H_{\Lambda}(0) \right)_{\Lambda}$, with the property that for each finite $\Lambda \subset \Gamma$,
$H_{\Lambda}(0)$ is a densely defined, self-adjoint operator on $\mathcal{H}_{\Lambda}$.  
Next, we consider a family of interactions $\Phi(s)$ parametrized by a real number $s$.
For each $s$, the interaction $\Phi(s)$ on $\Gamma$ is a mapping from the set of
finite subsets of $\Gamma$ into $\mathcal{A}_{\rm loc}$ with the
property that $\Phi(X,s)^* = \Phi(X,s) \in \mathcal{A}_X$ for all
finite $X \subset \Gamma$. It is convenient to write $\Phi(X,s) = \Phi_X(s)$. 
A model then consists of a choice of
$\left( H_{\Lambda}(0) \right)_{\Lambda}$ and a family of interactions $\Phi(s)$ over $\Gamma$.
Given a model, we associate local Hamiltonians to each finite set 
$\Lambda \subset \Gamma$ by setting
\begin{equation} \label{eq:locham}
H_{\Lambda}(s) = H_{\Lambda}(0) + \sum_{X \subset \Lambda} \Phi_X(s)
\end{equation}
where the sum is taken over all subsets $X \subset \Lambda$. For notational
consistency, we will assume that $\Phi_X(0) = 0$ for all $X$.
With $s$ fixed, the sum in (\ref{eq:locham}) above is finite for each such $\Lambda \subset \Gamma$, and
thus self-adjointness guarantees the existence of the Heisenberg dynamics, i.e., 
\begin{equation}
\tau_t^{H_{\Lambda}(s)}(A) = e^{itH_{\Lambda}(s)} A e^{-itH_{\Lambda}(s)}  \quad \mbox{for all } A \in \mathcal{A}_{\Lambda} \, \mbox{ and } t \in \mathbb{R},
\end{equation}
which, again for fixed $s$, is a one-parameter group of automorphisms on $\mathcal{A}_{\Lambda}$. 

To prove the results in this section, we need a boundedness assumption on the family of
interactions. We make this precise by introducing a norm on the interactions $\Phi(s)$ over $\Gamma$, with
respect to any non-increasing, positive function $F$ satisfying \eq{eq:uniint}  and \eq{eq:convc}, as follows:
\begin{equation}\label{Fnorm}
\| \Phi \|_F = \sup_{x,y \in \Gamma} \frac{1}{F(d(x,y))} 
\sum_{\stackrel{Z \subset \Gamma:}{x,y \in Z}} \sup_{s} \| \Phi_Z(s) \| \, < \infty \, .
\end{equation}
The sum above is over all finite sets $Z \subset \Gamma$ containing $x$ and $y$, and we will often abbreviate $\Vert \cdot\Vert_{F_a}$ by  $\Vert \cdot\Vert_a$. On occasion,
we will use $\| \Phi(s) \|_F$ for the norm of $\Phi(s)$ at fixed $s$, i.e., the norm
defined by dropping the supremum over $s$ in \eq{Fnorm}.
The following lemma states some simple bounds in terms of $\Vert \Phi\Vert_F$ that 
we will frequently use.
\begin{lemma}\label{lem:simple_bounds}
Let $\Phi(s)$ be a family of interactions over $\Gamma$ for which $\Vert \Phi\Vert_F <\infty$ for
some non-increasing, positive function $F$ satisfying \eq{eq:uniint}  and \eq{eq:convc} above.
Then, for any finite $\Lambda \subset \Gamma$, we have
\bea
\sum_{\substack{X \subset \Lambda: \\ x\in X}}\Vert \Phi_X(s)\Vert&\leq &F(0)\Vert\Phi\Vert_F\\
\sum_{X\subset\Lambda}\Vert\Phi_X(s)\Vert&\leq& F(0)\Vert \Phi\Vert_F \vert\Lambda\vert\,.
\eea
\end{lemma}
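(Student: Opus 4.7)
The plan is to derive both bounds directly from the definition of the norm $\Vert\Phi\Vert_F$ together with the monotonicity of $F$ and the fact that $\Lambda\subset\Gamma$ is finite. Neither estimate requires the convolution condition \eqref{eq:convc} or the uniform integrability \eqref{eq:uniint}; the only property of $F$ used is positivity, and the only property of the norm used is its defining supremum.

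For the first inequality, I would specialize the defining supremum for $\Vert\Phi\Vert_F$ to the diagonal choice $y=x$. Then $F(d(x,y))=F(0)$ and the sum $\sum_{Z\subset\Gamma:\,x,y\in Z}$ becomes simply a sum over finite sets $Z\subset\Gamma$ containing the single point $x$. By definition this gives
\[
\sum_{\substack{Z\subset\Gamma\\ x\in Z}}\sup_s\Vert\Phi_Z(s)\Vert\ \leq\ F(0)\Vert\Phi\Vert_F.
\]
Dropping from $\Gamma$ down to any finite $\Lambda\subset\Gamma$ (and dropping the supremum over $s$) only decreases the left-hand side, yielding the first displayed bound.

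For the second inequality, I would use the standard counting trick $\sum_{X\subset\Lambda}a_X\leq\sum_{x\in\Lambda}\sum_{X\subset\Lambda:\,x\in X}a_X$, which is valid for any non-negative numbers $a_X$ indexed by non-empty subsets (each non-empty $X$ is counted $|X|\geq 1$ times on the right). Applied to $a_X=\Vert\Phi_X(s)\Vert$ and combined with the first bound, this gives
\[
\sum_{X\subset\Lambda}\Vert\Phi_X(s)\Vert\ \leq\ \sum_{x\in\Lambda}\sum_{\substack{X\subset\Lambda\\ x\in X}}\Vert\Phi_X(s)\Vert\ \leq\ |\Lambda|\,F(0)\Vert\Phi\Vert_F,
\]
as claimed. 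There is no genuine obstacle here; the only small subtlety worth flagging is the remark that sets $X$ with $\Phi_X(s)=0$ (in particular $X=\emptyset$, for which no interaction is defined) contribute nothing to either side, so the counting trick applies without fuss.
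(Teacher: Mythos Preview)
Your proof is correct and follows essentially the same approach as the paper: specialize the defining supremum in $\Vert\Phi\Vert_F$ to the diagonal $y=x$ to obtain the first bound, then use the overcounting inequality $\sum_{X\subset\Lambda}\leq\sum_{x\in\Lambda}\sum_{X\ni x}$ to deduce the second. The paper phrases the first step as a $\sup_y$ followed by monotonicity of $F$, but this amounts to the same specialization you make directly.
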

\begin{proof}
For $x\in \Gamma$ we have
$$
\sum_{\substack{X \subset \Lambda: \\ x\in X}}\Vert \Phi_X(s)\Vert
\leq\sup_{y\in \Gamma} F(d(x,y))
\sum_{\stackrel{X \subset \Gamma:}{x,y \in X}} \frac{\Vert \Phi_X(s)\Vert}{F(d(x,y))}\\
\leq F(0)\Vert\Phi\Vert_F
$$
where we have used the definition of the norm \eq{Fnorm} and the monotonicity of $F$.
Using this estimate, for any finite subset $\Lambda\subset \Gamma$, we obtain the bound
$$
\sum_{X\subset\Lambda}\Vert\Phi_X(s)\Vert\leq\sum_{x\in\Lambda}
\sum_{\substack{X \subset \Lambda: \\ x\in X}}\Vert \Phi_X(s)\Vert\leq F(0)\Vert \Phi\Vert_F \vert\Lambda\vert\,.
$$
\end{proof}

We will also require the interactions to be smooth with bounded derivatives. More concretely,
let $\Phi(s)$ be a family of interactions over $\Gamma$ for which, given any finite $X \subset \Gamma$, 
$\Phi_X(s)$ is differentiable with respect to $s$. In this case, we define a corresponding family of 
interaction $\partial\Phi(s)$ over $\Gamma$ by the the formula
$$
\partial\Phi_X(s)=\vert X\vert \Phi_X'(s)\,  \quad \mbox{for each finite } X \subset \Gamma.
$$

We now state the main assumptions of this section.

\begin{assumption} \label{ass:dphi}
We will assume that the interactions $\Phi(s)$ are differentiable with respect to $s$. More specifically, 
we assume that for each finite $X \subset \Gamma$, $\Phi_X'(s) \in \mathcal{A}_X$ for all $s$.
In addition, we suppose a uniform estimate on the norms of these derivatives as $s$ varies in
compact sets. For concreteness, we will assume that the domain of $s$-values is  $[0,1]$, 
and suppose that there exists a number $a>0$ for which
$$
\Vert \partial\Phi\Vert_a <\infty.
$$
\end{assumption}

\begin{assumption} \label{ass:unigap}
We will assume that for every finite $\Lambda \subset \Gamma$, the local Hamiltonian $H_{\Lambda}(s)$ 
has a spectrum that is uniformly gapped. More precisely, the spectrum of $H_{\Lambda}(s)$, which we will denote
by $\Sigma^{(\Lambda)}(s)$, can be decomposed into two non-empty sets: $\Sigma^{(\Lambda)}(s) = \Sigma^{(\Lambda)}_1(s) \cup \Sigma^{(\Lambda)}_2(s)$  
with $d( \Sigma^{(\Lambda)}_1(s), \Sigma^{(\Lambda)}_2(s) ) \geq \gamma >0 $. In particular, the positive number $\gamma$
is independent of $s \in [0,1]$ and finite $\Lambda \subset \Gamma$. We also suppose that there exist
intervals $I(s)$, with endpoints depending smoothly on $s$, for which $\Sigma^{(\Lambda)}_1(s) \subset I(s)$. 
\end{assumption}
In typical applications, the set $\Sigma^{(\Lambda)}_1(s)$ will consist of the ground state and (possibly) 
other low-lying energies, but this is not necessary.

Given Assumptions~\ref{ass:dphi} and \ref{ass:unigap}, the results of Section~\ref{sec:Evolution} apply to the local
Hamiltonians $H_{\Lambda}(s)$. We need a further assumption in order to state the main result of this
section.

\begin{assumption} \label{ass:unilrb}
We will assume a uniform, exponential Lieb-Robinson bound. In fact, we assume that there exists an $a>0$ and
numbers $K_a$ and $v_a$ such that 
\begin{equation} \label{eq:unilrb}
\left\| \left[ \tau_t^{H_{\Lambda}(s)}(A), B \right] \right\| \leq K_a \| A \| \| B \| e^{a v_a |t|} \sum_{x \in X, y \in Y} F_a(d(x,y))
\end{equation}
holds for all $A \in \mathcal{A}_X$, $B \in \mathcal{A}_Y$, and $t \in \mathbb{R}$. Here, as above, $F_a(r) =e^{-ar}F(r)$,
and we stress that the numbers $K_a$ and $v_a$ are each independent of both $\Lambda$ and $s$. 
\end{assumption}
Estimates of the form (\ref{eq:unilrb}) have been demonstrated for a number of models, see e.g. \cite{nach2010},
and references therein, for a recent review. Here we assume it holds for a class of models, and as a consequence, 
we get Theorem~\ref{thm:lrbpe} below.

As indicated above, given Assumptions~\ref{ass:dphi} and \ref{ass:unigap}, the results of Proposition~\ref{prop:main}
apply to $H_{\Lambda}(s)$ for each finite $\Lambda \subset \Gamma$ and $s \in [0,1]$. 
In this case, there are unitaries $U_{\Lambda}(s)$ in terms of which we define the
following {\it spectral flow}:
\begin{equation} \label{automorphism}
\alpha_s^{\Lambda}(A) = U_{\Lambda}(s)^* A U_{\Lambda}(s) \quad \mbox{for all } A \in \mathcal{A}_{\Lambda} \quad \mbox{and } 0 \leq s \leq 1 \, .
\end{equation}

The main result of this section is a Lieb-Robinson bound for the spectral flow, which
is formulated with the aid of a function $F_\Psi$ defined as follows:
\begin{equation}\label{eq:definitionFpsi}
F_{\Psi}(r) = \tilde{u}_{\mu} \left( \frac{\gamma}{8 v_a} r \right) F \left( \frac{\gamma}{8 v_a} r  \right) \, ,
\end{equation} 
where 
\begin{equation}
\tilde{u}_{\mu}(x) = \left\{ \begin{array}{cc} u_{\mu}(e^2) & \mbox{for } 0 \leq r \leq e^2, \\ u_{\mu}(x) & \mbox{otherwise}. \end{array} \right.
\end{equation}
Since $F$ is uniformly integrable over $\Gamma$ and $\tilde{u}_{\mu}(r) \leq 1$, $F_{\Psi}$ satisfies (\ref{eq:uniint}).
Moreover, $F_{\Psi}$ also satisfies (\ref{eq:convc}). In fact, it is easy to check that 
$\tilde{u}_{\mu}$ is positive, non-increasing, and logarithmically super-additive.
The Lieb-Robinson velocity in the following theorem also involves the norm 
$\Vert \Psi\Vert_{F_\Psi}$ of an interaction $\Psi$ defined later in this section (see 
\eq{eq:defPsiLambda}).

\begin{thm} \label{thm:lrbpe} Let Assumptions~\ref{ass:dphi}, \ref{ass:unigap}, and \ref{ass:unilrb} hold.
Then,
\begin{equation} \label{eq:lrbspfl}
\left\|  \left[ \alpha_s^{\Lambda}(A), B \right] \right\| \leq 2 \| A \| \| B \| \min \left[ 1, g(s) \sum_{x \in X, y \in Y} F_{\Psi}(d(x,y)) \right] ,
\end{equation}
for any $A \in \mathcal{A}_X$, $B \in \mathcal{A}_Y$, and $0 \leq s \leq 1$ and $g$ is
given by
\begin{equation}
C_{F_{\Psi}} \cdot g(t) = \left\{ \begin{array}{cc}  e^{2 \| \Psi \|_{F_\Psi} C_{F_\Psi} |t|} - 1  & \mbox{if } d(X,Y) >0, \\
e^{2 \| \Psi \|_{F_\Psi} C_{F_\Psi} |t|} & \mbox{otherwise}. \end{array} \right.
\end{equation}
The number $C_{F_\Psi}$ is as in (\ref{eq:convc}) and our estimate on $\| \Psi \|_{F_\Psi}$ 
is discussed in the next subsections.
\end{thm}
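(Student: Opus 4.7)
The strategy is to recognize the spectral flow $\alpha_s^{\Lambda}$ as the Heisenberg-type evolution generated by a \emph{time-dependent interaction} $\Psi(s)$ over $\Gamma$, and then to invoke the general Lieb--Robinson bound for time-dependent interactions (Theorem~\ref{thm:lrbtdints}). By Proposition~\ref{prop:main} and Corollary~\ref{cor:DW}, for every finite $\Lambda \subset \Gamma$ the unitary $U_\Lambda(s)$ satisfies $-\iu \frac{d}{ds} U_\Lambda(s) = D_\Lambda(s) U_\Lambda(s)$ with
\begin{equation*}
D_\Lambda(s) = \int_{-\infty}^\infty \!\!dt\, W_\gamma(t)\, \tau^{H_\Lambda(s)}_t\!\left(H_\Lambda'(s)\right) \;=\; \sum_{X\subset \Lambda} \int_{-\infty}^\infty \!\!dt\, W_\gamma(t)\, \tau^{H_\Lambda(s)}_t\!\left(\Phi_X'(s)\right).
\end{equation*}
Thus it suffices to exhibit an interaction $\Psi(s)$ over $\Gamma$ with $D_\Lambda(s) = \sum_{Z\subset \Lambda} \Psi_Z(s)$, $\Psi_Z(s)\str=\Psi_Z(s)\in\cA_Z$, and with finite $F_\Psi$-norm; the theorem will then follow directly from Theorem~\ref{thm:lrbtdints} applied to the dynamics $s\mapsto U_\Lambda(s)$.

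To construct $\Psi$, I will localize each integrand $K_X(s) := \int W_\gamma(t)\, \tau^{H_\Lambda(s)}_t(\Phi_X'(s))\, dt$ by telescoping over the fattenings $X(n)=\{y\in\Lambda : d(y,X)\leq n\}$ introduced in Section~\ref{Sec:LocalPert}. Fix a product state and the corresponding partial-expectation map $\Pi_{X(n)} = \id_{\cA_{X(n)}}\otimes \rho$ as in Lemma~\ref{lma:Approx}. Define
\begin{equation*}
K_X^{(n)}(s) = \int_{-\infty}^\infty \!\!dt\, W_\gamma(t)\, \Pi_{X(n)}\!\left(\tau^{H_\Lambda(s)}_t(\Phi_X'(s))\right), \qquad n\geq 0,
\end{equation*}
which lies in $\cA_{X(n)}$ and, by the argument used in Theorem~\ref{thm:lppl} (splitting the time integral at $|t|=n/(2v_a)$ and using the Lieb--Robinson bound of Assumption~\ref{ass:unilrb} on the near-field and the decay of $W_\gamma$ from Lemma~\ref{lemma:Bigwdecay}(iv) on the far-field), satisfies $\| K_X^{(n)}(s)-K_X(s)\| \lesssim |X|\, \|\Phi_X'(s)\|\, G^{(I)}(\gamma n /(2v_a))$. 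Telescoping,
\begin{equation*}
K_X(s) = K_X^{(0)}(s) + \sum_{n\geq 1}\left(K_X^{(n)}(s) - K_X^{(n-1)}(s)\right),
\end{equation*}
gives a decomposition of $K_X(s)$ into terms supported on the nested regions $X(n)$. Grouping these shell terms by their support yields an interaction $\Psi(s)$ with
\begin{equation}
\label{eq:defPsiLambda}
\Psi_Z(s) = \sum_{\substack{X,\, n\geq 0:\\ X(n)=Z}} \left(K_X^{(n)}(s) - K_X^{(n-1)}(s)\right), \qquad D_\Lambda(s) = \sum_{Z\subset \Lambda}\Psi_Z(s),
\end{equation}
with the convention $K_X^{(-1)}=0$. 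Each $\Psi_Z(s)$ is self-adjoint and supported in $Z$ by construction.

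The main obstacle is showing that $\|\Psi\|_{F_\Psi}$ is finite and independent of $\Lambda$, which is precisely what the remaining subsections of Section~\ref{Sec:LRforD} are designed to achieve. The mechanism is that the local-approximation error for each shell is controlled by $|X|\,\|\Phi_X'(s)\|$ times the subexponential quantity $G^{(I)}(\gamma n/(2v_a))$, whose decay rate $u_{2/7}$ is strictly faster than the $u_\mu$ entering $F_\Psi$ for any $\mu < 2/7$; the slack $2/7-\mu$ absorbs the polynomial prefactor in $G^{(I)}$ and the volume factor $|X|$ (via Lemma~\ref{lem:simple_bounds} together with the polynomial ball growth \eqref{eq:ballgrow} and the hypothesis $\|\partial\Phi\|_a<\infty$ of Assumption~\ref{ass:dphi}). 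The logarithmic super-additivity of $\tilde u_\mu$ is what ensures that $F_\Psi(r) = \tilde u_\mu(\gamma r/(8v_a)) F(\gamma r/(8v_a))$ still obeys the convolution condition \eqref{eq:convc}, and hence that the $F_\Psi$-norm is the right object to use.

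Finally, with $\Psi(s)$ in hand and $\|\Psi\|_{F_\Psi}<\infty$, the theorem follows by applying Theorem~\ref{thm:lrbtdints} to $\alpha_s^\Lambda$: the resulting estimate gives the Lieb--Robinson bound \eqref{eq:lrbspfl} with the stated velocity factor $g$, the factor $e^{2\|\Psi\|_{F_\Psi} C_{F_\Psi}|s|}-1$ (respectively $e^{2\|\Psi\|_{F_\Psi} C_{F_\Psi}|s|}$) coming from the standard iteration of the commutator bound for disjoint (respectively overlapping) supports, and the minimum with $1$ being the trivial a~priori bound $\|[\alpha_s^\Lambda(A),B]\|\leq 2\|A\|\|B\|$.
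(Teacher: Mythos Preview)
Your proposal is correct and follows essentially the same approach as the paper: you rewrite $D_\Lambda(s)$ as a sum of local terms via the telescoping over conditional expectations $\Pi_{X(n)}$ (your $K_X^{(n)}-K_X^{(n-1)}$ is exactly the paper's $\Delta_\Lambda^n(\Phi_X'(s),s)$), regroup by support to obtain $\Psi_\Lambda(Z,s)$, establish $\|\Psi_\Lambda\|_{F_\Psi}<\infty$ uniformly in $\Lambda$ (the paper's Theorem~\ref{thm:intdec}), and then invoke Theorem~\ref{thm:lrbtdints}. One small notational point: the interaction depends on $\Lambda$ through the dynamics $\tau_t^{H_\Lambda(s)}$, so the paper writes $\Psi_\Lambda$ rather than $\Psi$; what matters, as you note, is that the $F_\Psi$-norm bound is uniform in $\Lambda$.
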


%%%%%%%%%%%%%%%%%%%%%%%

\subsection{Lieb-Robinson bounds for time-dependent interactions} \label{sec:lrbtdints}

The estimate (\ref{eq:lrbspfl}) in the statement of Thereom~\ref{thm:lrbpe} can be understood as
a Lieb-Robinson bound for the spectral flow. In this section, we demonstrate that Lieb-Robinson 
bounds hold for a large class of time-dependent interactions. As in the previous section, we assume that
our models are defined on a countable set $\Gamma$ equipped with a metric. 
Let $\Phi_t$ denote a family of interactions over $\Gamma$, and, for convenience,
we will assume that $t \in [0,1]$.  Thus, for every finite $X \subset \Gamma$ and each $t \in [0,1]$,
$\Phi_t(X)^* = \Phi_t(X) \in \mathcal{A}_X$, and we will often write $\Phi_t(X) = \Phi_X(t)$.

In this case, corresponding to each finite $\Lambda \subset \Gamma$, there is a 
time-dependent local Hamiltonian which we denote by
\begin{equation}
H_{\Lambda}(t) = \sum_{X \subset \Lambda} \Phi_X(t) \, .
\end{equation}
We will assume that, for each finite $\Lambda \subset \Gamma$, $H_{\Lambda}(t)$ is a strongly continuous
map from $[0,1]$ into $\mathcal{A}_{\Lambda}$. In this case, see e.g. Theorem X.69 \cite{reed-simon}, it is well-known that
there exists a two-parameter family of unitary propagators $U_{\Lambda}(t,s)$ with 
\begin{equation} \label{eq:uevo}
\frac{d}{dt} U_{\Lambda}(t,s) = -i H_{\Lambda}(t) U_{\Lambda}(t,s) \quad \mbox{and} \quad U_{\Lambda}(s,s) = \idty \, ,
\end{equation}
the above equation holding in the strong sense. The Heisenberg dynamics corresponding
to $H_{\Lambda}(t)$ is then defined by setting
\begin{equation}
\tau_t^{\Lambda}(A) = U_{\Lambda}(t,0)^*AU_{\Lambda}(t,0) \quad \mbox{for all } A \in \mathcal{A}_{\Lambda} \, .
\end{equation}
The following Lieb-Robinson bound holds.

\begin{thm} \label{thm:lrbtdints} Let $F$ be a non-increasing, positive function satisfying (\ref{eq:uniint}) and (\ref{eq:convc})
and suppose that the interactions $\Phi_t$ satisfy
\begin{equation} \label{eq:tdphinorm}
\| \Phi \|_F = \sup_{x,y \in \Gamma} \frac{1}{F(d(x,y))} \sum_{\stackrel{Z \subset \Gamma:}{x,y \in Z}} \sup_{0 \leq t \leq 1} \| \Phi_Z(t) \| \, < \, \infty \, .
\end{equation}
Then, for any subsets $X,Y \subset \Gamma$, $A \in \mathcal{A}_X$ and $B \in \mathcal{A}_Y$ the
estimate
\begin{equation}
\left\| \left[ \tau_t^{\Lambda}(A), B \right] \right\| \leq 2 \| A \| \| B \| \min \left[ 1, g(t) \sum_{x \in X, y \in Y} F(d(x,y)) \right] ,
\end{equation}
where the function $g$ may be taken as
\begin{equation}
C_{F } \cdot g(t) = \left\{ \begin{array}{cc} e^{2 \| \Phi \|_F C_F |t|} - 1  & \mbox{if } d(X,Y) >0, \\
e^{2 \| \Phi \|_F C_F |t|} & \mbox{otherwise}, \end{array} \right.
\end{equation}
and the number $C_F$ is as in (\ref{eq:convc}).
\end{thm}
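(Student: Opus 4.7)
My plan is to reduce the bound to an integrated Gronwall-type inequality for the auxiliary quantity $C_B(X,t) := \sup_{0 \neq A \in \mathcal{A}_X} \|[\tau_t^\Lambda(A),B]\|/\|A\|$ and then iterate it, using the convolution condition \eqref{eq:convc} to collapse the resulting chain sums into a single geometric series. This is the standard Nachtergaele--Sims proof of the Lieb-Robinson bound adapted to time-dependent interactions; essentially nothing in the algebraic manipulations depends on $H_\Lambda$ being time-independent, but this needs to be verified carefully.

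The first step is to differentiate $[\tau_t^\Lambda(A), B]$ via \eqref{eq:uevo}, obtaining
\[
\frac{d}{dt}[\tau_t^\Lambda(A),B] = i \sum_{\substack{Z \subset \Lambda \\ Z \cap X \neq \emptyset}} [\tau_t^\Lambda([\Phi_Z(t),A]), B],
\]
the restriction to $Z \cap X \neq \emptyset$ reflecting that $[\Phi_Z(t),A] = 0$ otherwise. Applying the Jacobi identity $[[U,V],B] = [U,[V,B]] - [V,[U,B]]$ to each summand decomposes the right-hand side into a self-coupling term $i[\tilde H(t), [\tau_t^\Lambda(A),B]]$, with $\tilde H(t) := \sum_{Z \cap X \neq \emptyset} \tau_t^\Lambda(\Phi_Z(t))$, plus a source term $-i\sum_{Z \cap X \neq \emptyset} [\tau_t^\Lambda(A), [\tau_t^\Lambda(\Phi_Z(t)), B]]$. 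To gauge away the self-coupling I would introduce the unitary cocycle $W(t)$ solving $\dot W = -i \tilde H(t) W$, $W(0) = \idty$, which exists because $\tilde H(t)$ is bounded and strongly continuous on finite $\Lambda$. Since $W(t)$ is unitary, $\|W(t)^*[\tau_t^\Lambda(A),B]W(t)\| = \|[\tau_t^\Lambda(A),B]\|$, and the derivative of the conjugated quantity equals only the conjugated source term. Taking norms, using $\|[\tau_s^\Lambda(\Phi_Z(s)),B]\| \leq \|\Phi_Z(s)\|\, C_B(Z,s)$, and passing to the supremum over $A \in \mathcal{A}_X$ yields
\[
C_B(X, t) \leq C_B(X, 0) + 2 \int_0^t \sum_{Z \cap X \neq \emptyset} \|\Phi_Z(s)\| \cdot C_B(Z, s)\, ds .
\]

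Iterating this inequality, with initial data $C_B(X, 0) \leq 2\|B\| \cdot \mathbf{1}[X \cap Y \neq \emptyset]$ and a priori bound $C_B \leq 2\|B\|$, produces a series whose $n$-th term is a chain sum over $Z_1, \dots, Z_n$ with $Z_1 \cap X \neq \emptyset$, $Z_{i+1} \cap Z_i \neq \emptyset$, and $Z_n \cap Y \neq \emptyset$, weighted by $\prod_i \|\Phi_{Z_i}\|$ and the time-ordered integral $t^n/n!$. Bounding each innermost sum by $\sum_{Z \ni x,y} \|\Phi_Z\| \leq \|\Phi\|_F F(d(x,y))$ from \eqref{eq:tdphinorm} and repeatedly collapsing intermediate vertices via \eqref{eq:convc} reduces the chain to $\|\Phi\|_F^n\, C_F^{n-1} \sum_{x \in X, y \in Y} F(d(x,y))$. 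Summing over $n$ produces the factor $(e^{2\|\Phi\|_F C_F |t|}-1)/C_F$ when $X \cap Y = \emptyset$ (the zeroth term vanishing), or $e^{2\|\Phi\|_F C_F |t|}/C_F$ after combining with $C_B(X,0) \leq 2\|B\|$ in the opposite case; the $\min[1,\ldots]$ form of the statement is then obtained by intersecting with the a priori bound.

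The main technical obstacle --- the only place where this argument differs from the time-independent case --- is the unitary-gauge trick that removes the self-coupling $i[\tilde H(t), \cdot]$ from the differential equation. Without it, a naive Gronwall estimate would introduce an exponential factor in $\|\tilde H(t)\| \sim |\Lambda|$ that would destroy the spatial decay of the bound. The boundedness and strong continuity of $H_\Lambda(t)$ on finite $\Lambda$ ensure that both the Dyson cocycle $U_\Lambda(t,s)$ of \eqref{eq:uevo} and the auxiliary cocycle $W(t)$ exist as genuine unitary propagators, making the manipulation rigorous; the $\Lambda$-independence of the constants $\|\Phi\|_F$ and $C_F$ then gives a $\Lambda$-independent final bound.
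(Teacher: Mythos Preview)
Your proof is correct and follows a standard route, but it differs in one algebraic choice from the paper's argument. The paper does not apply the Jacobi identity directly to $[\tau_t^\Lambda(A),B]$; instead it introduces the \emph{interaction picture} quantity
\[
f(t)=\bigl[\tau_t^{\Lambda}\bigl(\tilde\tau_t^{X}(A)\bigr),B\bigr],\qquad \tilde\tau_t^{X}(A)=U_X(t,0)\,A\,U_X(t,0)^*,
\]
and differentiates that. Since $\tilde\tau_t^X$ undoes the part of the evolution generated by $H_X(t)=\sum_{Z\subset X}\Phi_Z(t)$, the derivative of $f$ involves only the difference $H_\Lambda(t)-H_X(t)$, so the sum is automatically restricted to the \emph{surface} $S_X^\Lambda=\{Z:Z\cap X\neq\emptyset,\ Z\cap X^c\neq\emptyset\}$. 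The norm-preserving piece in the paper is the same mechanism as your auxiliary unitary $W(t)$: a commutator $i[\cdot,f(t)]$ with a self-adjoint time-dependent operator, removed by conjugation.

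What this buys: the paper's choice gives the slightly sharper recursion with $Z\in S_X^\Lambda$ rather than your $Z\cap X\neq\emptyset$ (you also pick up $Z\subset X$). This makes no difference for the final estimate, since your extra terms are absorbed by the same chain-sum bound and the same exponential series; both routes yield the identical integral inequality for $C_B(X,t)$ and the identical iteration via \eqref{eq:convc}. Your approach has the minor advantage of not needing to introduce a second propagator $U_X(t,0)$, at the cost of carrying those harmless interior terms. The paper then simply cites \cite{nach22006} for the iteration you spell out.
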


\begin{proof}
Let $X,Y \subset \Gamma$ be finite sets. Take $\Lambda \subset \Gamma$ finite with $X \cup Y \subset \Lambda$.
Define the function $f: [0,1] \to \mathcal{A}_{\Lambda}$ by setting
\begin{equation}
f(t) = \left[ U_{\Lambda}(t,0)^* U_X(t,0) A U_X(t,0)^*U_{\Lambda}(t,0), B \right] = \left[ \tau_t^{\Lambda} \left( \tilde{\tau}_t^X(A) \right), B \right] \, ,
\end{equation}
where we have introduced the notation $\tilde{\tau}_t^X(A) = U_X(t,0) A U_X(t,0)^*$. Denoting by
\begin{equation}
S_X^{\Lambda} = \{ Z \subset \Lambda : Z \cap X \neq \emptyset, Z \cap X^c \neq \emptyset \} ,
\end{equation} 
the surface of $X$, a short calculation shows that
\begin{eqnarray}
f'(t) & = &  i \left[ \tau_t^{\Lambda} \left( [H_{\Lambda}(t) -H_X(t) , \tilde{\tau}_t^X(A) ] \right) B \right]  \nonumber \\
& = & i \sum_{\stackrel{Z \subset \Lambda:}{Z \in S_X^{\Lambda}}} \left[ \tau_t^{\Lambda}(\Phi_Z(t)), f(t)\right] +  i \sum_{\stackrel{Z \in \Lambda:}{Z \in S_X^{\Lambda}}} \left[ \tau_t^{\Lambda}( \tilde{\tau}_t^X(A)), [B, \tau_t^{\Lambda}( \Phi_Z(t))] \right]  \, . \nonumber
\end{eqnarray}
As the first term above is norm-preserving, see e.g. \cite{nach22006}, the inequality
\begin{equation}
\left\| \left[ \tau_t^{\Lambda} \left( \tilde{\tau}_t^X(A) \right), B \right]  \right\| \leq \left\| \left[ A, B \right] \right\| + 
2 \| A \| \sum_{\stackrel{Z \subset \Lambda:}{Z \in S_X^{\Lambda}}}  \int_0^{|t|} \left\| \left[ \tau_s^{\Lambda}\left( \Phi_Z(s) \right), B\right] \right\| \, ds 
\end{equation}
follows. Consider now the quantity
\begin{equation}
C_B^{\Lambda}(X,t) = \sup_{\substack{A \in \mathcal{A}_X : \\ A \neq 0}} \frac{ \| [ \tau_t^{\Lambda}(A), B] \|}{ \| A \|}
\end{equation}
It is easy to see that
\begin{equation}
C_B^{\Lambda}(X,t) \leq C_B^{\Lambda}(X,0) + 2 \sum_{\stackrel{Z \subset \Lambda:}{Z \in S_X^{\Lambda}}} \sup_{0 \leq r \leq 1} \| \Phi_Z(r) \| \int_0^{|t|} C_B^{\Lambda}(Z,s) \, ds \, .
\end{equation}
{F}rom here, the argument proceeds as in the proof of Theorem 2.1 in \cite{nach22006}. 
\end{proof}

\subsection{Some notation and a lemma} \label{sec:lemma}

In this subsection, we prove a technical estimate needed in our proof of Theorem~\ref{thm:lrbpe}.
The objective is to show that the $s$-dependent generator of the unitary flow $U_\Lambda(s)$ 
has the structure of a bonafide short-range interaction. In Theorem~\ref{thm:lppl} we showed that
each term of the perturbation, i.e.,  $\Phi_X(s)$ for a given $X$, leads to a term in the
generator that can be well approximated by {\em local} self-adjoint operator supported
in $X_R$ with almost exponentially fast decay of the error as a function of $R$. A projection
$\Pi_{X_R}:\caA_\Lambda\to\caA_{X_R}$ was used to accomplish this. In this subsection and the next
we apply the same procedure to show that the differences between successive approximations
can be summed leading to a decomposition of each term  in the generator as a telescopic sum
of finitely supported terms. To define the terms in this decomposition we need a family of
projection mappings $\left( \Pi_X \right)_{X\subset\Lambda}$, and the decomposition we obtain
will depend on the choice of this family. It will be convenient to choose a family which 
is compatible with the embeddings $\caA_{\Lambda_0}\subset\caA_{\Lambda}$, for 
$\Lambda_0\subset \Lambda$, and such that each of the $\Pi_X$ are continuous
in the norm and weak topologies on $\caA_\Lambda$. We will therefore choose a family of
normal states on $\caB(\caH_x)$, or equivalently, a family of density matrices,
$\left( \rho_x \right)_{x\in \Gamma}$ so that we can define a product state on $\caA_{X^c}$ by
setting $\rho_{X^c}=\bigotimes_{x\in \Gamma \setminus X} \rho_x$. Then, for
any finite $X\subset\Lambda$, we define
\begin{equation}
\Pi_X=\id_{\caA_X}\otimes \rho_{X^c}\vert_{\caA_\Lambda}\, .
\end{equation}
Here, $\id_{\caA_X}$ is the identity map on $\caA_X$. $\Pi_X$ can be
considered as a map $\caA_\Lambda\to \caA_\Lambda$ with $\ran \Pi_X\subset
\caA_X$.

We let the dependence of $\Pi_X$ on the $\rho_x$ be implicit. All our estimates will
be uniform in the $\rho_x$. Similarly, the interaction $\Psi_\Lambda(s)$ we define
in the next subsection depends on the choice of $\rho_x$, but the estimates on
its decay will not, and the unitary flow generated by these interactions also does 
{\em not} depend on the $\rho_x$.

Fix a finite set $\Lambda \subset \Gamma$. For any $X \subset \Gamma$ and $n \geq 0$, denote by
\begin{equation}
X_n = \left\{ z \in \Gamma : d(z,X) \leq n \right\} \, ,
\end{equation}
where $d(z,X) = \min_{x \in X} d(z,x)$. Keeping with the notation from the previous subsection, for any
$A \in \mathcal{A}_X$ we set
\begin{equation} \label{eq:Deltaat0}
\Delta_{\Lambda}^0(A,s) = \int_{- \infty}^{\infty} \Pi_{X} \left( \tau_t^{H_{\Lambda}(s)}(A) \right) W_{\gamma}(t) \, dt 
\end{equation}
and
\begin{equation} \label{eq:Deltaatn}
\Delta_{\Lambda}^n(A,s) = \int_{- \infty}^{\infty} \Pi_{X_{n}} \left( \tau_t^{H_{\Lambda}(s)}(A) \right) W_{\gamma}(t) \, dt -\int_{- \infty}^{\infty} \Pi_{X_{n-1}} \left( \tau_t^{H_{\Lambda}(s)}(A) \right) W_{\gamma}(t) \, dt 
\end{equation}
for any $n \geq 1$. Since $\Lambda$ is finite, $\Delta_{\Lambda}^n(A,s) = 0$ for
large $n$. Moreover, it is clear that $\mbox{supp} \left( \Delta_{\Lambda}^n(A,s) \right) \subset X_{n} \cap \Lambda$.
In our proof of Theorem~\ref{thm:lrbpe}, we will use that
\begin{equation}
 \int_{- \infty}^{\infty} \tau_t^{H_{\Lambda}(s)}(A) W_{\gamma}(t) \, dt = \sum_{n=0}^{\infty} \Delta_{\Lambda}^n(A,s) 
\end{equation}
where the series is actually a finite sum. In fact, the following estimate is also important.

\begin{lemma} \label{lem:Delta} Under Assumptions~\ref{ass:unigap} and \ref{ass:unilrb}, let $\Lambda \subset \Gamma$ be a
finite set. For any $X \subset \Lambda$, $A \in \mathcal{A}_X$, and integer $n \geq 0$, 
\begin{equation} \label{eq:Delbd}
\left\| \Delta_{\Lambda}^n(A,s) \right\| \leq 2 \| A \| \min\left[ \| W_{\gamma} \|_1, |X| G(n-1) \right]
\end{equation}
where
\begin{equation} \label{eq:defg}
G(n) = 4 I_{\gamma} \left( \frac{n}{2 v_a} \right) + \frac{ K_a \| F \|}{av_a} e^{-an/2} 
\end{equation}
and $I_{\gamma}$ is as in Lemma~\ref{lemma:Bigwdecay}.
\end{lemma}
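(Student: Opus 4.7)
The bound contains two estimates, and I would prove each separately, taking the minimum at the end. The first estimate, $\|\Delta^n_\Lambda(A,s)\|\leq 2\|A\|\|W_\gamma\|_1$, is immediate: each $\Pi_{X_k}=\id_{\caA_{X_k}}\otimes\rho_{X_k^c}$ has operator norm at most $1$, so $\|\Pi_{X_k}(\tau^{H_\Lambda(s)}_t(A))\|\leq\|A\|$. For $n\geq 1$ the difference of the two such integrals against $W_\gamma$ is then bounded by $2\|A\|\|W_\gamma\|_1$; the case $n=0$ has only one term and is bounded by $\|A\|\|W_\gamma\|_1\leq 2\|A\|\|W_\gamma\|_1$.

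For the sharper estimate $2\|A\||X|G(n-1)$, the key rewriting is $\Delta_\Lambda^n(A,s)=R_n(s)-R_{n-1}(s)$, where
$$R_k(s) := \int_{-\infty}^\infty W_\gamma(t)\,\big(\Pi_{X_k}(\tau^{H_\Lambda(s)}_t(A))-\tau^{H_\Lambda(s)}_t(A)\big)\,dt,$$
the contribution of $\int W_\gamma\,\tau_t(A)\,dt$ cancelling between the two integrands of $\Delta^n$. The task then reduces to showing $\|R_k\|\lesssim \|A\||X|G(k)$; monotonicity of $G$ (both $I_\gamma$ and $e^{-an/2}$ are non-increasing in $n$) gives $\|\Delta^n\|\leq \|R_n\|+\|R_{n-1}\|\leq 2\|A\||X|G(n-1)$.

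To bound $\|R_k\|$ I would split the time integral at $T=k/(2v_a)$. For $|t|\leq T$, applying Assumption~\ref{ass:unilrb} to any $B$ supported in a finite subset of $X_k^c$ and using $F_a(r)=e^{-ar}F(r)$ together with the fact that $d(x,y)\geq k$ for $x\in X$, $y\in X_k^c$ gives
$$\|[\tau^{H_\Lambda(s)}_t(A),B]\|\leq K_a\|A\|\|B\|\,e^{av_a|t|}\,e^{-ak}|X|\,\|F\|,$$
uniformly in the support of $B$. Lemma~\ref{lma:Approx} then yields $\|\Pi_{X_k}(\tau_t(A))-\tau_t(A)\|\leq 2K_a\|A\||X|\|F\|\,e^{av_a|t|}e^{-ak}$; integrating against $|W_\gamma|$ on $[-T,T]$, using $\|W_\gamma\|_\infty\leq 1/2$ from Lemma~\ref{lemma:Bigwdecay}(i), and observing that the choice $T=k/(2v_a)$ makes $e^{av_aT}e^{-ak}$ collapse to $e^{-ak/2}$, produces a contribution of the form $\tfrac{C K_a \|A\||X|\|F\|}{av_a}e^{-ak/2}$. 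For $|t|>T$ I would use the crude pointwise bound $\|\Pi_{X_k}(\tau_t(A))-\tau_t(A)\|\leq 2\|A\|$ together with $\int_{|t|>T}|W_\gamma(t)|\,dt=2I_\gamma(T)$, which follows from the oddness of $W_\gamma$ and its positivity on $[0,\infty)$ (Lemma~\ref{lemma:Bigwdecay}(i),(iv)), to obtain the contribution $4\|A\|I_\gamma(k/(2v_a))$. Adding these yields exactly the two terms of $G(k)$.

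The crux of the argument is the split $T=k/(2v_a)$, tuned so that Lieb-Robinson propagation absorbs precisely half of the static exponential decay coming from the spatial separation of $X$ and $X_k^c$. The only substantive bookkeeping challenge is tracking constants so that summing $\|R_n\|+\|R_{n-1}\|$ lands on the stated numerical factor $2\|A\||X|G(n-1)$ (using monotonicity to replace the $n$-indexed decays by $(n-1)$-indexed ones); everything else is a clean combination of the Lieb-Robinson hypothesis, the approximation principle of Lemma~\ref{lma:Approx}, and the sub-exponential decay of $W_\gamma$ and $I_\gamma$ from Lemma~\ref{lemma:Bigwdecay}. For $n=0$, $G(-1)$ lies outside the natural domain of $I_\gamma$, but in that case the $\min$ is automatically governed by the $\|W_\gamma\|_1$ bound.
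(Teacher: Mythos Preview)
Your proposal is correct and follows essentially the same route as the paper: rewrite $\Delta_\Lambda^n$ as the difference of two ``remainder'' integrals $R_k=\int W_\gamma(t)(\Pi_{X_k}-\id)(\tau_t(A))\,dt$, bound each $R_k$ by splitting the time integral at $T=k/(2v_a)$, apply the Lieb-Robinson bound together with Lemma~\ref{lma:Approx} on $|t|\le T$ and the decay of $W_\gamma$ (via $I_\gamma$) on $|t|>T$, and finish using monotonicity of $G$. The only cosmetic difference is that the paper absorbs the $I_\gamma$ term into $|X|G$ using $|X|\ge 1$ (as noted in the remark following the proof), whereas you implicitly do the same when asserting ``Adding these yields exactly the two terms of $G(k)$.''
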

\begin{proof}
It is easy to see that
\begin{equation}
\left\| \Delta_{\Lambda}^0(A,s) \right\| \leq \|A \| \, \|W_{\gamma} \|_1 \quad \mbox{and} \quad \left\| \Delta_{\Lambda}^n(A,s) \right\| \leq 2 \|A \| \, \|W_{\gamma} \|_1 \, .
\end{equation}
A better estimate in $n$ is achieved by inserting and removing an identity. In fact, we need only estimate the norm of
\begin{equation}
 \int_{- \infty}^{\infty} \left( \Pi_{X_n} - \id \right) \left( \tau_t^{H_{\Lambda}(s)}(A) \right) W_{\gamma}(t) \, dt \, .
\end{equation}
To do so, we follow the same strategy as in the proof of Theorem~\ref{thm:lppl}. By Assumption~\ref{ass:unilrb}, we know that
\begin{equation}
\left\| \left[ \tau_t^{H_{\Lambda}(s)}(A), B \right] \right\| \leq K_a \| F \| |X| \| A \| e^{a v_a |t|} e^{-an} \|B \|
\end{equation}
for all $B \in \mathcal{A}_{X_n^c}$. Hence, for any $T>0$, we have that
\begin{eqnarray}
\left\|  \int_{|t| \leq T} \left( \Pi_{X_n} - \id \right) \left( \tau_t^{H_{\Lambda}(s)}(A) \right) W_{\gamma}(t) \, dt  \right\| & \leq &  \frac{1}{2} \int_{|t| \leq T} \left\| \left( \Pi_{X_n} - \id \right) \left( \tau_t^{H_{\Lambda}(s)}(A) \right) \right\| \, dt \nonumber \\
& \leq &  K_a \| F \| |X| \| A \| \int_0^T e^{a v_a t} \, dt \, e^{-an} \, ,
\end{eqnarray}
using Lemma~\ref{lma:Approx}, whereas
\begin{equation}
\left\|  \int_{|t| > T} \left( \Pi_{X_n} - \id \right) \left( \tau_t^{H_{\Lambda}(s)}(A) \right) W_{\gamma}(t) \, dt  \right\|  \leq 4 \| A \| I_{\gamma}(T) \, .
\end{equation}
The choice of $T = n/2v_a$ yields an estimate of the form
\begin{equation}
\left\|  \int_{- \infty}^{\infty} \left( \Pi_{X_n} - \id \right) \left( \tau_t^{H_{\Lambda}(s)}(A) \right) W_{\gamma}(t) \, dt \right\| \leq 4 \|A\| I_{\gamma} \left( \frac{n}{2v_a} \right)
+ \frac{K_a \| F \|}{a v_a} |X| \| A \| e^{-an/2} \, .
\end{equation}
The bound (\ref{eq:Delbd}) readily follows.
\end{proof}

As indicated by the proof above, a stronger inequality is true. We have actually shown that
for every $n \geq 1$,
\begin{equation}
\left\| \Delta_{\Lambda}^n(A,s) \right\| \leq 2 \| A \| \min\left[ \| W_{\gamma} \|_1, G_A(n-1) +G_A(n) \right]
\end{equation}
where
\begin{equation}
G_A(n) = 2 I_{\gamma} \left( \frac{n}{2 v_a} \right) + \frac{ K_a \| F \|}{2av_a} |X| e^{-an/2} \, .
\end{equation}
For the arguments we use below, it is convenient to extract a decaying quantity that is independent of
the given observable $A$ and use the monotonicity of $G$. This explains the form of the bound
(\ref{eq:Delbd}) appearing in Lemma~\ref{lem:Delta}.

\subsection{The proof of Theorem~\ref{thm:lrbpe}} \label{sec:proof}

In this subsection, we prove Theorem~\ref{thm:lrbpe}. 
The basic idea is that Theorem~\ref{thm:lrbpe} follows
from a Lieb-Robinson bound for time-dependent interactions, see
e.g. Theorem~\ref{thm:lrbtdints} in Section~\ref{sec:lrbtdints}. 
To see that such a result is applicable, we demonstrate that the
generator of the spectral flow can be written as a
sum of local interaction terms which satisfy an appropriate
decay assumption. This is the content of Theorem~\ref{thm:intdec} below.

Under Assumptions~\ref{ass:dphi} and \ref{ass:unigap}, we have defined (for each finite $\Lambda \subset \Gamma$)
a spectral flow by setting
\begin{equation}
\alpha_s^{\Lambda}(A) = U_{\Lambda}(s)^* A U_{\Lambda}(s) \quad \mbox{for all } A \in \mathcal{A}_{\Lambda}.
\end{equation}
In fact, the unitary $U_{\Lambda}(s)$ is the one constructed in Proposition~\ref{prop:main}, and
as a consequence of Corollary~\ref{cor:DW}, we know that $U_{\Lambda}(s)$ is generated by
\begin{eqnarray} \label{eq:dsum1}
D_{\Lambda}(s) & = &  \int_{-\infty}^{\infty} \tau_t^{H_{\Lambda}(s)} \left( H_{\Lambda}'(s) \right) W_{\gamma}(t) \, dt  \nonumber \\
& = & \sum_{Z \subset \Lambda} \int_{-\infty}^{\infty} \tau_t^{H_{\Lambda}(s)} \left( \Phi_Z'(s) \right) W_{\gamma}(t) \, dt  \, .
\end{eqnarray}
Here $\gamma$ is as in Assumption~\ref{ass:unigap}, and 
$W_{\gamma}$ appears in Lemma~\ref{lemma:Bigwdecay} . The previous
subsection demonstrated that each term
\begin{equation} \label{eq:delsum}
\int_{-\infty}^{\infty} \tau_t^{H_{\Lambda}(s)} \left( \Phi_Z'(s) \right) W_{\gamma}(t) \, dt = \sum_{n=0}^{\infty} \Delta_{\Lambda}^n( \Phi_Z'(s),s) 
\end{equation}
where the series is actually a finite sum. Combining (\ref{eq:dsum1}) and (\ref{eq:delsum}) above, we write
\begin{equation}
D_{\Lambda}(s) = \sum_{Z \subset \Lambda} \sum_{n=0}^{\infty} \Delta_{\Lambda}^n( \Phi_Z'(s),s) = \sum_{Z \subset \Lambda} \Psi_{\Lambda}(Z,s) \, ,
\end{equation}
where
\begin{equation}\label{eq:defPsiLambda}
\Psi_{\Lambda}(Z,s) = \sum_{n \geq 0} \sum_{\substack{Y \subset \Lambda: \\ Y_n =Z}} \Delta_{\Lambda}^n( \Phi_Y'(s),s) \, .
\end{equation}
It is important here to note that $\mbox{supp}( \Psi_{\Lambda}(Z,s)) \subset Z$, i.e., the $s$-dependent, interaction terms $\Psi_{\Lambda}(Z,s)$
are strictly local. The following estimate holds.

\begin{thm} \label{thm:intdec} 
Let Assumptions~\ref{ass:dphi}, \ref{ass:unigap}, and \ref{ass:unilrb} hold. Then,
there exists a function $F_{\Psi}$ satisfying (\ref{eq:uniint}) and (\ref{eq:convc}) such that
\begin{equation} \label{eq:intbdthm}
\| \Psi_{\Lambda} \|_{F_{\Psi}} = \sup_{x, y \in \Lambda} \frac{1}{F_{\Psi}(d(x,y))} \sum_{\stackrel{Z \subset \Lambda:}{x, y \in Z}} \sup_{0 \leq s \leq 1} \| \Psi_{\Lambda}(Z,s) \| < \infty \, .
\end{equation}
Here we note that the function $F_{\Psi}$ is independent of $\Lambda$.
\end{thm}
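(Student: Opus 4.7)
The plan is to bound $\sum_{Z \ni x, y}\sup_s \|\Psi_\Lambda(Z,s)\|$ uniformly in $\Lambda, x, y$ by a constant times $F_\Psi(d(x,y))$. The two characteristic factors of $F_\Psi$ come from two separate sources: the spatial factor $F(\gamma d(x,y)/(8v_a))$ will emerge from the convolution property \eqref{eq:convc} applied together with $\|\partial\Phi\|_a<\infty$, while the sub-exponential factor $\tilde u_\mu(\gamma d(x,y)/(8v_a))$ will emerge from the sub-exponential decay of $I_\gamma$ inside the function $G$ of Lemma~\ref{lem:Delta}. The velocity/gap ratio $\gamma/(8v_a)$ is fixed by the cut-off chosen when summing over $n$.

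Starting from \eqref{eq:defPsiLambda} and the bound of Lemma~\ref{lem:Delta} written as $\|\Delta_\Lambda^n(\Phi_Y'(s),s)\|\leq 2\|\partial\Phi_Y(s)\| G(n-1)$ (using $|Y|\|\Phi_Y'\|=\|\partial\Phi_Y\|$, with the convention $G(-1):=\|W_\gamma\|_1$ to absorb the $n=0$ term), I exchange the sums over $Z$ and $Y$ to obtain
$$
\sum_{Z \ni x,y}\sup_s \|\Psi_\Lambda(Z,s)\| \leq 2\sum_{n\geq 0} G(n-1) \sum_{\substack{Y\subset\Lambda \\ x,y\in Y_n}} \sup_s\|\partial\Phi_Y(s)\|.
$$
The condition $x,y\in Y_n$ forces $Y$ to meet both balls $B_n(x)$ and $B_n(y)$, so Assumption~\ref{ass:dphi} applied through the $F_a$-norm of $\partial\Phi$ yields
$$
\sum_{\substack{Y\subset\Lambda \\ x,y\in Y_n}} \sup_s\|\partial\Phi_Y(s)\| \leq \|\partial\Phi\|_a \sum_{x'\in B_n(x)}\sum_{y'\in B_n(y)} F_a(d(x',y')).
$$
Monotonicity of $F$ lets me write $\mathbf{1}[d(x,x')\leq n] \leq F(d(x,x'))/F(n)$, and two applications of \eqref{eq:convc} combined with $F_a\leq F$ then yield
$$
\sum_{x'\in B_n(x)}\sum_{y'\in B_n(y)} F_a(d(x',y')) \leq \frac{C_F^2}{F(n)^2}\,F(d(x,y)).
$$

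Putting these estimates together collapses the bound to
$$
\sum_{Z \ni x,y}\sup_s \|\Psi_\Lambda(Z,s)\| \leq 2C_F^2\|\partial\Phi\|_a\, F(d(x,y)) \sum_{n\geq 0} \frac{G(n-1)}{F(n)^2}.
$$
By the sub-exponential lower bound \eqref{eq:Fgrowth} on $F$, the factor $1/F(n)^2$ grows at most polynomially in $n$; by Lemmas~\ref{lem:ua} and~\ref{lemma:Bigwdecay}, $G(n-1)$ is a polynomial in $n$ times $u_{2/7}(\gamma n/(2v_a))$ for large $n$. Splitting the $n$-sum at the cut-off $n_0 \simeq d(x,y)/4$ and applying Lemma~\ref{lem:ua} to the tail $\sum_{n\geq n_0} n^k u_{2/7}(\gamma n/(2v_a))$ with $k$ large enough to absorb all polynomial factors, I obtain a bound of the form $C\, u_\mu(\gamma d(x,y)/(8v_a))$ for any $\mu<2/7$; the factor $1/8$ arises from combining $n_0 = d(x,y)/4$ with the intrinsic $1/(2v_a)$ scaling of $G$. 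The portion $n<n_0$ contributes only a bounded constant times $F(d(x,y))$. Using monotonicity of $F$ to upgrade $F(d(x,y))$ to $F(\gamma d(x,y)/(8v_a))$ (legitimate when $\gamma/(8v_a)\leq 1$, which can be arranged by harmlessly enlarging $v_a$), comparison with \eqref{eq:definitionFpsi} delivers the bound $\sum_{Z\ni x,y}\sup_s\|\Psi_\Lambda(Z,s)\|\leq C F_\Psi(d(x,y))$. The degenerate case $d(x,y)=0$ is handled separately by applying Lemma~\ref{lem:simple_bounds} directly to $\partial\Phi$.

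The main obstacle is the bookkeeping in the tail estimate: one must carefully verify that the polynomial factors coming from $1/F(n)^2$, from the $(\gamma n/(2v_a))^{10}$ prefactor in Lemma~\ref{lemma:Bigwdecay}(iv), and implicitly from the ball-growth bound \eqref{eq:ballgrow} can all be reabsorbed into a single sub-exponential $u_\mu(\gamma d(x,y)/(8v_a))$ at the price of accepting any $\mu$ strictly less than $2/7$, and that the resulting constant $C$ is genuinely independent of $\Lambda$, $x$, and $y$. That $F_\Psi$ itself satisfies \eqref{eq:uniint} and \eqref{eq:convc} is immediate from the corresponding properties of $F$ together with $\tilde u_\mu$ being positive, non-increasing, and logarithmically super-additive, as already remarked in the paragraph introducing $F_\Psi$.
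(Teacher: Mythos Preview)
Your overall architecture---pass from $\Psi_\Lambda$ to $\Delta_\Lambda^n$, apply Lemma~\ref{lem:Delta}, reduce to a double sum over $B_n(x)\times B_n(y)$, and split the $n$-sum near $d(x,y)/4$---is exactly the paper's strategy. The gap is in how you control the double sum. Your inequality $\mathbf{1}[d(x,x')\leq n]\leq F(d(x,x'))/F(n)$ followed by two convolutions is elegant, but it produces the factor $1/F(n)^2$, and your assertion that this ``grows at most polynomially in $n$'' by~\eqref{eq:Fgrowth} is simply wrong: \eqref{eq:Fgrowth} says only $F(r)\geq c\,u_\delta(r)$, hence $1/F(n)^2\leq C\,e^{2\delta n/\ln^2 n}$, which is sub-exponential growth, not polynomial. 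Since $G(n-1)$ behaves for large $n$ like a polynomial times $u_{2/7}(\gamma n/(2v_a))\sim e^{-(\gamma/(7v_a))\,n/\ln^2 n}$, the product $G(n-1)/F(n)^2$ need not decay at all: whenever $\gamma/(7v_a)<2\delta$ (and nothing in the assumptions prevents this, as $\gamma$ and $v_a$ are independent data) the terms grow and the sum $\sum_n G(n-1)/F(n)^2$ diverges. Both your ``$n<n_0$'' and ``$n\geq n_0$'' estimates then collapse.

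The paper sidesteps this by never introducing $1/F(n)$. Instead, it bounds the double sum over $B_m(x)\times B_m(y)$ in two different ways depending on the size of $m$: for small $m$ (namely $m\leq m_0\approx d(x,y)/2$) it uses the triangle inequality $d(y_1,y_2)\geq \epsilon\,d(x,y)$ together with the ball-growth bound~\eqref{eq:ballgrow}, giving a factor $\kappa^2 m^{2\nu}F_a(\epsilon d(x,y))$; for large $m$ it uses uniform integrability $\sum_{y_2}F_a(d(y_1,y_2))\leq\|F_a\|$ and again~\eqref{eq:ballgrow}, giving $\kappa m^\nu\|F_a\|$. In both regimes the prefactor is genuinely polynomial in $m$, and this is what allows the sub-exponential decay of $G$ (and hence of $\hat G$) to win. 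Your argument is easily repaired by replacing the $F(d(x,x'))/F(n)$ trick with this direct use of~\eqref{eq:ballgrow}; once you do that, the remainder of your outline goes through and matches the paper's proof.
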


It is now clear that Theorem~\ref{thm:lrbpe} follows from Theorem~\ref{thm:intdec} via an application of Theorem~\ref{thm:lrbtdints}.

\begin{proof}
In the argument below, it is convenient to set $a>0$ to be the minimum of the $a$'s whose
existences are guaranteed by Assumptions~\ref{ass:dphi} and \ref{ass:unilrb}.

We begin by re-writing the quantity of interest. Clearly,
\begin{equation}
\sup_{0 \leq s \leq 1} \| \Psi_{\Lambda}(Z,s) \|  \leq \sum_{\stackrel{Y,n \geq 0:}{Y_n =Z}} \sup_{0 \leq s \leq 1} \| \Delta_{\Lambda}^n( \Phi_Y'(s),s) \| \, ,
\end{equation}
and so 
\begin{eqnarray} \label{eq:psibd1}
\sum_{\stackrel{Z \subset \Lambda:}{x,y \in Z}} \sup_{0 \leq s \leq 1} \| \Psi_{\Lambda}(Z,s) \|  & \leq & \sum_{\stackrel{Z \subset \Lambda:}{x,y \in Z}} \sum_{\stackrel{Y,n \geq 0:}{Y_n =Z}} \sup_{0 \leq s \leq 1} \| \Delta_{\Lambda}^n( \Phi_Y'(s),s) \| \nonumber \\ 
& = & \sum_{Y \subset \Lambda} \sum_{n \geq 0} {\rm Ind} \left[ x,y \in Y_n \right]  \sup_{0 \leq s \leq 1} \| \Delta_{\Lambda}^n( \Phi_Y'(s),s) \| \nonumber \\
&=&  \sum_{\stackrel{Y \subset \Lambda:}{x,y \in Y}} \sum_{n \geq 0}  \sup_{0 \leq s \leq 1} \| \Delta_{\Lambda}^n( \Phi_Y'(s),s) \|  \\ 
& \mbox{ } & + \sum_{m=1}^{\infty} \sum_{\stackrel{Y \subset \Lambda:}{x,y \in Y_m}} {\rm Ind}\left[ \{x,y\} \cap Y_{m-1}^c \neq \emptyset \right] \sum_{n \geq m} \sup_{0 \leq s \leq 1} \| \Delta_{\Lambda}^n( \Phi_Y'(s),s) \| \, \nonumber .
\end{eqnarray}
The first equality above follows from the observation that
\begin{eqnarray}
\sum_{\stackrel{Z \subset \Lambda:}{x,y \in Z}} \sum_{\stackrel{Y,n \geq 0:}{Y_n =Z}}  
& = & \sum_{Z \subset \Lambda} \sum_{Y \subset \Lambda} \sum_{n \geq 0} {\rm Ind} \left[ Y_n = Z \right] {\rm Ind} \left[ x,y \in Z \right]  \nonumber \\ 
& = &  \sum_{Y \subset \Lambda} \sum_{n \geq 0} \left[ \sum_{Z \subset \Lambda} {\rm Ind} \left[ Y_n = Z \right] \right] {\rm Ind} \left[ x,y \in Y_n  \right]  \nonumber \\
& = &  \sum_{Y \subset \Lambda} \sum_{n \geq 0} {\rm Ind} \left[ x,y \in Y_n  \right] \, , 
\end{eqnarray}
while the second is a consequence of the fact that for any pair $x,y$ 
\begin{equation}
\sum_{Y \subset \Lambda} = \sum_{\stackrel{Y \subset \Lambda:}{x,y \in Y}} + \sum_{m \geq 1} \sum_{\stackrel{Y \subset \Lambda:}{x,y \in Y_m}} {\rm Ind}\left[ \{x,y\} \cap Y_{m-1}^c \neq \emptyset \right] \, .
\end{equation}

The first sum on the right-hand-side of (\ref{eq:psibd1}) is easy to bound. In fact, using Lemma~\ref{lem:Delta}, it is
clear that
\begin{equation} \label{eq:Delnbd}
\sup_{0 \leq s \leq 1} \| \Delta_{\Lambda}^n( \Phi_Y'(s),s) \| \leq 2 |Y| \sup_{0 \leq s \leq 1} \| \Phi_Y'(s) \| G(n-1) \, ,
\end{equation}
where $G$ is as in (\ref{eq:defg}) with $G(-1)$ set to be $\| W_{\gamma} \|_1$.
Thus,
\begin{eqnarray}
 \sum_{\stackrel{Y \subset \Lambda:}{x,y \in Y}} \sum_{n \geq 0}  \sup_{0 \leq s \leq 1} \| \Delta_{\Lambda}^n( \Phi_Y'(s),s) \| & \leq & 
 2 \sum_{n \geq 0}  G(n-1) \sum_{\stackrel{Y \subset \Lambda:}{x,y \in Y}} |Y| \sup_{0 \leq s \leq 1} \| \Phi_Y'(s) \| \nonumber \\
 & \leq & 2 \| \partial \Phi \|_a F_a(d(x,y)) \sum_{n \geq 0}  G(n-1) \, .
\end{eqnarray}
{F}rom the estimates in Lemma~\ref{lemma:Bigwdecay}, it is clear that $G$ is summable.

For the remaining terms in (\ref{eq:psibd1}), we use the following over-counting estimate:
\begin{equation} \label{eq:overcount}
\sum_{\stackrel{Y \subset \Lambda:}{x,y \in Y_m}} {\rm Ind}\left[ \{x,y\} \cap Y_{m-1}^c \neq \emptyset \right] \leq  \sum_{y_1 \in B_m(x)} \sum_{y_2 \in B_m(y)} \sum_{\stackrel{Y \subset \Lambda:}{y_1,y_2 \in Y}} 
\end{equation}
Combining (\ref{eq:Delbd}) with (\ref{eq:overcount}), we find that 
\begin{eqnarray} \label{eq:summbd}
 \sum_{m \geq 1} \sum_{\stackrel{Y \subset \Lambda:}{x,y \in Y_m}} {\rm Ind}\left[ \{x,y\} \cap Y_{m-1}^c \neq \emptyset \right] \sum_{n \geq m} \sup_{0 \leq s \leq 1} \| \Delta_{\Lambda}^n( \Phi_Y'(s),s) \| 
  \nonumber \\
 \leq 2 \sum_{m \geq 1}  \sum_{y_1 \in B_m(x)} \sum_{y_2 \in B_m(y)} \sum_{\stackrel{Y \subset \Lambda:}{y_1,y_2 \in Y}} |Y| \sup_{0 \leq s \leq 1} \| \Phi_Y'(s) \| \sum_{n \geq m} G(n-1) \nonumber \\
 \leq 2 \| \partial \Phi \|_a  \sum_{m \geq 1}  \hat{G}(m) \sum_{y_1 \in B_m(x)} \sum_{y_2 \in B_m(y)} F_a(d(y_1,y_2)) \, ,
\end{eqnarray}
where we have set 
\begin{equation}
\hat{G}(m) = \sum_{n \geq m} G(n-1) \, .
\end{equation}

We now perform a rough optimization over $m \geq 1$. Take $0 < \epsilon < 1$ and declare $m_0 = m_0(\epsilon) \geq 0$ to be the largest integer
less than $(1 - \epsilon)d(x,y)/2$. We claim that, for $m \leq m_0$ and $y_1$ and $y_2$ as in (\ref{eq:summbd}) above, $\epsilon d(x,y) \leq d(y_1,y_2)$. This follows from
\begin{equation}
d(x,y) \leq d(x,y_1) + d(y_1,y_2) + d(y_2, y) \leq d(y_1,y_2) + 2 m \leq d(y_1,y_2) + 2 m_0 \, ,
\end{equation}
and the choice of $m_0$. In this case we have
\begin{eqnarray}
\sum_{m = 1}^{m_0+1}  \hat{G}(m) \sum_{y_1 \in B_m(x)} \sum_{y_2 \in B_m(y)} F_a(d(y_1,y_2)) & \leq & \hat{G}(1) F_a( \epsilon d(x,y)) \sum_{m=1}^{m_0+1} |B_m(x)| |B_m(y)| \nonumber \\
& \leq & \kappa^2 \hat{G}(1) F_a( \epsilon d(x,y) ) \sum_{m=1}^{m_0+1} m^{2 \nu} \, ,
\end{eqnarray}
where we have used (\ref{eq:ballgrow}).

The remaining terms we bound as follows.
\begin{eqnarray}
\sum_{m > m_0+1}  \hat{G}(m) \sum_{y_1 \in B_m(x)} \sum_{y_2 \in B_m(y)} F_a(d(y_1,y_2)) & \leq & \| F_a \| \sum_{m > m_0+1} |B_m(x)| \hat{G}(m) \nonumber \\
& \leq & \kappa \| F_a \| \sum_{m >m_0+1} m^{ \nu} \hat{G}(m) \, .
\end{eqnarray}
Now, from the definition of $\hat{G}$,
\begin{equation}
\sum_{m >m_0+1} m^{ \nu} \hat{G}(m) = \sum_{m=m_0+2}^{\infty} m^{\nu} \sum_{n=m-1}^{\infty} \left( 4I_{\gamma} \left( \frac{n}{2 v_a} \right) + \frac{K_a \| F \|}{a v_a} e^{- an/2} \right) \, ,
\end{equation}
and the sum
\begin{equation}
\sum_{m=m_0+2}^{\infty} m^{\nu} \sum_{n=m-1}^{\infty} e^{-an/2} = e^{a/2} \sum_{y \geq 0} e^{-ay/2} \cdot \sum_{m=m_0+2} m^{\nu} e^{-am/2}
\end{equation}
decays exponentially in $m_0$. Using the results in Lemma~\ref{lem:ua} and \ref{lemma:Bigwdecay}, we find
that
\begin{eqnarray}
\sum_{m=m_0+2}^{\infty} m^{\nu} \sum_{n=m-1}^{\infty} I_{\gamma} \left( \frac{n}{2 v_a} \right)  & \leq & 
\frac{C}{\gamma} \sum_{m=m_0+2}^{\infty} m^{\nu} \sum_{n=m-1}^{\infty} \left( \frac{\gamma n}{ 2 v_a} \right)^{10} u_{2/7} \left( \frac{ \gamma n}{ 2 v_a} \right) \nonumber \\
& \leq & \frac{2 v_a C}{\gamma^2} \sum_{m=m_0+2}^{\infty} m^{\nu} \int_{\frac{\gamma(m-1)}{2 v_a}}^{\infty} y^{10} u_{2/7}(y) \, dy \nonumber \\
& \leq & \frac{161 v_a C}{\gamma^2} \sum_{m=m_0+2}^{\infty} m^{\nu} \left( \frac{\gamma(m-1)}{2 v_a} \right)^{22} u_{2/7} \left( \frac{\gamma(m-1)}{2 v_a}  \right) \nonumber \\
& \leq & \frac{322 v_a^2 C}{ \gamma^3} \int_{\frac{ \gamma (m_0+1)}{2 v_a}}^{\infty} \left( \frac{2 v_a y}{\gamma} + 1 \right)^{\nu} y^{22} u_{2/7}(y) \, dy \nonumber \\
& \leq &  \frac{2254 \cdot 2^{2 \nu}}{\gamma} \left( \frac{v_a}{\gamma} \right)^{\nu+2} (47+2 \nu)  \left( \frac{ \gamma (m_0+1)}{2 v_a} \right)^{46+2\nu} u_{2/7}\left( \frac{\gamma (m_0+1)}{2 v_a} \right) \, .
\end{eqnarray}
This proves that 
\begin{eqnarray}
\sum_{\stackrel{Z \subset \Lambda:}{x,y \in Z}} \sup_{0 \leq s \leq 1} \| \Psi_{\Lambda}(Z,s) \| & \leq & C_1 F_a( \epsilon d(x,y)) (m_0+1)^{2 \nu +1} + C_2 \sum_{m=m_0+2} m^{\nu} e^{- am/2} \nonumber \\
& \mbox{ } & \quad + C_3 \left( \frac{\gamma (m_0+1)}{2 v_a} \right)^p u_{2/7}\left( \frac{ \gamma (m_0+1)}{2 v_a} \right) \, 
\end{eqnarray}
for some number $p$ depending only on $\nu$. Since $2m_0 \leq (1 - \epsilon)d(x,y)$, it is clear that the final term above decays the slowest in $d(x,y)$. Thus we have shown that
\begin{equation}
\sum_{\stackrel{Z \subset \Lambda:}{x,y \in Z}} \sup_{0 \leq s \leq 1} \| \Psi_{\Lambda}(Z,s) \| \leq C  \left( \frac{\gamma}{2 v_a} \left( \frac{1- \epsilon}{2}d(x,y) + 1 \right) \right)^p 
u_{2/7}\left( \frac{ \gamma (1- \epsilon)}{4 v_a} d(x,y) \right) \, ,
\end{equation}
for each $0 < \epsilon <1$. For concreteness, take $\epsilon = 1/2$.
With $\delta>0$ as in (\ref{eq:Fgrowth}) and any $0< \delta' < 2/7- \delta$, we will set $\mu = 2/7 - \delta - \delta'>0$ and see that
\begin{eqnarray}
\sum_{\stackrel{Z \subset \Lambda:}{x,y \in Z}} \sup_{0 \leq s \leq 1} \| \Psi_{\Lambda}(Z,s) \| & \leq & C'  \left( \frac{5 \gamma}{8 v_a} d(x,y) \right)^p u_{2/7}\left( \frac{ \gamma}{8 v_a} d(x,y) \right) \nonumber \\
& \leq & C'' \left( \frac{5 \gamma}{8 v_a} d(x,y) \right)^p u_{2/7- \delta}\left( \frac{ \gamma}{8 v_a} d(x,y)  \right) F \left( \frac{ \gamma}{8 v_a} d(x,y)  \right)  \nonumber \\
& \leq & C''' u_{\mu}\left( \frac{ \gamma}{8 v_a} d(x,y)  \right) F \left( \frac{ \gamma}{8 v_a} d(x,y)  \right) \, .
\end{eqnarray}
With the definition of $F_\Psi$ given in \eq{eq:definitionFpsi}, this completes the proof 
of \eq{eq:intbdthm}.
\end{proof}

%%%%%%%%%%%%%%%%%%%%%%%%%%%%%%%%%%%%%%%%%%%%%%%%%%%%%%%%%%%%%%%%%%%%%%%%%%%%%%%%%%%%%%%%%%%%%%%%%%%%%%%%

\section{Existence of the thermodynamic limit and gapped quantum phases}
\label{sec:quantum_phases}

The Lieb-Robinson bound for the flow $\alpha_s^\Lambda$ given in Theorem \ref{thm:lrbpe} 
of the previous section, can be used to obtain the thermodynamic limit of this flow defined as
a strongly continuous cocycle of automorphisms of the $C^*$-algebra of quasi-local observables.
The standard setting is the same as in the previous section, but we now assume that
the Hilbert spaces $\mathcal{H}_x$ associated to each $x \in \Gamma$, are all finite-dimensional.
The $C^*$-algebra of quasi-local observables $\mathcal{A}_{\Gamma}$ is then obtained as the completion 
with respect to the operator norm of $\mathcal{A}_{\rm loc}$:
\begin{equation}
\mathcal{A}_{\Gamma} =\overline{\mathcal{A}_{\rm loc}} = \overline{\bigcup_{\Lambda \subset \Gamma} 
\mathcal{A}_{\Lambda}}.
\end{equation}
If $\mathcal{H}_x$ is allowed to be infinite-dimensional it is typically necessary to 
work in the GNS representation of a reference state in order to have a well-defined
thermodynamic limit. Such an approach was used in \cite{Nach10} to define the 
dynamics of an infinite lattice of anharmonic oscillators. In order to avoid the need for
additional technical assumptions, for the remainder of this section
we restrict ourselves to quantum spin systems, i.e., the case of finite-dimensional $\mathcal{H}_x$.
It is not necessary, however, that $\dim \mathcal{H}_x$ is independent of $x$ or even 
uniformly bounded.

This section has two subsections. In the first, we prove that the finite volume
spectral flows, defined as in (\ref{automorphism}), have a 
well-defined thermodynamic limit. With these results in hand we can then, in the second 
subsection, complete the proof that gapped ground states connected 
by a curve of quasi-local interactions satisfying a suitable norm condition are equivalent 
under a quasi-local automorphism, in finite volume as well as in the thermodynamic
limit. But first we describe in detail the class of systems to which our main result applies.

The systems under consideration here have finite dimensional local Hilbert spaces.
In this case, we can make a convenient choice of the projection map introduced in 
Section~\ref{sec:lemma} and needed for the application of Lemma~\ref{lma:Approx}, 
namely the natural extension of the partial trace. For any finite subset 
$\Lambda\subset\Gamma$, we define the conditional expectation 
$\Pi_\Lambda: \caA_{\Gamma} \rightarrow\caA_\Lambda$ as
\begin{align*}
\Pi_\Lambda = \mathrm{id}_{\caA_\Lambda}\otimes\tau_{\caA_{\Lambda^c}}\,,
\end{align*}
where for $\Lambda'\subset\Gamma$, 
\begin{equation*}
\tau_{\caA_{\Lambda'}} = \bigotimes_{x\in \Lambda'}\tau_{\caA_{x}}\,,\qquad \tau_{\caA_{x}} = \frac{1}{\mathrm{dim}\caH_x}\Tr_{\caH_x}
\end{equation*}
is the normalized trace over $\caA_{\Lambda'}$.
In particular, for any $Z\subset\Lambda\subset\Gamma$, the subprojections
\begin{equation*}
\Pi_{\Lambda,Z} = \left.\Pi_{Z}\right|_{\caA_{\Lambda}}
\end{equation*}
form a consistent family, namely for any $A\in\caA_{X}$, with $Z, X\subset \Lambda_m\subset\Lambda_n\subset\Gamma$, they satisfy
\begin{equation} \label{ConsistentProj}
\Pi_{\Lambda_n,Z}(A) = \Pi_{\Lambda_m,Z}(A)
\end{equation}
and the first index may be dropped.

Let $\Gamma$ be a countable set equipped with a metric and a function $F$
satisfying (\ref{eq:uniint}) and (\ref{eq:convc}). For $s \in [0,1]$, let $\Phi(s)$ be a
family of interactions, differentiable in $s$, for which there exists a number $a>0$ so that
\begin{equation} \label{eq:intbd}
\| \Phi \|_a + \| \partial \Phi \|_a  < \infty \, .
\end{equation} 
where the norm is defined in the paragraph containing \eq{Fnorm}.

Our proof of the existence of the thermodynamic limit requires some 
assumptions on the sequence of finite volumes $\left( \Lambda_n \right)_n$
on which the spectral flows are defined. 
Let $\left( \Lambda_n \right)_n$ be an increasing sequence of finite sets
which exhaust $\Gamma$ as $n\to \infty$. For convenience, we will regard the 
parameter $n$ as continuous with the understanding that, for any $n \geq 0$,
$\Lambda_n=\Lambda_{[n]}$, where $[n]$ denotes the integer
part of $n$. We will assume that there exist positive numbers $b_1$, $b_2$, and
$p$ such that
\begin{equation}\label{conditionsLambda_m}
d(\Lambda_m, \Lambda_n^c) \geq b_1(n-m),\quad \mbox{and }
| \Lambda_n| \leq b_2 n^p\, .
\end{equation}

We assume that there are finite intervals $I(s)$, smoothly depending on $s\in[0,1]$ such that, for all $n$, the finite-volume Hamiltonians $H_{\Lambda_n}(s) = \sum_{Z\subset\Lambda_n}\Phi(Z,s)$ have one or more eigenvalues in $I(s)$, and no eigenvalues outside $I(s)$ within a distance $\gamma>0$ of it.

Let us summarize the results of the previous sections, given these assumptions. If $P_{\Lambda_n}(s)$ denotes the spectral projections of $H_{\Lambda_n}(s)$ on $I(s)$, then there is a cocycle $\alpha_s^{\Lambda_n}$, the dual of which maps $P_{\Lambda_n}(0)$ to $P_{\Lambda_n}(s)$ for all $s\in[0,1]$. Its generator has a local structure given by $D_{\Lambda_n}(s) = \sum_{Z\subset\Lambda_n}\Psi_{\Lambda_n}(Z,s)$ where the interactions $\Psi_{\Lambda_n}(s)$ decay almost exponentially in the following sense,
\begin{equation}
\| \Psi_{\Lambda_n} \|_{F_{\Psi}} = \sup_{x,y\in{\Lambda_n}}\frac{1}{F_\Psi(d(x,y))}\sum_{\substack{Z\subset {\Lambda_n} \\ x,y \in Z}}\sup_{0\leq s\leq 1}\Vert\Psi_{\Lambda_n}(Z,s)\Vert< \infty\,,
\label{triple_norm_psi}\end{equation}
uniformly in $n$, where $F_\Psi$ satisfies again the uniform integrability and convolution property for a constant $C_\Psi$. 
Our estimates in Section~\ref{Sec:LRforD} demonstrate that a possible choice of $F_{\Psi}$ is given by (\ref{eq:definitionFpsi})
which decays sub-exponentially.

%%%%%%%%%%%%%%%%%%%%%%%%%%%%%
\subsection{Thermodynamic limit for the spectral flow}
\label{sec:thermodynamic_limit}

In order to prove the existence of the thermodynamic limit of the 
spectral flow $\alpha_s^{\Lambda}$, it is convenient to recall
an estimate from the proof of the existence of the thermodynamic limit of
Heisenberg evolutions $\tau_t^{H_{\Lambda}(s)}$, as proven e.g. in \cite{nach22006}.
In fact,  assuming that $\| \Phi \|_a < \infty$, the following bound is valid.

Take finite sets $X \subset \Lambda_m \subset \Lambda_n$. Note that
for any $A \in \mathcal{A}_X$, each $s \in [0,1]$, and any $t \in \mathbb{R}$,
\begin{eqnarray}
\left\| \tau_t^{H_{\Lambda_n}(s)}(A) - \tau_t^{H_{\Lambda_m}(s)}(A) \right\| & \leq & \sum_{\stackrel{Z \subset \Lambda_n:}{Z \cap \Lambda_n \setminus \Lambda_m \neq \emptyset}}
\int_0^{|t|} \left\| \left[ \Phi_Z(s), \tau_{|t|-r}^{H_{\Lambda_m}(s)}(A) \right] \right\| \, dr \, \nonumber \\
& \leq & \frac{K_a \|A \|}{a v_a} (e^{av_a|t|}-1) \sum_{\stackrel{Z \subset \Lambda_n:}{Z \cap \Lambda_n \setminus \Lambda_m \neq \emptyset}} \| \Phi_Z(s) \| \sum_{z \in Z, x \in X}F_a(d(x,z)) \nonumber \\
& \leq &   \frac{K_a \|A \|}{a v_a} C_{F_a} \| \Phi \|_a (e^{av_a|t|}-1) \sum_{\stackrel{x \in X}{y \in \Lambda_n \setminus \Lambda_m}} F_a(d(x,y)) \, . 
\end{eqnarray}
Since $F_a$ is uniformly integrable, this proves that the sequence $\left( \tau_t^{H_{\Lambda_n}(s)}(A) \right)_n$ is Cauchy. We will
denote the limit by $\tau_t^{\Gamma,s}(A)$, and observe that it satisfies 
\begin{equation} \label{eq:thermest}
\left\| \tau_t^{\Gamma , s}(A) - \tau_t^{H_{\Lambda_m}(s)}(A) \right\|  \leq  \frac{K_a \|A \|}{a v_a} C_{F_a} \| \Phi \|_a (e^{av_a|t|}-1) \sum_{\stackrel{x \in X}{y \in \Gamma \setminus \Lambda_m}} F_a(d(x,y)) \, ,
\end{equation}
uniformly for $s \in [0,1]$.

The following analogue of Lemma~\ref{lem:Delta} will be useful. Recall the definitions of $\Delta_{\Lambda}^n(A,s)$ from (\ref{eq:Deltaat0}) and (\ref{eq:Deltaatn}).
Define similarly $\Delta_{\Gamma}(A,s)$ with $\tau_t^{\Gamma,s}(A)$ replacing $\tau_t^{H_{\Lambda}(s)}(A)$ as appropriate.

\begin{lemma} \label{lem:Delbd2}
Let $\Lambda \subset \Gamma$ be a
finite set. For any $X \subset \Lambda$ and $A \in \mathcal{A}_X$, 
\begin{equation} \label{eq:Delbd2}
\left\| \Delta_{\Lambda}^n(A,s) - \Delta_{\Gamma}^n(A,s) \right\| \leq 4 \| A \| \min\left[ \| W_{\gamma} \|_1, |X| \sqrt{G(n-1) K(d(X, \Lambda^c))} \right]
\end{equation}
where $G$ is as in (\ref{eq:defg}) of Lemma~\ref{lem:Delta} and
\begin{equation}
K(x) = 4 I_{\gamma} \left( \frac{x }{2 v_a} \right) + \frac{K_a C_{F_a} \| \Phi \|_a \| F \|}{a^2 v_a^2} e^{-a x/2}.
\end{equation}
\end{lemma}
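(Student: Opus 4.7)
The plan is to establish two separate estimates on $\|\Delta_\Lambda^n(A,s) - \Delta_\Gamma^n(A,s)\|$ and then combine them through the elementary inequality $\min(a,b) \leq \sqrt{ab}$. The form of the target bound, in which $G(n-1)$ and $K(d(X,\Lambda^c))$ appear together under a common square root while $|X|$ appears only once, suggests exactly this structure: one factor decays in the locality parameter $n$ (as in Lemma~\ref{lem:Delta}), while the other decays in the distance of $X$ to the boundary of $\Lambda$, with no mixing between the two mechanisms in the individual bounds.

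First, I would observe that the proof of Lemma~\ref{lem:Delta} uses only the uniform Lieb-Robinson bound of Assumption~\ref{ass:unilrb} together with the decay of $W_\gamma$. Since \eq{eq:thermest} shows that the thermodynamic limit $\tau_t^{\Gamma,s}$ inherits the same Lieb-Robinson estimate from its finite-volume approximants, the identical inequality applies to $\Delta_\Gamma^n(A,s)$. The triangle inequality then delivers the first bound,
$$\|\Delta_\Lambda^n(A,s) - \Delta_\Gamma^n(A,s)\| \leq 4\|A\|\min\bigl[\|W_\gamma\|_1,\,|X|\,G(n-1)\bigr].$$

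Second, I would bound the difference directly, writing it as
$$\Delta_\Lambda^n(A,s) - \Delta_\Gamma^n(A,s) = \int_{-\infty}^\infty (\Pi_{X_n}-\Pi_{X_{n-1}})\bigl(\tau_t^{H_\Lambda(s)}(A) - \tau_t^{\Gamma,s}(A)\bigr) W_\gamma(t)\,dt,$$
with the convention $\Pi_{X_{-1}}\equiv 0$ when $n=0$. Using $\|\Pi_{X_n}-\Pi_{X_{n-1}}\|\leq 2$ and splitting at $T = d(X,\Lambda^c)/(2v_a)$, the tail $|t|>T$ contributes at most $8\|A\| I_\gamma(T)$ via the a priori bound $\|\tau_t^{H_\Lambda(s)}(A) - \tau_t^{\Gamma,s}(A)\|\leq 2\|A\|$ and the identity $\int_{|t|>T}|W_\gamma(t)|\,dt = 2I_\gamma(T)$. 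The body $|t|\leq T$ is controlled by \eq{eq:thermest}: the sum $\sum_{x\in X,\, y\in\Gamma\setminus\Lambda} F_a(d(x,y))$ is estimated using $d(x,y)\geq d(X,\Lambda^c)$ to extract the factor $|X|\,\|F\|\,e^{-a d(X,\Lambda^c)}$, and the factor $e^{av_aT}$ from integrating $(e^{av_a|t|}-1)$ against $\|W_\gamma\|_\infty\leq 1/2$ is absorbed by the choice of $T$, leaving $e^{-a d(X,\Lambda^c)/2}$. This yields the second bound $\|\Delta_\Lambda^n(A,s) - \Delta_\Gamma^n(A,s)\| \leq 4\|A\|\,|X|\,K(d(X,\Lambda^c))$. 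Applying $\min(a,b)\leq\sqrt{ab}$ to the nontrivial parts of the two bounds produces \eq{eq:Delbd2}.

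I expect the main technical difficulty to be the bookkeeping of constants in the second step, in particular reproducing the coefficient $K_a C_{F_a}\|\Phi\|_a\|F\|/(a^2 v_a^2)$ of the exponential term in $K$. This requires carefully tracking the $1/(av_a)$ factor already present in \eq{eq:thermest}, the further $1/(av_a)$ arising from $\int_{-T}^T (e^{av_a|t|}-1)\,dt \leq e^{av_aT}/(av_a)$ via $\|W_\gamma\|_\infty \leq 1/2$, and verifying that the choice $T = d(X,\Lambda^c)/(2v_a)$ produces exactly the claimed $e^{-a d(X,\Lambda^c)/2}$ decay. The structural aspects of the argument otherwise mirror the proof of Lemma~\ref{lem:Delta}.
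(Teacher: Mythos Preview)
Your proposal is correct and follows essentially the same approach as the paper: obtain two independent bounds on the difference---one decaying in $n$ via the Lieb-Robinson argument of Lemma~\ref{lem:Delta} (applied equally to $\tau_t^{\Gamma,s}$), the other decaying in $d(X,\Lambda^c)$ via the thermodynamic estimate~\eq{eq:thermest}---and then combine them through $\min(a,b)\leq\sqrt{ab}$. The only organizational differences are that the paper inserts $\pm\id$ to work with $(\Pi_{X_n}-\id)$ and $(\Pi_{X_{n-1}}-\id)$ separately rather than with $\Pi_{X_n}-\Pi_{X_{n-1}}$, and derives the $G$-bound directly on the difference rather than via the triangle inequality on the individual $\Delta$'s; neither changes the substance of the argument.
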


\begin{proof}
A uniform estimate, as shown in Lemma~\ref{lem:Delta}, clearly holds for $n=0$. We need only consider $n \geq 1$.
Using the consistency of the mappings $\Pi_{X_n}$, the difference $\Delta_{\Lambda}^n(A,s) - \Delta_{\Gamma}^n(A,s)$
can be written as a difference of two terms. As such, we need only bound the norm of
\begin{equation}
\int_{- \infty}^{\infty} \left( \Pi_{X_n} - \id \right) \left( \tau_t^{\Gamma,s}(A) - \tau_t^{H_{\Lambda}(s)}(A) \right) W_{\gamma}(t) \, dt \, .
\end{equation}
By Assumption~\ref{ass:unilrb}, $\tau_t^{H_{\Lambda}(s)}$ satisfies a Lieb-Robinson bound uniform in $\Lambda$ and $s$.
In this case, the limit $\tau_t^{\Gamma,s}$ does as well. Arguing then as in Lemma~\ref{lem:Delta}, it is clear that
\begin{eqnarray} \label{eq:est1}
\left\| \int_{- \infty}^{\infty} \left( \Pi_{X_n} - \id \right) \left( \tau_t^{\Gamma,s}(A) - \tau_t^{H_{\Lambda}(s)}(A) \right) W_{\gamma}(t) \, dt \, \right\| \hspace{4cm} \nonumber \\ \hspace{2cm} \leq 2 | X| \| A \| \left( 4 I_{\gamma} \left( \frac{n}{2 v_a} \right) + \frac{K_a \| F \|}{a v_a} e^{-an/2} \right) \, .
\end{eqnarray}

Since the projections $\Pi_{X_n}$ are norm one maps, we may also argue using the thermodynamic estimate (\ref{eq:thermest}).
In fact, 
\begin{equation*}
\left\| \int_{- \infty}^{\infty} \left( \Pi_{X_n} - \id \right) \left( \tau_t^{\Gamma,s}(A) - \tau_t^{H_{\Lambda}(s)}(A) \right) W_{\gamma}(t) \, dt \, \right\| \leq 
2  \int_{- \infty}^{\infty} \left\| \tau_t^{\Gamma,s}(A) - \tau_t^{H_{\Lambda}(s)}(A) \right\| \left| W_{\gamma}(t) \right| \, dt \, .
\end{equation*}
Now for $|t| \leq T$, we have that
\begin{eqnarray}
2 \int_{|t| \leq T} \left\| \tau_t^{\Gamma,s}(A) - \tau_t^{H_{\Lambda}(s)}(A) \right\| \left| W_{\gamma}(t) \right| \, dt & \leq &  \frac{K_a C_{F_a} \| \Phi \|_a}{a v_a}  \|A \| \sum_{\stackrel{x \in X}{y \in \Gamma \setminus \Lambda}} F_a(d(x,y)) \int_{|t| \leq T} e^{a v_a |t|} \, dt \nonumber \\ 
& \leq &  \frac{2 K_a C_{F_a} \| \Phi \|_a}{a^2 v_a^2}  \|A \| |X| \| F \| e^{-a d(X, \Lambda^c)}  e^{av_aT} \, ,
\end{eqnarray}
whereas for $|t| >T$, the bound
\begin{equation}
2 \int_{|t| > T} \left\| \tau_t^{\Gamma,s}(A) - \tau_t^{H_{\Lambda}(s)}(A) \right\| \left| W_{\gamma}(t) \right| \, dt  \leq 8 \| A \| I_{\gamma}(T) \,,
\end{equation}
is clearly true. In this case, the choice $T = d(X, \Lambda^c)/(2v_a)$ yields the estimate
\begin{eqnarray} \label{eq:est2}
\left\| \int_{- \infty}^{\infty} \left( \Pi_{X_n}  - \id \right) \left( \tau_t^{\Gamma,s}(A) - \tau_t^{H_{\Lambda}(s)}(A) \right) W_{\gamma}(t) \, dt \, \right\|  \hspace{4cm} \nonumber \\ \hspace{1cm} \leq 2 | X| \| A \|  \left( 4 I_{\gamma} \left( \frac{d(X, \Lambda^c) }{2 v_a} \right) + \frac{K_a C_{F_a} \| \Phi \|_a \| F \|}{a^2 v_a^2} e^{-a d(X,\Lambda^c)/2} \right) \, .
\end{eqnarray}
Combining the results from (\ref{eq:est1}) and (\ref{eq:est2}), as well as the bound corresponding to $\Pi_{X_{n-1}}$, the estimate (\ref{eq:Delbd}) follows.
\end{proof}

We can now state and prove the existence of the thermodynamic limit for the spectral flow $\alpha_s^{\Lambda_n}$.
Recall that for any finite sets $Z \subset \Lambda \subset \Gamma$, we have defined 
\begin{equation}
\Psi_{\Lambda}(Z,s) = \sum_{\stackrel{Y,n \geq 0:}{Y_n=Z}} \Delta_{\Lambda}^n( \Phi_Y'(s),s) \, .
\end{equation}
By analogy, set
\begin{equation} \label{eq:PsiG}
\Psi_{\Gamma}(Z,s) = \sum_{\stackrel{Y,n \geq 0:}{Y_n=Z}} \Delta_{\Gamma}^n( \Phi_Y'(s),s) \, .
\end{equation}
We will show later in this subsection that the $s$-dependent interaction $\Psi_\Gamma(s)$ is the limit as $\Lambda\to\Gamma$
of $\Psi_\Lambda(s)$.  First, we show the existence of the limiting spectral flow $\alpha^\Gamma_s$
in Theorem \ref{thm:ThermoLimit}. Then, we argue that it is also the limit of the automorphisms
generated by finite volume restrictions of the limiting interaction $\Psi_\Gamma(s)$.

\begin{thm} \label{thm:ThermoLimit}
Let $\left(\alpha_s^{\Lambda_n}\right)_n$ denote the sequence of flows associated with the sets $\Lambda_n\subset\Gamma$. Then there exists a flow $\alpha_s^\Gamma$ defined on the quasi-local algebra $\caA_{\Gamma}$ such that for all $A\in\caA_{\rm loc}$,
\begin{equation*}
\lim_{n\to\infty}\|\alpha_s^{\Lambda_n}(A)-\alpha_s^{\Gamma}(A)\| = 0\,,
\end{equation*}
uniformly for all $s\in[0,1]$.
\end{thm}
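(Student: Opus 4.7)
The plan is to prove the sequence $(\alpha_s^{\Lambda_n}(A))_n$ is uniformly Cauchy in $s \in [0,1]$ for each $A \in \mathcal{A}_{\rm loc}$, then define $\alpha_s^\Gamma(A)$ as its limit and extend to all of $\mathcal{A}_\Gamma$ by density, exploiting the fact that each $\alpha_s^{\Lambda_n}$ is a $\ast$-automorphism of norm one. The workhorse identity will be
$$
\alpha_s^{\Lambda_m}(A) - \alpha_s^{\Lambda_n}(A) = i \int_0^s \alpha_u^{\Lambda_m}\!\left(\left[\alpha_{u,s}^{\Lambda_n}(A),\, D_{\Lambda_m}(u) - D_{\Lambda_n}(u)\right]\right) du ,
$$
obtained by differentiating $u \mapsto \alpha_u^{\Lambda_m}\circ\alpha_{u,s}^{\Lambda_n}(A)$, where $\alpha_{u,s}^{\Lambda_n}$ denotes the spectral-flow propagator on $\Lambda_n$ from $u$ to $s$. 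Since $\alpha_u^{\Lambda_m}$ is isometric, the entire estimate reduces to controlling the integrand in terms of localized pieces of $D_{\Lambda_n}(u)-D_{\Lambda_m}(u)$.

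Fixing $X = \mathrm{supp}(A) \subset \Lambda_{m_0}$ and $n > m > m_0$, I will decompose
$$
D_{\Lambda_n}(u) - D_{\Lambda_m}(u) = \sum_{Z\subset\Lambda_m}\bigl(\Psi_{\Lambda_n}(Z,u) - \Psi_{\Lambda_m}(Z,u)\bigr) + \sum_{\substack{Z\subset\Lambda_n\\ Z\not\subset\Lambda_m}} \Psi_{\Lambda_n}(Z,u).
$$
For the boundary sum (second term), I apply the Lieb--Robinson bound for the spectral flow (Theorem~\ref{thm:lrbpe}), which holds for $\alpha_{u,s}^{\Lambda_n}$ with $g(s-u)$ in place of $g(s)$ because the time-dependent version (Theorem~\ref{thm:lrbtdints}) uses only the uniform bound $\sup_n \|\Psi_{\Lambda_n}\|_{F_\Psi} < \infty$ from Theorem~\ref{thm:intdec}. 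Combining with the convolution property of $F_\Psi$ and summing gives a bound of the form
$$
2\|A\|\, g(s-u)\, \|\Psi\|_{F_\Psi} C_{F_\Psi} \sum_{x \in X}\sum_{y \in \Lambda_n\setminus\Lambda_m} F_\Psi(d(x,y)),
$$
which, since $d(X,\Lambda_m^c)\geq b_1(m-m_0)$ and $F_\Psi$ is uniformly integrable, tends to $0$ as $m\to\infty$ uniformly in $n$ and $s$.

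For the correction sum (first term), I will use Lemma~\ref{lem:Delbd2} together with the triangle inequality through $\Psi_\Gamma(Z,u)$ to bound $\|\Psi_{\Lambda_n}(Z,u)-\Psi_{\Lambda_m}(Z,u)\|$ by a quantity that is small when $d(Z,\Lambda_m^c)$ is large. Then I split $Z\subset\Lambda_m$ into those with $Z\subset\Lambda_{\lfloor (m+m_0)/2\rfloor}$, where Lemma~\ref{lem:Delbd2} together with Lemma~\ref{lem:simple_bounds} gives an almost-exponentially small norm, and those meeting $\Lambda_m\setminus\Lambda_{\lfloor(m+m_0)/2\rfloor}$, where the Lieb--Robinson bound for $\alpha_{u,s}^{\Lambda_n}$ produces a factor $F_\Psi(b_1(m-m_0)/2)$ while the interaction norms are controlled uniformly by Lemma~\ref{lem:simple_bounds} and (\ref{triple_norm_psi}). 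Since both decay rates are sub-exponential (of $u_{2/7}$-type) while $|\Lambda_n|\leq b_2 n^p$ grows only polynomially, the total is summable and vanishes as $m\to\infty$ uniformly in $n$ and $s\in[0,1]$.

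The main obstacle is the delicate balancing in the correction term: it must simultaneously exploit the decay in $d(Z,\Lambda_m^c)$ from the thermodynamic-limit estimate for the $\Delta$-operators and the decay in $d(X,Z)$ from the Lieb--Robinson bound for the spectral flow, while absorbing the polynomial volume factor. Once this balance is verified, the Cauchy property and uniformity in $s$ follow, and $\alpha_s^\Gamma$ extends uniquely to a strongly continuous family of $\ast$-automorphisms of $\mathcal{A}_\Gamma$.
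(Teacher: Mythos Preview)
Your proposal is correct and follows the same overall strategy as the paper's proof, with two cosmetic differences worth noting. First, the paper works with the \emph{inverse} flows $\tilde\alpha_s^{\Lambda}=(\alpha_s^{\Lambda})^{-1}$, whose derivative satisfies the simpler relation $f'(s)=i[D_{\Lambda_n}(s),f(s)]+i[D_{\Lambda_n}(s)-D_{\Lambda_m}(s),\tilde\alpha_s^{\Lambda_m}(A)]$, so no two-parameter propagator $\alpha_{u,s}^{\Lambda_n}$ is needed; your Duhamel identity is equally valid, and your observation that Theorem~\ref{thm:lrbtdints} applies to the propagator with $g(|s-u|)$ is correct. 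Second, for the correction sum over $Z\subset\Lambda_m$ the paper uses a \emph{three}-way split into $\caP_1=\{Z\subset\Lambda_{m/3}^c\}$ (Lieb--Robinson bound), $\caP_2=\{Z\subset\Lambda_{2m/3}\}$ (Lemma~\ref{lem:Delbd2} through $\Psi_\Gamma$, exactly as you do), and $\caP_3=\{Z:\ Z\cap\Lambda_{m/3}\neq\emptyset,\ Z\cap\Lambda_{2m/3}^c\neq\emptyset\}$ (trivial commutator bound plus the interaction norm, since such $Z$ have large diameter). Your two-way split merges $\caP_1$ and $\caP_3$: for $Z$ meeting the outer shell you pick $z_0\in Z$ there, apply the Lieb--Robinson bound, and combine the convolution property of $F_\Psi$ with the uniform bound on $\|\Psi\|_{F_\Psi}$ to extract the decay $\sum_{x\in X,\,z_0}F_\Psi(d(x,z_0))$ after summing over $y\in Z$---precisely the mechanism already used for the boundary sum. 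Both routes yield the same sub-exponential convergence; the paper's three-region split is slightly more explicit about which estimate is doing the work in each regime, while yours is marginally shorter.
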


\begin{proof}
We begin by noting that the strong limit of an automorphism is automatically an
automorphism and that convergence of a sequence of automorphisms $\sigma_n\to\sigma$,
is equivalent to the convergence of the inverses to the inverse automorphism, i.e.,
$\sigma_n^{-1}\to \sigma^{-1}$. Using these observations and by standard completeness arguments it is therefore sufficient to establish that for all $A\in \mathcal{A}_{\rm loc}$,
the sequence $(\alpha^{\Lambda_n}_s)^{-1}(A)$ is Cauchy. Without loss of generality, we assume that $A\in\caA_{\Lambda_0}$ and we use the notation
$\tilde\alpha_s^{\Lambda_n}=(\alpha^{\Lambda_n}_s)^{-1}$.
Then, for $n>m$, define
$$
f(s)= \tilde\alpha_s^{\Lambda_n}(A)- \tilde\alpha_s^{\Lambda_m}(A).
$$
and observe that
\begin{eqnarray*}
f^\prime(s)&=& i[D_{\Lambda_n}(s),  \tilde\alpha^{\Lambda_n}_s(A)] 
- i[D_{\Lambda_m}(s),  \tilde\alpha^{\Lambda_m}_s(A)]\\
&=&  i[D_{\Lambda_n}(s),  f(s)] +i[D_{\Lambda_n}(s)-D_{\Lambda_m}(s),\tilde\alpha_s^{\Lambda_m}(A)]\,.
\end{eqnarray*}
Hence,
\begin{equation}
\left\| \tilde\alpha_s^{\Lambda_n}(A)- \tilde\alpha_s^{\Lambda_m}(A) \right\| = \Vert f(s)\Vert\leq
 \int_0^s \Vert [D_{\Lambda_n}(r)-D_{\Lambda_m}(r),\tilde\alpha_r^{\Lambda_m}(A)]\|\, dr
\,.
\end{equation}
We will show that the right-hand-side goes to zero as $n,m \to \infty$.

We begin by writing the difference as
\begin{equation*}
D_{\Lambda_n}(r)-D_{\Lambda_m}(r) = \sum_{\substack{Z\subset\Lambda_n : \\ Z\cap(\Lambda_n\setminus\Lambda_m)\neq\emptyset}}\Psi_{\Lambda_n}(Z,r) + \sum_{Z\subset \Lambda_m}\left(\Psi_{\Lambda_n}(Z,r) - \Psi_{\Lambda_m}(Z,r)\right)\,.
\end{equation*}
For the first term, the Lieb-Robinson bound of Theorem~\ref{thm:lrbpe}, which clearly applies to $\tilde{\alpha}_r^{\Lambda_m}$ as well, yields 
\begin{equation*}
\|[\Psi_{\Lambda_n}(Z,r), \tilde\alpha_r^{\Lambda_m}(A)]\|\leq 2\|A\|\|\Psi_{\Lambda_n}(Z,r)\| g(r) \sum_{x\in \Lambda_0,y\in Z}F_\Psi(d(x,y)) \, .
\end{equation*}
After summing over $Z$ and integrating, we find that
\begin{multline*}
\int_0^s \sum_{\substack{Z \subset \Lambda_n: \\Z\cap(\Lambda_n\setminus\Lambda_m)\neq\emptyset}} \|[\Psi_{\Lambda_n}(Z,r), \tilde\alpha_r^{\Lambda_m}(A)]\| \, dr \, \\
\leq 2 \|A\| \int_0^s  \sum_{\substack{Z \subset \Lambda_n: \\Z\cap(\Lambda_n\setminus\Lambda_m)\neq\emptyset}} \| \Psi_{\Lambda_n}(Z,r) \| g(r) \, dr \sum_{x \in \Lambda_0, y \in Z} F_{\Psi}(d(x,y)) \\
\leq 2 \|A\| \int_0^s g(r) \, dr \, \sum_{y \in \Lambda_n , z\in\Lambda_n\setminus\Lambda_m}\sum_{\substack{Z \subset \Lambda_n: \\ z,y \in Z}} 
\sup_{0 \leq  r \leq 1} \|\Psi_{\Lambda_n}(Z,r)\| \sum_{x\in \Lambda_0}F_\Psi(d(x,y)) \\
\leq 2 \| A \| \| \Psi \| C_\Psi \int_0^s g(r) \, dr \,  \sum_{z\in\Lambda_n\setminus\Lambda_m} \sum_{x \in \Lambda_0} F_\Psi(d(x,z))
\end{multline*}
which vanishes as $m<n\to\infty$ by the uniform integrability of $F_\Psi$.

To control the second term, we arrange the set of subsets of $\Lambda_m$, which we denote by $\caP( \Lambda_m)$, as a union of
three sets: $\caP(\Lambda_m) = \caP_1 \cup \caP_2 \cup \caP_3$ where
\begin{equation}
\caP_1 = \{ Z \in \caP( \Lambda_m) : Z\subset\Lambda_{m/3}^c\},  \quad \caP_2 = \{Z \in \caP(\Lambda_m) : Z\subset\Lambda_{2m/3}\}, 
\end{equation}
and
\begin{equation}
\caP_3 = \{ Z \in \caP(\Lambda_m) : Z \cap \Lambda_{m/3} \neq \emptyset \quad \mbox{and} \quad Z \cap \Lambda_{2m/3}^c \neq \emptyset \}.
\end{equation}

We first sum over $\caP_1$. Repeating the argument we used above, in particular using the uniform Lieb-Robinson estimate for 
both $\Psi_{\Lambda_n}(Z,r)$ and $\Psi_{\Lambda_m}(Z,r)$, we find that
\begin{equation*}
\int_0^s \sum_{Z\in\caP_1}\|\left[\Psi_{\Lambda_n}(Z,r) - \Psi_{\Lambda_m}(Z,r), \tilde\alpha_r^{\Lambda_m}(A)\right]\| \, dr \,  \leq 4 \| A \| \| \Psi \| C_\Psi \int_0^s g(r) \, dr \, \sum_{z \in\Lambda_{m/3}^c } \sum_{x \in \Lambda_0} F_\Psi(d(x,z))\, ,
\end{equation*}
and this bound decays to zero as $m\to\infty$.

We next estimate the sum over $\caP_2$. We begin by trivially bounding 
\begin{eqnarray}
\left\| \left[ \sum_{Z \in \caP_2} \left( \Psi_{\Lambda_n}(Z,r)-\Psi_{\Lambda_m}(Z,r) \right), \tilde\alpha_r^{\Lambda_m}(A) \right] \right\| & \leq & 
2 \| A \| \left\| \sum_{Z \in \caP_2} \left( \Psi_{\Lambda_n}(Z,r) - \Psi_{\Gamma}(Z,r) \right) \right\|  \nonumber \\
& \mbox{ } & \quad + 2 \| A \| \left\| \sum_{Z \in \caP_2} \left( \Psi_{\Gamma}(Z,r) - \Psi_{\Lambda_m}(Z,r) \right) \right\|
\end{eqnarray}
where we are using the notation from (\ref{eq:PsiG}). Each of the terms on the right-hand-side above
will be estimated similarly. In fact, note that 
\begin{equation}
\sum_{Z \in \caP_2} \left( \Psi_{\Gamma}(Z,r) - \Psi_{\Lambda_m}(Z,r) \right) = \sum_{Z \subset \Lambda_{2m/3}} \sum_{n \geq 0} \sum_{\substack{Y \subset \Gamma: \\ Y_n=Z}}
\left( \Delta_{\Gamma}^n( \Phi_Y'(r),r) - \Delta_{\Lambda_m}(\Phi_Y'(r),r) \right)
\end{equation}
implies a bound of the form
\begin{eqnarray} \label{eq:p2bd}
\left\| \sum_{Z \in \caP_2} \left( \Psi_{\Gamma}(Z,r) - \Psi_{\Lambda_m}(Z,r) \right) \right\| & \leq & \sum_{n \geq 0} \sum_{\substack{Y \subset \Gamma: \\ Y_n \subset \Lambda_{2m/3}}}
\left\| \Delta_{\Gamma}^n( \Phi_Y'(r),r) - \Delta_{\Lambda_m}(\Phi_Y'(r),r) \right\| \nonumber \\
& \leq & 4 \sum_{n \geq 0} \sqrt{G(n-1)} \sum_{y \in \Lambda_{2m/3}} \sum_{\substack{Y \subset \Gamma: \\ y \in Y}} |Y| \sup_{0 \leq r \leq 1} \| \Phi_Y'(r) \| \sqrt{K(d( \Lambda_{2m/3}, \Lambda_m^c))} \nonumber  \\
& \leq & 4 \| \partial \Phi \|_a F_a(0) \sum_{n \geq 0} \sqrt{G(n-1)} \cdot | \Lambda_{2m/3} | \sqrt{K(b_1 m/3)} \, .
\end{eqnarray}
Since $| \Lambda_{2m/3}| \leq b_2 (2m/3)^p$, it is clear that the above goes to zero as $m \to \infty$; uniformly for $0 \leq r \leq 1$. The bound
corresponding to (\ref{eq:p2bd}) with $\Lambda_m$ replaced with $\Lambda_n$ goes to zero at least as fast.

Finally, we sum over $\caP_3$. These sets extend over a large fraction of $\Lambda_m$, and therefore, they must correspond to terms with small norms. 
Indeed, 
\begin{multline*}
\int_0^s \sum_{Z\in\caP_3} \Vert \left[\Psi_{\Lambda_n}(Z,r)-\Psi_{\Lambda_m}(Z,r), \tilde\alpha_r^{\Lambda_m}(A)\right]\Vert  dr \\
 \leq 2s\|A\| \sum_{x\in\Lambda_{m/3}} \sum_{y\in\Lambda_{2m/3}^c} \sum_{\substack{Z \subset \Gamma: \\x,y \in Z}} \left( \sup_{0\leq r \leq 1}\|\Psi_{\Lambda_n}(Z,r)\| + \sup_{0\leq r \leq 1} \|\Psi_{\Lambda_m}(Z,r) \| \right)  \\ \leq 4 s\|A\|\|\Psi\| \sum_{x \in \Lambda_{m/3}} \sum_{y\in\Lambda_{2m/3}^c} F_\Psi(d(x,y))\,.
\end{multline*}
As is proven in Theorem~\ref{thm:intdec}, the function $F_{\Psi}(r) = u_{\mu}(r) F(r)$ for some $\mu>0$ and $r$ large enough. Thus the sum 
\begin{equation}
 \sum_{x \in \Lambda_{m/3}} \sum_{y\in\Lambda_{2m/3}^c} F_\Psi(d(x,y)) \leq \| F \| | \Lambda_{m/3}| u_{\mu}(b_1 m/3) \, 
\end{equation}
which goes to zero as $m \to \infty$. 
We have shown that all terms vanish in the limit, and therefore, the sequence $( \tilde\alpha_s^{\Lambda_n}(A))_n$ is Cauchy as claimed.
\end{proof}

The above result establishes the existence of the spectral flow in the thermodynamic
limit, and we have denoted that limiting flow by $\alpha^\Gamma_s$. Arguments similar to
those used in the proof of Theorem \ref{thm:ThermoLimit} show that $\alpha^\Gamma_s$ is also
the thermodynamic limit of the flows generated by the interaction $\Psi_\Gamma(s)$,
defined in \eq{eq:PsiG}, restricted to the sequence of finite volumes $\Lambda_m$.
This is not a surprise since, as the next proposition shows, $\Psi_\Gamma(s)$ is the 
limit of $\Psi_\Lambda(s)$ as $\Lambda\to\Gamma$. In this proposition, we consider
the interactions $\Psi_\Lambda(s)$ as functions defined on the power set of $\Lambda$,
$\mathcal{P}(\Lambda)$, with values in the algebra of observables. As such, we can
consider the interactions obtained by restriction to a subset of $\mathcal{P}(\Lambda)$,
such as $\Psi_{\Lambda}(s) \vert_{\mathcal{P}(\Lambda_0)}$, for $\Lambda_0\subset\Lambda$.

\begin{prop}\label{prop:intconv}
For any finite $\Lambda \subset \Gamma$ and $Z \subset \Lambda$,
the following estimate holds
\begin{equation}\label{pointwise}
\|\Psi_{\Lambda}(Z,s)-\Psi_{\Gamma}(Z,s)\| 
\leq C \| \partial \Phi \|_a  |Z|  \sqrt{K(d(Z, \Lambda^c))} \,
\end{equation}
where
$$
C=4  F(0) \left( \sqrt{\Vert W_\gamma\Vert_1}+ \sum_{n\geq 0}\sqrt{G(n)}\right)\, .
$$
Let $\left( \Lambda_m \right)_m$ be a sequence of finite volumes  satisfying the properties 
\eq{conditionsLambda_m}. Then, for any $\beta \in(0,1)$, one has
\be\label{norm_convergence}
\lim_{m\to\infty}\Vert \Psi_{\Lambda_m}\vert_{\mathcal{P}(\Lambda_{m-m^\beta})}
-\Psi_\Gamma\vert_{\mathcal{P}(\Lambda_{m-m^\beta})}\Vert_{F_\Psi}=0
\ee
\end{prop}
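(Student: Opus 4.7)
The plan for Proposition \ref{prop:intconv} is as follows. For the pointwise estimate \eq{pointwise}, I note that the constraint $Y_n = Z$ forces $Y \subset Z$, so both $\Psi_\Lambda(Z,s)$ and $\Psi_\Gamma(Z,s)$ are indexed by the same set of pairs $(Y,n)$ and
\[
\Psi_\Lambda(Z,s) - \Psi_\Gamma(Z,s) = \sum_{n \geq 0} \sum_{\substack{Y:\, Y_n = Z}} \bigl[\Delta_\Lambda^n(\Phi_Y'(s),s) - \Delta_\Gamma^n(\Phi_Y'(s),s)\bigr].
\]
Each summand is estimated by Lemma \ref{lem:Delbd2}; since $Y \subset Z$ and $K$ is monotone decreasing, $K(d(Y,\Lambda^c)) \leq K(d(Z,\Lambda^c))$, so each term is bounded by $4 \|\partial\Phi_Y(s)\|\sqrt{G(n-1)\,K(d(Z,\Lambda^c))}$ (with the convention $G(-1) = \|W_\gamma\|_1$ handling the $n=0$ case). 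Summing over $Y \subset Z$ at fixed $n$ via Lemma \ref{lem:simple_bounds} applied to the interaction $\partial\Phi$ yields a factor $F(0)\|\partial\Phi\|_a|Z|$, and the outer sum $\sum_{n \geq 0}\sqrt{G(n-1)}$ produces the stated constant $C$.

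For the $F_\Psi$-norm convergence \eq{norm_convergence}, I would mirror the proof of Theorem \ref{thm:intdec} applied to the difference $\Psi_{\Lambda_m}(Z,s) - \Psi_\Gamma(Z,s)$. The restriction $Z \subset \Lambda_{m-m^\beta}$ combined with \eq{conditionsLambda_m} forces $d(Z, \Lambda_m^c) \geq b_1 m^\beta$, and hence $d(Y, \Lambda_m^c) \geq b_1 m^\beta$ for every $Y \subset Z$. Lemma \ref{lem:Delbd2} then bounds every $\Delta$-difference by $4\|\partial\Phi_Y(s)\|\sqrt{G(n-1)\, K(b_1 m^\beta)}$; extracting the $(Y,Z,n)$-independent factor $\sqrt{K(b_1 m^\beta)}$ reduces the task to estimating the same over-counting sums as in the proof of Theorem \ref{thm:intdec}, but with $G(n-1)$ replaced by $\sqrt{G(n-1)}$ throughout. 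Since $\sqrt{u_{2/7}} = u_{1/7}$, the series $\sum_{n \geq 0}\sqrt{G(n-1)}$ and the tails $\hat{G}_{\mathrm{new}}(m) = \sum_{n \geq m}\sqrt{G(n-1)}$ remain summable with sub-exponential decay, so the same split of the outer sum into $m' \leq m_0 \sim (1-\epsilon)d(x,y)/2$ and $m' > m_0$ yields a bound of the form $C' \sqrt{K(b_1 m^\beta)}\, F_\Psi(d(x,y))$ on the restricted sum, uniformly in $x,y \in \Gamma$. Dividing by $F_\Psi(d(x,y))$ and using $K(b_1 m^\beta) \to 0$ as $m \to \infty$ yields the claim.

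The main technical obstacle is the bookkeeping required for the parameter $\mu$ in the definition \eq{eq:definitionFpsi} of $F_\Psi$: the halved decay exponent $u_{1/7}$ coming from the $\sqrt{G}$-argument forces $\mu$ to be at most roughly $1/7$ instead of $2/7$. This is permitted by the freedom in \eq{eq:Fgrowth} (which admits any $\delta \in (0,2/7)$), so fixing $\delta<1/7$ and $\mu \leq 1/7 - \delta - \delta'$ for small $\delta'>0$ makes the same $F_\Psi$ simultaneously compatible with the proof of Theorem \ref{thm:intdec} and with the argument above.
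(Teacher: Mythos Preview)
Your argument for the pointwise estimate \eq{pointwise} is exactly the paper's: unfold the definition of $\Psi_\Lambda-\Psi_\Gamma$ as a sum over pairs $(Y,n)$ with $Y_n=Z$, apply Lemma~\ref{lem:Delbd2} term by term, use $Y\subset Z$ to replace $K(d(Y,\Lambda^c))$ by $K(d(Z,\Lambda^c))$, and then sum $\sum_{Y\subset Z}\Vert\partial\Phi_Y(s)\Vert$ via Lemma~\ref{lem:simple_bounds}.

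For \eq{norm_convergence}, the paper gives only a one-line claim that it is a ``straightforward application of \eq{pointwise} and the properties of the function $K$''. Your proposal to rerun the proof of Theorem~\ref{thm:intdec} with $G(n-1)$ replaced by $\sqrt{G(n-1)}$ and with the global factor $\sqrt{K(b_1 m^\beta)}$ pulled out is the natural way to make this precise, and it is correct; note that the bare pointwise bound \eq{pointwise} alone is not enough, because the factor $|Z|$ has no decay and summing it over all $Z\ni x,y$ in $\Lambda_{m-m^\beta}$ would blow up --- one must return to the $(Y,n)$-level and exploit the decay in $n$, exactly as you describe. Your observation that the halved exponent $u_{1/7}$ forces a smaller choice of $\mu$ in the definition of $F_\Psi$ is a genuine bookkeeping point the paper glosses over; since the assumption \eq{eq:Fgrowth} allows any $\delta\in(0,2/7)$ and a smaller $\mu$ only weakens the $F_\Psi$-norm, this is harmless once $\delta<1/7$ is fixed from the outset.
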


\begin{proof} 
To prove the estimate \eq{pointwise} for fixed $Z$, we apply Lemma~\ref{lem:Delbd2}
with $A=\Phi_Z'(s)$ and then Lemma~\ref{lem:simple_bounds} as follows:
\begin{eqnarray}
\|\Psi_{\Lambda}(Z,s)-\Psi_{\Gamma}(Z,s)\| & \leq & \sum_{\stackrel{Y,n \geq 0:}{Y_n=Z}} \left\| \Delta_{\Lambda}^n( \Phi_Y'(s),s) - \Delta_{\Gamma}^n( \Phi_Y'(s),s) \right\| \nonumber \\
& \leq & 4  \sum_{\stackrel{Y,n \geq 0:}{Y_n=Z}} |Y| \| \Phi_Y'(s) \| \sqrt{G(n-1) K(d(Y, \Lambda^c))} \nonumber \\
& \leq & 4 \left( \sqrt{\Vert W_\gamma\Vert_1}+\sum_{n\geq 0} \sqrt{G(n)}\right)\sqrt{K(d(Z,\Lambda^c))} 
\sum_{Y\subset Z} |Y| \| \Phi_Y'(s) \|  \nonumber \\ 
 & \leq & 4 \| \partial \Phi \|_a F(0) \left( \sqrt{\Vert W_\gamma\Vert_1}+\sum_{n\geq 0} \sqrt{G(n)} \right)
 |Z| \sqrt{K(d(Z, \Lambda^c))} \, ,
\end{eqnarray}
which is the claimed result. To prove \eq{norm_convergence} is now a straightforward application
of \eq{pointwise} and the properties of the function $K$ defined in Lemma~\ref{lem:Delbd2}.
\end{proof}

\begin{prop}\label{prop:properties}
The spectral flow $\alpha^\Gamma_s$ for the infinite system has the following properties:
\begin{enumerate}
\item $\left( \alpha^\Gamma_s \right)_{s\in [0,1]}$ is a strongly continuous cocycle of automorphisms
of the $C^*$-algebra of quasi-local observables, and it is the thermodynamic limit
of the finite-volume cocycles generated by the interaction $\Psi_\Gamma(s)$.
\item $\alpha^\Gamma_s$ satisfies the Lieb-Robinson bound
\begin{equation}
\left\|  \left[ \alpha_s^{\Gamma}(A), B \right] \right\| \leq 2 \| A \| \| B \| \min \left[ 1, g(s) \sum_{x \in X, y \in Y} F_{\Psi}(d(x,y)) \right] ,
\end{equation}
for any $A \in \mathcal{A}_X$, $B \in \mathcal{A}_Y$, and $0 \leq s \leq 1$, 
with $g$ given by
\begin{equation}
C_{F_{\Psi}} \cdot  g(t) = \left\{ \begin{array}{cc} e^{2 \| \Psi \| C_{F_\Psi} |t|} - 1  & \mbox{if } d(X,Y) >0, \\
e^{2 \| \Psi \| C_{F_\Psi} |t|} & \mbox{otherwise}. \end{array} \right.
\end{equation}
and the quantities $F_\Psi$, $C_{F_\Psi}$, and $\Vert \Psi\Vert_{F_\Psi}$ as given in
Theorem~\ref{thm:lrbpe}.
\item If $\beta$ is a local symmetry of $\Phi$, \ie, an automorphism such that $\beta(\Phi(X,s))
=\Phi(X,s)$, for all $X\subset \Gamma$ and $s\in[0,1]$, then $\beta$ is also a symmetry of 
$\alpha_s^\Gamma$, \ie, $\alpha^\Gamma_s\circ\beta =\alpha^\Gamma_s$ for all $s\in [0,1]$.
\item Suppose $\Gamma$ is a lattice with a group of translations $\left( T_x \right)_x$ and $\left( \pi_{T_x} \right)_x$
is the representation of the translations as automorphisms of the quasi-local algebra
$\mathcal{A}_\Gamma$. Then, if $\Phi$ is translation invariant, \ie, $\Phi(T_x(X),s)=
\pi_{T_x}(\Phi(X,s))$, for all $X\subset\Gamma$, and $s\in [0,1]$, then 
$\alpha_s^\Gamma$ commutes with $\pi_{T_x}$, for all $x$ and $s$.
\end{enumerate}
\end{prop}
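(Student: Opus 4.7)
The plan is to address the four claims sequentially, leaning on Theorem~\ref{thm:ThermoLimit}, Proposition~\ref{prop:intconv}, and Theorem~\ref{thm:lrbpe}, all of whose constants are uniform in the finite-volume parameter.

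For claim (i), the existence of $\alpha^\Gamma_s$ as a norm limit of $\alpha^{\Lambda_n}_s$ on $\caA_{\rm loc}$ is already furnished by Theorem~\ref{thm:ThermoLimit}. Since each $\alpha^{\Lambda_n}_s$ is a $*$-automorphism and norm limits preserve the algebraic structure on the dense subalgebra $\caA_{\rm loc}$, the limit extends by continuity to a $*$-endomorphism of $\caA_\Gamma$; running the same argument on $(\alpha^{\Lambda_n}_s)^{-1}$ (as the proof of Theorem~\ref{thm:ThermoLimit} already does) furnishes the inverse, so $\alpha^\Gamma_s$ is a genuine automorphism. Strong continuity in $s$ follows, for $A\in\caA_{\rm loc}$, from the triangle inequality $\Vert\alpha^\Gamma_s(A)-\alpha^\Gamma_{s'}(A)\Vert \leq 2\sup_u\Vert\alpha^{\Lambda_n}_u(A)-\alpha^\Gamma_u(A)\Vert + \Vert\alpha^{\Lambda_n}_s(A)-\alpha^{\Lambda_n}_{s'}(A)\Vert$: the first term is uniformly small in $s$ by Theorem~\ref{thm:ThermoLimit}, while the second is continuous in $s$ because $D_{\Lambda_n}(s)$ is bounded; isometricity of each $\alpha^\Gamma_s$ then lifts continuity to $\caA_\Gamma$ by density. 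The cocycle identity is inherited from finite volumes. For the identification with the flow generated by $\Psi_\Gamma(s)$, I introduce $\widetilde\alpha^{\Lambda_m}_s$, the finite-volume flow generated by $\sum_{Z\subset\Lambda_m}\Psi_\Gamma(Z,s)$, and control $\alpha^{\Lambda_m}_s-\widetilde\alpha^{\Lambda_m}_s$ via the same Gronwall-type comparison used in the proof of Theorem~\ref{thm:ThermoLimit}, now invoking Proposition~\ref{prop:intconv} (in particular the $F_\Psi$-norm convergence~\eq{norm_convergence}) to bound the difference of the two generators.

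Claim (ii) is passed directly to the limit: fix $A\in\caA_X$, $B\in\caA_Y$ and any $\Lambda_n\supset X\cup Y$; Theorem~\ref{thm:lrbpe} applies with constants independent of $n$, and $C\mapsto \Vert[C,B]\Vert$ is norm-continuous in $C$, so the strong limit of $\alpha^{\Lambda_n}_s(A)$ inherits the same bound. Claims (iii) and (iv) reduce to finite-volume statements that pass to the limit. If $\beta(\Phi(X,s))=\Phi(X,s)$ for all finite $X$ and all $s$, then $\beta$ commutes with each $H_\Lambda(s)$ and hence with $\tau_t^{H_\Lambda(s)}$; the explicit formula~\eq{eq:D_S} then forces $\beta$ to commute with $D_\Lambda(s)$, and uniqueness of the solution of~\eq{QuasiAdabPr} gives $\beta\circ U_\Lambda(s)=U_\Lambda(s)\circ \beta$, whence $\beta\circ\alpha^\Lambda_s=\alpha^\Lambda_s\circ\beta$ in finite volume (I read the displayed identity in the statement of (iii) as a typographical error for this commutation). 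Taking $\Lambda=\Lambda_n\to\Gamma$ preserves the commutation. The translation case (iv) is formally identical once the family of product states $\{\rho_x\}_{x\in\Gamma}$ defining the conditional expectations $\Pi_X$ is chosen to be translation invariant---e.g., the normalized tracial states, which are always at our disposal in the quantum spin setting of this section.

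The only non-mechanical step is the identification in (i) of $\alpha^\Gamma_s$ with the flow generated by $\Psi_\Gamma(s)$: Theorem~\ref{thm:ThermoLimit} produces the limit of the flows generated by $\Psi_{\Lambda_n}(s)$, but $\Psi_{\Lambda_n}\neq\Psi_\Gamma|_{\mathcal{P}(\Lambda_n)}$ in general, so a separate comparison of these two interactions on each $\Lambda_n$ is required. This is exactly the content of Proposition~\ref{prop:intconv}, so the work reduces to a Gronwall-type bookkeeping on top of Theorem~\ref{thm:lrbpe}, and I expect most of the writing to concentrate here; everything else is a direct passage to the limit.
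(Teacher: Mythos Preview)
Your proposal is correct and is precisely the detailed unpacking that the paper elides: the paper's own proof of this proposition is the single sentence ``All these properties follow from the preceding results.'' Your sketch invokes exactly the preceding results (Theorem~\ref{thm:ThermoLimit}, Proposition~\ref{prop:intconv}, Theorem~\ref{thm:lrbpe}) in the intended way, including the correct observation that the displayed identity in (iii) should read $\alpha^\Gamma_s\circ\beta=\beta\circ\alpha^\Gamma_s$; one small remark is that for (iv) the choice of translation-invariant $\rho_x$ is convenient but not strictly required, since the paper notes that the flow $\alpha^\Lambda_s$ itself is independent of the $\rho_x$, and the covariance $\pi_{T_x}\circ\alpha^{\Lambda_n}_s=\alpha^{T_x(\Lambda_n)}_s\circ\pi_{T_x}$ already yields the conclusion in the limit.
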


\begin{proof}
All these properties follow from the preceding results.
\end{proof}

\subsection{Automorphic equivalence of gapped ground states}
\label{sec:automorphic_equivalence}

We can now describe more precisely the problem of equivalence of quantum phases 
discussed in the introduction. Let $\mathcal{S}_\Lambda(s)$ denote the set of states of the system in volume $\Lambda$ that are mixtures of eigenstates with energy in $I(s)$ and let $\mathcal{S}(s)$ be the set of weak-$*$ limit points as $n\to\infty$ of $\mathcal{S}_{\Lambda_n}(s)$. Note that these sets are non-empty. The result of Section \ref{sec:Evolution} immediately implies
\begin{equation}
\mathcal{S}_{\Lambda_n}(s)=\mathcal{S}_{\Lambda_n}(0)\circ\alpha^{\Lambda_n}_s\,,
\label{finite_gs_relation}
\end{equation}
where $\alpha^{\Lambda_n}_s$ is the automorphism defined in (\ref{automorphism}).
In Section \ref{Sec:LRforD} we proved that $\alpha^{\Lambda_n}_s$ satisfy a 
Lieb-Robinson bound with a uniformly bounded Lieb-Robinson velocity and
decay rate outside the `light cone'. In the previous subsection we obtained the
thermodynamic limit of these automorphisms leading to the cocycle $\alpha_s^\Gamma$ which
automatically satisfies a Lieb-Robinson bound with the same estimates for the velocity and the decay. The following theorem states that~(\ref{finite_gs_relation}) carries 
over to the thermodynamic limit.

\begin{thm} \label{thm:main}
The states $\omega(s)\in\mathcal{S}(s)$ in the thermodynamic limit are automorphically equivalent to the states $\omega(0)\in\mathcal{S}(0)$ for all $s\in[0,1]$. Indeed,
\begin{equation}
\mathcal{S}(s)=\mathcal{S}(0)\circ\alpha_s^{\Gamma}
\label{infinite_gs_relation}
\end{equation}
Moreover, the connecting automorphisms $\alpha_s^{\Gamma}$ can be generated by a 
$s$-dependent quasi-local interaction $\Psi(s)$ with $\| \Psi \|_{F_{\Psi}} <\infty$, where
the norm is defined in \eq{triple_norm_psi}.
$\alpha_s^{\Gamma}$ then satisfies the same Lieb-Robinson bound as $\alpha_s^\Lambda$ in 
Theorem~\ref{thm:lrbpe}.
\end{thm}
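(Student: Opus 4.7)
The second half of the theorem, i.e.\ the existence and Lieb--Robinson bound for the generating interaction $\Psi(s)$, is essentially already in hand: one takes $\Psi=\Psi_\Gamma$ as in \eqref{eq:PsiG}, and then Theorem~\ref{thm:intdec} together with Proposition~\ref{prop:intconv} gives the norm bound $\|\Psi_\Gamma\|_{F_\Psi}<\infty$ (the bound is uniform in $\Lambda_n$, hence survives the pointwise limit), while Proposition~\ref{prop:properties}(i)--(ii) provides the identification of $\alpha_s^\Gamma$ with the flow generated by $\Psi_\Gamma(s)$ and the corresponding Lieb--Robinson bound. The new content is therefore the identity $\mathcal{S}(s)=\mathcal{S}(0)\circ\alpha_s^\Gamma$, which I would prove by transporting the finite-volume identity \eqref{finite_gs_relation} to the thermodynamic limit.

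The plan is as follows. Both inclusions are proved by the same compactness-plus-convergence argument; I write out the $\subseteq$ direction. Let $\omega(s)\in\mathcal{S}(s)$. By definition there exist $\omega_{\Lambda_n}(s)\in\mathcal{S}_{\Lambda_n}(s)$ with $\omega_{\Lambda_n}(s)\to\omega(s)$ in the weak-$*$ topology along a subsequence (which I relabel). By \eqref{finite_gs_relation}, each $\omega_{\Lambda_n}(s)$ can be written as $\tilde\omega_{\Lambda_n}(0)\circ\alpha_s^{\Lambda_n}$ for some $\tilde\omega_{\Lambda_n}(0)\in\mathcal{S}_{\Lambda_n}(0)$. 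Since the state space of $\mathcal{A}_\Gamma$ is weak-$*$ compact, I can pass to a further subsequence so that $\tilde\omega_{\Lambda_n}(0)$ converges in the weak-$*$ topology to some state $\omega(0)$, which lies in $\mathcal{S}(0)$ by the very definition of the latter.

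The key step is then to show that $\tilde\omega_{\Lambda_n}(0)\circ\alpha_s^{\Lambda_n}\to\omega(0)\circ\alpha_s^\Gamma$ weak-$*$. For a fixed local observable $A\in\mathcal{A}_{\rm loc}$ I would use the triangle estimate
\begin{equation*}
\bigl|\tilde\omega_{\Lambda_n}(0)(\alpha_s^{\Lambda_n}(A))-\omega(0)(\alpha_s^\Gamma(A))\bigr|
\leq \bigl\|\alpha_s^{\Lambda_n}(A)-\alpha_s^\Gamma(A)\bigr\|
+ \bigl|(\tilde\omega_{\Lambda_n}(0)-\omega(0))(\alpha_s^\Gamma(A))\bigr|.
\end{equation*}
The first term goes to zero by Theorem~\ref{thm:ThermoLimit} (uniformly in $s\in[0,1]$), and the second term goes to zero by the weak-$*$ convergence of $\tilde\omega_{\Lambda_n}(0)$ to $\omega(0)$ applied to the fixed quasi-local element $\alpha_s^\Gamma(A)\in\mathcal{A}_\Gamma$. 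Since $\|\tilde\omega_{\Lambda_n}(0)\|=1$, an $\varepsilon/3$ approximation of a general $B\in\mathcal{A}_\Gamma$ by local observables extends the convergence from $\mathcal{A}_{\rm loc}$ to all of $\mathcal{A}_\Gamma$. Hence $\omega(s)=\omega(0)\circ\alpha_s^\Gamma$, proving $\mathcal{S}(s)\subseteq\mathcal{S}(0)\circ\alpha_s^\Gamma$. The opposite inclusion follows by starting instead from $\omega(0)\in\mathcal{S}(0)$, choosing an approximating net $\tilde\omega_{\Lambda_n}(0)\in\mathcal{S}_{\Lambda_n}(0)$, forming $\tilde\omega_{\Lambda_n}(0)\circ\alpha_s^{\Lambda_n}\in\mathcal{S}_{\Lambda_n}(s)$, and running the same argument to identify any weak-$*$ limit point with $\omega(0)\circ\alpha_s^\Gamma$; existence of such a limit point in $\mathcal{S}(s)$ is automatic from weak-$*$ compactness and the definition of $\mathcal{S}(s)$.

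The main obstacle---and the reason the argument goes through---is the interchange of the thermodynamic limit with the spectral flow in the expression $\tilde\omega_{\Lambda_n}(0)\circ\alpha_s^{\Lambda_n}$, where one side converges only weak-$*$ and the other only strongly. This is precisely where Theorem~\ref{thm:ThermoLimit} is essential: without the \emph{norm} convergence of $\alpha_s^{\Lambda_n}(A)\to\alpha_s^\Gamma(A)$ on local $A$ (uniform in $s$), one could not pair a merely weak-$*$ convergent sequence of states with the flowed observables. Everything else in the proof is bookkeeping: the cocycle property, continuity in $s$, and the Lieb--Robinson estimate for $\alpha_s^\Gamma$ are already established in Proposition~\ref{prop:properties}, so no new estimate is required to conclude the theorem.
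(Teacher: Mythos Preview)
Your proposal is correct and follows essentially the same route as the paper: the identity \eqref{infinite_gs_relation} is obtained by combining the finite-volume relation \eqref{finite_gs_relation} with the norm convergence $\alpha_s^{\Lambda_n}(A)\to\alpha_s^\Gamma(A)$ from Theorem~\ref{thm:ThermoLimit}, via the same triangle-inequality estimate you wrote down. The paper packages that estimate as a separate lemma stating the \emph{equivalence} of weak-$*$ convergence of $\omega_n$, $\omega_n\circ\sigma$, and $\omega_n\circ\sigma_n$ when $\sigma_n\to\sigma$ strongly; using the implication (iii)$\Rightarrow$(i) one sees directly that $\tilde\omega_{\Lambda_n}(0)\to\omega(s)\circ(\alpha_s^\Gamma)^{-1}$, which makes your compactness step unnecessary---but this is a cosmetic difference, not a substantive one.
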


\begin{proof}
This is a direct consequence of~(\ref{finite_gs_relation}), theorem~\ref{thm:ThermoLimit} and the lemma below.
\end{proof}

\begin{lemma}
Let $(\sigma_n)_n$ be a strongly convergent sequence of automorphisms of a $C^*$-algebra
$\mathcal{A}$, converging to $\sigma$ and let $(\omega_n)_n$ be a sequence of states
on $\mathcal{A}$. Then the following are equivalent:
\begin{enumerate}
\item $\omega_n$ converges to $\omega$ in the weak-$*$ topology;
\item $\omega_n\circ\sigma$ converges to $\omega\circ\sigma$ in the weak-$*$ topology;
\item $\omega_n\circ\sigma_n$ converges to $\omega\circ\sigma$ in the weak-$*$ topology.
\end{enumerate}
\end{lemma}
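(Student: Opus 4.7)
The plan is to establish the chain of equivalences (i) $\Leftrightarrow$ (ii) $\Leftrightarrow$ (iii) by elementary means, relying only on two facts: that each $\sigma_n$ and $\sigma$ is a bijection of $\mathcal{A}$, and that every state has norm $1$, so $\vert \omega_n(B)\vert\leq \Vert B\Vert$ for every $B\in\mathcal{A}$.

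For (i) $\Leftrightarrow$ (ii), I would observe that $\sigma$ being an automorphism is, in particular, a bijection of $\mathcal{A}$ onto itself. Hence testing weak-$*$ convergence of $\omega_n$ against all $A\in\mathcal{A}$ is the same as testing it against all $B=\sigma(A)$; relabelling $A\leftrightarrow \sigma^{-1}(A)$ shows that $\omega_n(A)\to\omega(A)$ for all $A$ if and only if $\omega_n(\sigma(A))\to\omega(\sigma(A))$ for all $A$. This is immediate and uses no hypothesis on $\sigma_n$.

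For (ii) $\Leftrightarrow$ (iii), I would use the decomposition
\begin{equation*}
\omega_n(\sigma_n(A))-\omega(\sigma(A))=\omega_n\bigl(\sigma_n(A)-\sigma(A)\bigr)+\bigl[\omega_n(\sigma(A))-\omega(\sigma(A))\bigr],
\end{equation*}
valid for every $A\in\mathcal{A}$. Because $\omega_n$ is a state, $\vert\omega_n(\sigma_n(A)-\sigma(A))\vert\leq \Vert\sigma_n(A)-\sigma(A)\Vert$, which tends to $0$ by the assumed strong convergence $\sigma_n\to\sigma$. Therefore the first summand vanishes in the limit, and (ii) and (iii) become equivalent.

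I do not expect any real obstacle: the proof is a two-line bookkeeping argument, and the only small point worth noting is that strong convergence of $\sigma_n$ together with the uniform bound $\Vert\omega_n\Vert=1$ provides exactly the control needed to move the $\sigma_n \to \sigma$ replacement through $\omega_n$. No $\epsilon/3$ splitting or equicontinuity beyond this is required.
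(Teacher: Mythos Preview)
Your proposal is correct and essentially identical to the paper's own proof: the paper also dispatches (i)$\Leftrightarrow$(ii) by the bijectivity of $\sigma$, and for (ii)$\Leftrightarrow$(iii) uses exactly your decomposition $\omega_n(\sigma_n(A))-\omega(\sigma(A))=\omega_n(\sigma_n(A)-\sigma(A))+[\omega_n(\sigma(A))-\omega(\sigma(A))]$, bounding the first term by $\Vert\omega_n\Vert\,\Vert\sigma_n(A)-\sigma(A)\Vert\to 0$. The only cosmetic difference is that the paper phrases (ii)$\Rightarrow$(iii) and then says ``a similar argument yields (iii)$\Rightarrow$(ii)'', whereas you note directly that the vanishing of the first summand makes the two statements equivalent.
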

\begin{proof}
(i)$\Leftrightarrow$(ii) follows immediately from the fact that $\sigma$ and $\sigma^{-1}$
are automorphisms. Now if (ii) holds, the second term of
\begin{equation*}
\vert(\omega_n\circ\sigma_n)(A)-(\omega\circ\sigma)(A)\vert \leq \vert\omega_n(\sigma_n(A)-\sigma(A)) \vert + \vert\omega_n(\sigma(A))-\omega(\sigma(A)) \vert\,,
\end{equation*}
vanishes. So does the first one
\begin{equation*}
\vert\omega_n(\sigma_n(A)-\sigma(A)) \vert \leq \Vert\omega_n\Vert \Vert\sigma_n(A)-\sigma(A) \Vert \longrightarrow 0
\end{equation*}
since $\omega_n$ are states, and therefore (iii) holds. A similar argument yields (iii)$\Rightarrow $(ii).
\end{proof}

In the recent literature \cite{chen:2010a,chen:2010b}, a `ground state phase' has been 
defined as an equivalence class of ground states with the equivalence defined as
follows: the states $\omega_0$ and $\omega_1$ are equivalent (\ie,
belong to the same phase) if there exists a continuous family of Hamiltonians 
$H(s)$, $0\leq s\leq 1$, such that for each $s$, $H(s)$ has a gap above the ground state
and $\omega_0$ and $\omega_1$ are ground states of $H(0)$ and $H(1)$, respectively. 
As an alternative definition the authors of \cite{chen:2010a} state that $\omega_0$ 
and $\omega_1$ should be related by a `local unitary transformation'.
With Theorem \ref{thm:main} we provide precise conditions under which
the first property implies the second. At the same time we have 
clarified the role of the  thermodynamic limit left implicit in the cited works.

Based on Theorem \ref{thm:main} it seems reasonable to define the
ground states of two interactions $\Phi(0)$ and $\Phi(1)$ to be in the same
phase if there exists a differentiable interpolating family of interactions $\Phi(s)$,
$0\leq s\leq 1$, such that there exists $a>0$ for which $\Vert \Phi\Vert_a + 
\Vert\partial \Phi\Vert _a <\infty$, and if the spectral gap above the
ground states of the corresponding finite-volume Hamiltonians 
$H_{\Lambda_m}(s)$ have a uniform lower bound $\gamma>0$.
The increasing sequence of finite volumes $\Lambda_m$ should satisfy
a condition of the type \eq{conditionsLambda_m}.
One should allow for a space of nearly degenerate eigenstates of $H_{\Lambda_m}$
which, in the thermodynamic limit, converge to a set of ground states
$\mathcal{S}(s)$.
We have proved that under these conditions the sets of thermodynamic
limits of ground states are connected by a flow of automorphisms generated
by a quasi-local interaction with almost exponential decay and satisfying
a Lieb-Robinson bound. We believe that these are {\em sufficient conditions}
for belonging to the same gapped ground state phase. More work is needed
to identify {\em necessary conditions}. 

We remark that a `ground state phase' should be defined as an equivalence relation 
on simplices of states of a quantum lattice system.
This is an equivalence of sets of states rather than of models because it is possible that 
different quantum phases coexist as ground states of one model, while the same states also
appear as unique ground states of other models. Examples of this situation can 
easily be constructed using frustration free models in one dimension with finitely correlated
ground states, also known as matrix product states \cite{fannes:1992,nachtergaele:1996}.
In particular, if $\mathcal{S}(s)$ denotes the set of infinite-volume ground states of a model
with parameter $s$, the relation $\mathcal{S}(s)=\mathcal{S}(0)\circ\alpha_s$,
does not imply that the states in the sets $\mathcal{S}(s)$ are automorphically
equivalent among themselves. E.g., if for a model with a discrete symmetry we find that
symmetry broken states coexists with symmetric states, $\alpha_s$ cannot map these two 
classes into each other. In general, as emphasized in Proposition~\ref{prop:properties},
the $\alpha_s$ we constructed posses all symmetries of the Hamiltonians.

There are plenty of examples of models to which our results apply.
Clearly, the various perturbation results mentioned in the introduction provide
many interesting examples of sets of models with ground states in 
a variety of types of gapped phases. Another class of examples is provided
by the rich class of gapped quantum spin chains with matrix product ground states.
In Yarotsky's work \cite{yarotsky:2006} it is shown how perturbation theory around
a matrix product ground states can be applied  to connect these two classes of examples. 
Exactly solvable models with gapped ground states depending on a parameter,
such as the anisotropic $XY$ chain \cite{lieb1961}, is another set of examples.
More recently, stability under small perturbations of the interaction was proved
for a class of models with topologically ordered ground states \cite{bravyi:2010a};
these include e.g. Kitaev's toric code model \cite{kitaev:2003}. Our results are also applicable
to this class of models.
It seems likely that other applications will be found. As an example of an
application left to be explored, we mention that
the existence of a connecting automorphism of the type $\alpha_s$
can provide a means to distinguish true quantum phase transitions from isolated
critical (i.e., gapless) points around which it is possible to circumnavigate with
suitably chosen perturbations.

\subsection*{Acknowledgment}

This work was supported by the National Science Foundation
and the Department of Energy: 
S.B. under Grant DMS-0757581, B.N. under grant
DMS-1009502, and R.S. under Grant DMS-0757424.
S.M. received support from  NSF DMS-0757581 and 
PHY-0803371, and DOE Contract DE-AC52-06NA25396. 
BN greatfully acknowledges the kind hospitality of the Institute 
Mittag-Leffler (Djursholm, Sweden) during Fall 2010 where part of the
work reported here was carried out and of the Department of Mathematics 
at the University of Arizona where it was completed.

\end{document}